\documentclass[11pt]{article}
\usepackage{xurl} 
\usepackage[utf8]{inputenc}
\usepackage{comment}
\usepackage{xcolor}
\usepackage{newtxtext}  
\usepackage[english]{babel}
\usepackage[T1]{fontenc} 
\usepackage[letterpaper,top=2cm,bottom=2cm,left=3cm,right=3cm,marginparwidth=1.75cm]{geometry}

\usepackage{amsmath}
\usepackage{amssymb}
\usepackage{amsthm}
\usepackage{graphicx}
\usepackage{physics}
\usepackage[colorlinks=true, allcolors=blue]{hyperref}
\usepackage{booktabs}
\usepackage{multirow}
\usepackage{tabularx}
\usepackage{natbib}
\usepackage{thmtools}
\usepackage{thm-restate}
\usepackage{authblk}
\usepackage{algorithm}
\usepackage{algpseudocode}
\algnewcommand{\Input}[1]{\item[\textbf{Input:}] #1}
\algnewcommand{\Output}[1]{\item[\textbf{Output:}] #1}

\algrenewcommand{\algorithmicrequire}{\textbf{Input:}}
\algrenewcommand{\algorithmicensure}{\textbf{Output:}}

\DeclareMathOperator{\dist}{dist}
\DeclareMathOperator{\corr}{corr}
\DeclareMathOperator{\E}{\mathbf{E}}

\usepackage{zongbo}
\bibliographystyle{alpha}

\newcommand{\dy}{\mathcal{D}_{\text{yes}}}
\newcommand{\dn}{\mathcal{D}_{\text{no}}}

\newcommand{\fy}{f_{\text{yes}}}
\newcommand{\fn}{f_{\text{no}}}

\newcommand{\Z}{\mathbb{Z}}

\def\inf{\mathbf{Inf}}
\def\supp{\textrm{supp}}
\def\Zfn{\ensuremath{\ZZ_4^n}}

\def\wU{\ensuremath{\widehat{U}}}
\def\diag{\textrm{diag}}


\title{\textbf{Efficient Non-Adaptive Quantum Algorithms for Tolerant Junta Testing}}

\author[1,\thanks{zongbo.bao@cwi.nl}]{Zongbo Bao}
\author[2,3,\thanks{liuyuxuan23s@ict.ac.cn}]{Yuxuan Liu}
\author[4,5,\thanks{phyao1985@gmail.com}]{Penghui Yao}
\author[6,\thanks{yezekun@fzu.edu.cn}]{Zekun Ye}
\author[2,3,\thanks{zhangjialin@ict.ac.cn}]{Jialin Zhang}

\affil[1]{CWI and Qusoft}
\affil[2]{State Key Lab of Processors, Institute of Computing 
  Technology, Chinese Academy of Sciences, Beijing 100190, China}
\affil[3]{School of Computer Science and Technology, University of Chinese 
  Academy of Sciences, Beijing 100049, China}  
\affil[4]{State Key Laboratory for Novel Software Technology, New Cornerstone Science Laboratory, Nanjing University, China}
\affil[5]{Hefei National Laboratory, Hefei 230088, China}
\affil[6]{College of Computer and Data Science, Fuzhou University, Fuzhou 350116, China}

\begin{document}
\maketitle

\begin{abstract}
    We consider the problem of deciding whether an $n$-qubit unitary (or $n$-bit Boolean function) is $\varepsilon_1$-close to some $k$-junta or $\varepsilon_2$-far from every $k$-junta, where $k$-junta unitaries act non-trivially on at most $k$ qubits and as the identity on the rest, and $k$-junta Boolean functions depend on at most $k$ variables. For constant numbers $\varepsilon_1,\varepsilon_2$ such that $0 < \varepsilon_1 < \varepsilon_2 < 1$, we show the following.

    \begin{enumerate}
        \item A non-adaptive $O(k\log k)$-query tolerant $(\varepsilon_1,\varepsilon_2)$-tester for $k$-junta unitaries when  $2\sqrt{2}\varepsilon_1 < \varepsilon_2$. 

        \item A non-adaptive tolerant $(\varepsilon_1,\varepsilon_2)$-tester for $k$-junta Boolean functions with $O(k \log k)$ quantum queries when  $4\varepsilon_1 < \varepsilon_2$.
        
        \item A $2^{\widetilde{O}(k)}$-query tolerant $(\varepsilon_1,\varepsilon_2)$-tester for $k$-junta unitaries for any $\varepsilon_1,\varepsilon_2$.
    \end{enumerate}
    The first algorithm provides an exponential improvement over the best-known quantum algorithms \cite{chen2024tolerant,arunachalam2025testing}. The second algorithm shows an exponential quantum advantage over any non-adaptive classical algorithm \cite{chen2024mildly}. The third tester gives the first tolerant junta unitary testing result for an arbitrary gap. 

    Besides, we adapt the first two quantum algorithms to be implemented using only single-qubit operations, thereby enhancing experimental feasibility, with a slightly more stringent requirement for the parameter gap.
\end{abstract}

\section{Introduction}

\emph{Property testing} \cite{goldreich1998property} constitutes a fundamental research field in theoretical computer science, aiming to determine whether an object satisfies a certain global property or is far from it using only a sublinear number of local queries or samples. Its advantage emerges when handling massive inputs: instead of reading and processing the entire dataset to obtain an exact description, it employs randomized queries whose number is substantially smaller than the input size to extract key information and achieve a probabilistically sound conclusion of the target property. Since the 1990s, property testing has been extensively investigated for various objects, including functions \cite{blum1990self, rubinfeld1996robust}, graphs \cite{goldreich1997property, alon2000efficient}, probability distributions \cite{batu2013testing}, etc.

The advent of quantum computation has introduced a powerful new paradigm for property testing \cite{buhrman2008quantum, montanaro2016survey}. By leveraging quantum superposition and entanglement, the quantum query model enables access to multiple inputs simultaneously, achieving significant reductions in query complexity. Foundational results like Grover's algorithm \cite{grover1996fast} and Shor's algorithm \cite{shor1994algorithms} demonstrate the advantage of the quantum query model over its classical counterpart. Consequently, applying this quantum advantage to the task of property testing has given rise to \emph{quantum property testing}, an independent research direction that can not only test classical properties more efficiently \cite{friedl2003quantum} but also verify quantum objects \cite{o2015quantum} such as quantum states and operations. Notably, quantum advantage exhibits substantial variation across different properties: some problems have polynomial speedups while others achieve exponential improvements. This disparity has prompted an in-depth exploration into the nature of quantum query complexity.

The property we are interested in here is being a \emph{k-junta}, first defined for Boolean functions whose output depends on only $k$ out of its $n$ input variables. Efficient junta testing algorithms have broad implications in computational complexity theory and related areas such as learning theory and data analysis, critical for identifying relevant features in high-dimensional datasets. Later, Montanaro and Osborne \cite{montanaro2008quantum} extended the definition to quantum systems, noting that an $n$-qubit unitary $U$ is a $k$-junta if it acts non-trivially on only $k$ of the $n$ qubits (see \Cref{def:junta} for a formal definition). Similarly, testing $k$-junta unitary reveals the underlying structure and behavior of quantum systems by identifying the crucial subset of qubits, thus enabling optimized deployment of quantum gates and targeted error correction. Such focused operations significantly improve efficiency and robustness in the NISQ era. 

The ``standard'' testing framework discussed previously requires algorithms to strictly distinguish whether the target object (a Boolean function or a unitary operator) is exactly a $k$-junta or far from any $k$-junta. However, if noise or errors create imperfect $k$-juntas, we still need the robust identification of functions that approximately depend on a small set of critical variables. To address this challenge, Parnas, Ron, and Rubinfeld \cite{parnas2006tolerant} pioneered the notion of \emph{tolerant property testing}. For $k$-junta testing, the algorithm must now distinguish whether the object is sufficiently close to some $k$-junta or far from all $k$-juntas. This notion of tolerant testing is arguably even more critical in the quantum setting than in its classical counterpart. While a classical object can, in principle, perfectly represent a $k$-junta, the unavoidable noise inherent in any realistic quantum device in the current NISQ era means that a tested quantum object is fundamentally imperfect. Consequently, for practical applications in the era of fault-tolerant quantum computing, the tolerant model is not merely a theoretical alternative but a necessity. The tolerant testing model carries greater practical significance than its standard counterpart but also presents substantial challenges. Existing research demonstrates a remarkable increase in difficulty \cite{chen2023new, chen2024mildly}, with current approaches \cite{chen2024tolerant,chen2025mysterious} yet to break through the theoretical barrier of exponential query complexity.

While classical tolerant testing has been studied extensively, its quantum counterparts are less explored, despite the potential for quantum speedups in efficiency.
In this work, we aim to construct efficient quantum algorithms for the problem of tolerant junta testing for both Boolean functions and unitaries. We formally address the following problem:

\begin{problem}[Tolerant Junta Testing]\label{prob} 
Given oracle access to an $n$-qubit unitary $U$ (or a Boolean function $f:\{\pm 1\}^n \rightarrow \{\pm1\}$), for integer $ 1 \le k < n$ and $0<\vep_1<\vep_2< 1$, decide with high probability whether $U$(or $f$) is $\varepsilon_1$-close to some $k$-junta, or is $\varepsilon_2$-far from every $k$-junta.
\end{problem}

Here we employ the normalized Frobenius distance as defined in \Cref{def_dist_unitary}.

An algorithm for such a problem is called a \emph{tolerant $(\vep_1, \vep_2)$-tester} for $k$-juntas. To avoid misunderstanding, note that by tolerant \emph{quantum} junta testing, we mean using the quantum algorithm, and junta includes Boolean functions and unitaries. 

We resolve this problem by constructing three different quantum tolerant testers. Our main algorithms achieve a query complexity of $O(k \log k)$ for both Boolean functions and unitaries, provided there is a constant factor gap between $\varepsilon_1$ and $\varepsilon_2$. Then we replace the \algname{Fourier-Sample} with \algname{Influence-Sample} to implement the same algorithm using only single-qubit operations, thus making it easier to realize experimentally, with a slightly more demanding requirement for the parameter gap.
The last result is an algorithm for unitaries with $2^{\widetilde{O}\p{k}}$ queries that handles arbitrary distance parameters $\varepsilon_1 < \varepsilon_2$ with a different technical approach. Here $\widetilde{O}(\cdot)$ hide polylogarithmic factors.

\subsection{Related Work}

The problem of junta testing has a rich history across classical and quantum computation, in both the standard and tolerant settings. We summarize the key milestones in \Cref{tab:s,tab:t}. In \Cref{tab:t}, we assume that $\vep_1,\vep_2$ are constant.
        
\paragraph{Classical Standard Testing of $k$-Junta Boolean Functions}

The study of junta testing began with the $k=1$ case (also called dictator functions) in 2002, for which Parnas, Ron, and Samorodnitsky \cite{parnas2002testing} gave an $O(1)$-query tester. In 2004, Fischer, Kindler, Ron, Safra, and Samorodnitsky \cite{fischer2004testing} proposed the first general $k$-junta tester with query complexity $\widetilde{O}(k^2)$, which was independent of the ambient dimension $n$. This is a non-adaptive algorithm, meaning its query strategy is fixed in advance and independent of previous query results. In contrast, adaptive algorithms dynamically adjust their subsequent queries based on prior query results. 

In 2008, Blais \cite{blais2008improved} improved the non-adaptive query complexity to $\widetilde{O}(k^{3/2})$, and then developed an adaptive algorithm with $O(k \log k)$ queries \cite{blais2009testing}. In 2018, separate proofs for non-adaptive \cite{chen2018settling} and adaptive \cite{sauglam2018near} lower bounds demonstrated that these results are indeed optimal.

\paragraph{Quantum Standard Testing of $k$-Junta Boolean Functions}

In 2007, Atıcı and Servedio \cite{atici2007quantum} designed the first quantum algorithm using $O(k)$ quantum examples, whose core technique was \emph{Fourier sampling} \cite{bshouty1995learning}. 
This technique became a cornerstone for later quantum junta testers. In 2016, Ambainis, Belovs, Regev, and de Wolf \cite{ambainis2016efficient} proposed a quantum algorithm for testing juntas with only  $\widetilde{O}(\sqrt{k})$ queries.
A matching $\widetilde{\Omega}(\sqrt{k})$ lower bound was subsequently proved by Bun, Kothari, and  Thaler \cite{bun2018polynomial}, establishing a quadratic speedup over the optimal classical algorithm.

\paragraph{Classical Tolerant Testing of $k$-Junta Boolean Functions}

\textbf{For non-adaptive algorithms}, in 2016 Blais, Canonne, Eden, Levi, and Ron~\cite{blais2018tolerant} proposed a $(\frac{\rho}{16}\vep, \vep)$-tester using $O\big(\frac{k\log k}{\vep \rho(1-\rho)^k}\big)$ queries, and requires $\vep_2 \geq 16\vep_1$, demonstrating a trade-off between tolerant and query complexity. In 2019, De, Mossel, and Neeman \cite{de2019junta} designed another algorithm with the query complexity of $2^{k}\poly(k)$, which not only worked for arbitrary parameters but also achieved a stronger goal of identifying the closest $k$-junta to the target function. In 2024, Nadimpalli and Patel \cite{nadimpalli2024optimal} reformulated the distance to a $k$-junta as an expectation and estimated it by sampling a local Hamming ball, reducing the query complexity to $2^{\widetilde{O}(\sqrt{k})}$.

Regarding the lower bound, Levi and Waingarten \cite{levi2019lower} proved that any non-adaptive tolerant testing of $k$-juntas requires at least $\widetilde{\Omega}(k^2)$ queries. Afterward, Chen, De, Li, Nadimpalli, and  Servedio \cite{chen2024mildly} employed an elegant construction of two distributions on Boolean functions and applied Yao's minimax principle to establish an improved lower bound of $2^{\Omega(\sqrt{k})}$, matching the upper bound above. 

\textbf{For adaptive algorithms}, Iyer, Tal, and Whitmeyer \cite{iyer2021junta} proposed an algorithm with query complexity $2^{\widetilde{O}(\sqrt{k})}$. Recently, Chen, Patel, and Servedio \cite{chen2025mysterious} made a breakthrough by connecting tolerant junta testing with the agnostic learning problem in computational learning theory, obtaining an adaptive $k$-junta tolerant tester with query complexity $2^{\widetilde{O}(k^{1/3})}$. Since the lower bound for non-adaptive $k$-junta testing is $2^{\Omega(\sqrt{k})}$, their result shows that adaptive algorithms outperform non-adaptive ones in this problem.

Additionally, concerning adaptive lower bounds, Chen and Patel \cite{chen2023new} proved a $k^{-\Omega(\log(\vep_2-\vep_1))}$ lower bound. Bridging the vast gap and pinpointing the true adaptive complexity of tolerant junta testing remains a central open problem in the field.

\paragraph{Quantum Standard Testing of $k$-Junta Unitaries}

In 2008, Montanaro and Osborne \cite{montanaro2008quantum} initiated the study of Fourier analysis over the space of matrices, which is later referred to as Pauli analysis~\cite{10.1145/3618260.3649662}. They also gave a tester for 1-junta.
Wang \cite{wang2011property} in 2011 extended the techniques for Boolean functions in \cite{atici2007quantum} to test $k$-junta unitaries with $O(k)$ queries. Chen, Nadimpalli, and Yuen \cite{chen2023testing} advanced the methodology of \cite{ambainis2016efficient} and established tight upper and lower bounds, demonstrating that the quantum query complexity for this problem is also $\widetilde{\Theta}(\sqrt k)$.

A more general quantum subject is quantum channels, which represent general linear transformations in quantum systems that are not necessarily reversible. Bao and Yao \cite{bao2023testing} investigated the $k$-junta testing problem for quantum channels and constructed an algorithm using $O(k)$ queries, while proving a lower bound of $\Omega(\sqrt k)$.

\paragraph{Quantum Tolerant Testing of $k$-Junta Unitaries}

The problem of tolerant testing for unitaries is largely unexplored. Recently, Chen, Li, and Luo \cite{chen2024tolerant} proposed a non-adaptive $(\frac{\sqrt \rho}{8}\vep, \vep)$-tester with $O\big(\frac{k\log k}{\vep^2 \rho(1-\rho)^k}\big)$ queries, which extended upon the classical algorithm for Boolean functions \cite{blais2018tolerant}. From a simplicity standpoint, while this "quantization" of a classical method is a natural starting point, it carries over the inherent complexities of the original approach. Moreover, their tester requires a large gap between the distance parameters, i.e., $\vep_2 \geq 8\vep_1$, and when this factor is a constant, their algorithm needs exponential queries. 

Meanwhile, Arunachalam, Dutt, and Gutiérrez \cite{arunachalam2025testing} introduced a tolerant $(\vep_1, \vep_2)$-tester with $O(16^k/(\vep_2-\vep_1)^4)$ queries with an $n$-qubit quantum memory, and $O(16^k \cdot n/(\vep_2-\vep_1)^4)$ when implemented with a novel memory-less subroutine if $\vep_2 \ge 2 \vep_1$. Their exponential query complexity highlights the need for a new approach, one that can circumvent this exponential barrier and offer greater generality, motivating the results we present in this paper.

\begin{table}[ht]
    \setcellgapes{5pt}
    \makegapedcells
    \renewcommand\arraystretch{1.5}
    \centering
    \caption{Standard testing of $k$-juntas}
    \label{tab:s}
    \scalebox{0.86}{
    \begin{tabular}[t]{lccc}
    \hline
    & Classical & Classical & Quantum \\
    & non-adaptive & adaptive &  \\
    \hline
    Boolean function& \makecell{ $\widetilde{O}\big(k^{3/2})$ \vspace{2pt}\\~(Blais, 2008) \cite{blais2008improved}} &
    \makecell{ $O\big(k\log k\big)$ \vspace{2pt}\\~(Blais, 2009) \cite{blais2009testing} } & 
    \makecell{$\widetilde{O}\big(\sqrt{k}\big)$ \vspace{1pt}\\~(Ambainis et al. 
    2016) \cite{ambainis2016efficient} }  \\
    &
    \makecell{$\widetilde{\Omega}\big(k^{3/2})$ \vspace{2pt}\\~(Chen et al. 
    2018) \cite{chen2018settling}} &
    \makecell{ $\Omega\big(k\log k\big)$ \vspace{2pt}\\~(Sa{\u g}lam, 2018) \cite{sauglam2018near} } & 
    \makecell{$\widetilde{\Omega}\big(\sqrt{k}\big)$ \vspace{1pt}\\~(Bun et al. 2018) \cite{bun2018polynomial} }  \\
     unitary & --- & 
    --- & 
    \makecell{$\widetilde{O}\big(\sqrt{k}\big)$ \vspace{1pt}\\~(Chen et al. 2023) \cite{chen2023testing} }  \\
    & --- & 
    --- & 
    \makecell{$\Omega\big(\sqrt{k}\big)$ \vspace{1pt}\\~(Chen et al. 2023) \cite{chen2023testing} } \\    
    \hline
    \end{tabular}}
\end{table}

\begin{table}[ht]
    \setcellgapes{5pt}
    \makegapedcells
    \renewcommand\arraystretch{1.5}
    \centering
    \caption{Tolerant testing of $k$-juntas}
    \label{tab:t}
    \scalebox{0.86}{
    \begin{tabular}[t]{lccc}
    \hline
    & Classical & Classical & Quantum \\
    & non-adaptive & adaptive &  \\
    \hline
    Boolean function& \makecell{ $2^{\widetilde{O}(\sqrt k)}$ \vspace{2pt}\\~(Nadimpalli et al. 2024) \cite{nadimpalli2024optimal}} &
    \makecell{ $2^{\widetilde{O}(k^{1/3})}$ \vspace{2pt}\\~(Chen et al. 2025) \cite{chen2025mysterious} } & 
    \makecell{$O(k \log k), \vep_2 > 4\vep_1$ \vspace{2pt}\\ \textbf{~\Cref{thm:tester-boolean} in this work } } \\
    &
    \makecell{$2^{\Omega(\sqrt k)}$ \vspace{2pt}\\~(Chen et al. 2024) \cite{chen2024mildly}} &
    \makecell{ $k^{\Omega(1)}$ \vspace{2pt}\\~(Chen et al. 2023) \cite{chen2023new} } & 
    \makecell{unexplored }  \\
     unitary & --- & 
    --- & 
    \makecell{$O(k \log k), \vep_2 > 2\sqrt 2\vep_1$ \vspace{2pt}\\\textbf{~\Cref{thm:tester} in this work} \\ $2^{\widetilde{O}(k)}$ \vspace{1pt}\\\textbf{~\Cref{thm:general} in this work} }   \\
    & --- & 
    --- & 
    \makecell{unexplored} \\    
    \hline
    \end{tabular}}
\end{table}

\subsection{Our Results}

It is noteworthy that while tolerant testing of $k$-junta Boolean functions has been extensively studied in the classical scenario, no quantum algorithm has been proposed to date. A possible explanation is that the query complexity of classical algorithms is exponential, and researchers do not expect quantum algorithms to achieve an exponential speedup for this problem. If only polynomial speedup is possible, the quantum query complexity would remain exponential, offering no fundamental improvement in solving efficiency.

However, our recent investigation represents a significant breakthrough, suggesting that there exists a concise tester with only $O(k\log k)$ quantum queries. This approach can apply to both Boolean functions and unitary operators, though it comes with the limitation of requiring a certain multiplicative gap between $\vep_1$ and $\vep_2$ ($4$ for Boolean functions and $2\sqrt 2$ for unitaries). 

Specifically, for unitaries, the $O(k\log k)$ complexity in \Cref{thm:i2} offers an exponential improvement over the previous best results \cite{chen2024tolerant, arunachalam2025testing}, which requires $\exp(k)$ queries when there is a constant factor gap. For Boolean functions, our algorithm provides the first non-trivial quantum tester for this problem, as shown in \Cref{thm:i1}. In the following, we assume $\vep_1,\vep_2$ are constant. 

\begin{theorem}
[Tolerant Tester for Unitaries]
\label{thm:i2}
   There exists a non-adaptive tolerant $(\vep_1, \vep_2)$-tester solving \Cref{prob} for $k$-junta unitaries with $O\p{k \log k}$ quantum queries if $\vep_2 > 2\sqrt{2}\vep_1$.
\end{theorem}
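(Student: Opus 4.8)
The plan is to reduce tolerant $k$-junta testing for unitaries to estimating a ``Pauli influence'' quantity that is sandwiched between the squared distances to the nearest $k$-junta. Concretely, for an $n$-qubit unitary $U$ with Pauli expansion $U = \sum_{P \in \Zfn} \widehat{U}(P)\,\sigma_P$, one works with the probability distribution $\widehat{U}\colon \Zfn \to [0,1]$ given by $p(P) = |\widehat{U}(P)|^2$ (the Pauli spectrum), and for a coordinate $i \in [n]$ defines the influence $\inf_i[U]$ as the total weight on Paulis acting nontrivially on qubit $i$. For a set $S \subseteq [n]$ of size $k$, the distance from $U$ to the closest unitary supported on $S$ is controlled (up to constants) by $\sum_{i \notin S} \inf_i[U]$; hence the distance to the nearest $k$-junta is, up to the same constants, $\min_{|S| = k}\sum_{i\notin S}\inf_i[U]$, i.e.\ the sum of the $n-k$ smallest influences. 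The first step is to make these two inequalities precise (citing the Pauli-analytic facts from the earlier sections), which is exactly where the factor $2\sqrt 2$ will enter: squaring converts a multiplicative gap of $2\sqrt 2$ in distance into a gap of $8$ in the influence proxy, enough room to separate the two cases by a constant-factor additive estimate.

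Second, I would build the estimator. Using \FS{} (Fourier/Pauli sampling, which draws $P \sim p$ with $O(1)$ queries to $U$ and its inverse, per the cited subroutine), one can for each coordinate $i$ estimate $\inf_i[U]$ to additive accuracy by counting how often a sampled $P$ is nontrivial on qubit $i$. The subtlety is that we need all $n$ influences simultaneously and accurately enough to identify the $n-k$ smallest and sum them; a naive union bound over $n$ coordinates costs $O(n \log n)$ samples. To get down to $O(k\log k)$, I would instead use the standard junta-testing trick of randomly partitioning $[n]$ into $\Theta(k)$ (or $\Theta(k^2)$) blocks and working with \emph{block influences} $\inf_B[U] := \sum_{i \in B}\inf_i[U]$: with high probability the at most $k$ ``heavy'' coordinates land in distinct blocks, and a single Pauli sample reveals, for free, which blocks it touches. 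Summing the estimated block influences of all but the $k$ heaviest blocks then approximates $\min_{|S|=k}\sum_{i\notin S}\inf_i[U]$ up to a small additive error, and a Chernoff bound shows $O(k\log k)$ Pauli samples suffice for the union bound over $\Theta(k)$ blocks.

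Third, I would assemble the tester: sample $O(k\log k)$ Paulis from $p$ non-adaptively, form the empirical block influences, output \textsc{yes} if the sum of all but the top $k$ is below the threshold $\tfrac12(8\vep_1^2 + \vep_2^2)$ (suitably scaled by the constants from Step 1) and \textsc{no} otherwise. Completeness and soundness follow from the two inequalities of Step 1 together with the concentration of Step 2, provided $\vep_2^2 > 8\vep_1^2$, i.e.\ $\vep_2 > 2\sqrt2\,\vep_1$. Non-adaptivity is immediate since every query is a call to \FS{} and the coordinate bookkeeping is classical post-processing.

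The main obstacle, I expect, is making the block-influence argument quantitatively tight enough to land at $O(k\log k)$ rather than $O(k^2)$ or $O(k\log n)$: one must argue that (a) the random partition isolates the heavy coordinates with good probability using only $O(k)$ blocks while the light coordinates contribute a controllably small error to the block sums, and (b) the additive error budget from empirical estimation, split across $\Theta(k)$ blocks, still separates the two cases under only a constant-factor gap. A secondary technical point is handling the direction of the sandwich inequality that bounds distance-to-junta \emph{from below} by the influence proxy — this typically requires exhibiting an explicit near-junta (e.g.\ by projecting/averaging out the light qubits) and bounding its distance from $U$, and getting the constant here right is what pins down the precise value $2\sqrt2$.
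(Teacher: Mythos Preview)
Your high-level framework is right---sandwich the distance by a Pauli-influence proxy and estimate it via Fourier sampling---but the specific proxy you chose, $\min_{|S|=k}\sum_{i\notin S}\inf_i[U]$ (the sum of the $n-k$ smallest individual influences), fails in the yes-case. The bound $\inf_{\overline{T}}[U]\le 2\vep_1^2$ controls the \emph{set} influence, not the sum $\sum_{i\notin T}\inf_i[U]$: a single heavy Pauli string contributes to every individual influence in its support. Concretely, take $U=\cos\theta\, I + i\sin\theta\, X^{\otimes n}$ with $\sin^2\theta = 2\vep_1^2$; then $U$ is $\approx\vep_1$-close to the identity (a $0$-junta), yet $\inf_i[U]=2\vep_1^2$ for every $i\in[n]$, so your proxy equals $(n-k)\cdot 2\vep_1^2$, unbounded in $n$. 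Random partitioning does not rescue this: with sum-type block influences the same example still gives a proxy growing with the number of blocks; with set-type block influences the soundness side fails instead, since being $\vep_2$-far from every $k$-junta says nothing about $\inf_{\overline{T'}}[U]$ when $T'$ is a union of $k$ blocks and hence contains far more than $k$ coordinates.

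The paper supplies the missing idea: it restricts attention to the \emph{degree-$k$ influences} $\inf_i^{\le k}[U]$, counting only Pauli samples of weight at most $k$. These satisfy $\sum_i \inf_i^{\le k}[U]\le k$ automatically (each such Pauli contributes to at most $k$ coordinates), so at most $O(k^2/\delta)$ coordinates can have $\inf_i^{\le k}[U]\ge \delta/k$; this candidate set $S$ is found with $O((k/\delta)\log(k/\delta))$ Fourier samples by discarding those of weight $>k$. A further $O((k/\delta^2)\log(k/\delta))$ samples are then reused to estimate the set influence $\inf_{\overline{T}}[U]$ simultaneously for every size-$k$ subset $T\subseteq S$. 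Completeness uses that if $U$ is $\vep_1$-close to a junta on $T'$, then replacing $T'$ by $T'\cap S$ raises the influence by at most $\sum_{i\in T'\setminus S}\inf_i^{\le k}[U]\le k\cdot(\delta/k)=\delta$; soundness is direct from $\inf_{\overline{T}}[U]\ge \vep_2^2/4$. The constant $2\sqrt{2}$ arises exactly as you guessed, from $\vep_2^2/4 > 2\vep_1^2$.
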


\begin{theorem}
[Tolerant Tester for Boolean Functions]
\label{thm:i1}
    There exists a non-adaptive tolerant $(\vep_1, \vep_2)$-tester solving \Cref{prob} for $k$-junta Boolean functions with $O\p{k \log k}$ quantum queries if $\vep_2 > 4\vep_1$.
\end{theorem}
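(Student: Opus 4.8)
The plan is to reduce the problem to estimating a single Fourier quantity of $f$ and then to obtain that quantity from Fourier samples produced by \FS.

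\emph{A distance formula.} For a $k$-set $J\subseteq[n]$ put $\mathbf{W}^{\ne J}[f]=\sum_{S\not\subseteq J}\hat f(S)^2$ and let $M(f)=\min_{|J|=k}\mathbf{W}^{\ne J}[f]=1-\max_{|J|=k}\sum_{S\subseteq J}\hat f(S)^2$. The first step is the two-sided bound $\tfrac14 M(f)\le \dist(f,\{k\text{-juntas}\})\le M(f)$. For the upper bound, fix $J$, let $f^{\subseteq J}=\sum_{S\subseteq J}\hat f(S)\chi_S$ be $f$ averaged over the coordinates outside $J$, and round it to the $\pm1$-valued $k$-junta $g=\operatorname{sgn}(f^{\subseteq J})$; whenever $g(x)\ne f(x)$ we have $|f(x)-f^{\subseteq J}(x)|\ge 1=\tfrac12|f(x)-g(x)|$, so Parseval gives $\dist(f,g)=\Pr[f\ne g]\le\|f-f^{\subseteq J}\|_2^2=\mathbf{W}^{\ne J}[f]$. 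For the lower bound, any $\pm1$-valued $k$-junta $h$ on $J$ has $\hat h$ supported inside $J$, so $4\dist(f,h)=\|f-h\|_2^2\ge\sum_{S\not\subseteq J}\hat f(S)^2=\mathbf{W}^{\ne J}[f]$. Since $\vep_2>4\vep_1$, this reduces \Cref{prob} to estimating $M(f)$ to an additive constant $\delta:=\tfrac13(\vep_2-4\vep_1)$ and thresholding at $\tfrac12(4\vep_1+\vep_2)$: a $\vep_1$-close $f$ has $M(f)\le 4\vep_1$, an $\vep_2$-far $f$ has $M(f)\ge\vep_2$.

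\emph{The tester.} One call to \FS costs a single non-adaptive query and returns $S\subseteq[n]$ with probability $\hat f(S)^2$; since $\sum_{S\subseteq J}\hat f(S)^2=\Pr_S[S\subseteq J]$, we have $1-M(f)=\max_{|J|=k}\Pr_S[S\subseteq J]$. I would draw $T=\Theta(k\log k/\delta^2)$ independent samples $S_1,\dots,S_T$, set $\widehat M=1-\max_{|J|=k}\tfrac1T|\{t\le T: S_t\subseteq J\}|$, and accept iff $\widehat M\le\tfrac12(4\vep_1+\vep_2)$. The maximization is a maximum-coverage computation over the coordinates occurring in the samples; it costs $2^{\widetilde O(k)}\operatorname{poly}(n)$ classical time but no further queries, so the query complexity is $T=O(k\log k)$ and the algorithm is non-adaptive.

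\emph{Analysis, and the hard part.} Completeness needs only one-sided concentration at the (unknown) optimal $J_0$, namely $\tfrac1T|\{t:S_t\subseteq J_0\}|\ge\Pr_S[S\subseteq J_0]-\delta$, which is a single Chernoff bound. The crux is soundness: with probability $\ge 2/3$ we need $\tfrac1T|\{t:S_t\subseteq J\}|\le\Pr_S[S\subseteq J]+\delta$ for \emph{all} $k$-sets $J$ at once. A union bound over all $\binom nk$ choices would cost $\Theta(k\log n)$ queries; the saving is that any sample with $|S_t|>k$ is contained in no $k$-set, so only $k$-subsets of $V:=\bigcup\{S_t:|S_t|\le k\}$ matter, and $|V|\le kT=\operatorname{poly}(k)$. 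In VC language the class $\{S\mapsto\mathbf 1[S\subseteq J]:|J|=k\}$ has growth function at most $(kT+2)^k$ on $T$ points, hence VC dimension $O(k\log k)$, and the uniform-convergence theorem makes $T=\Theta(k\log k/\delta^2)$ samples (constantly many per VC-dimension unit, for constant $\delta$, so $O(k\log k)$ queries) suffice. I expect this soundness step to carry the real difficulty: one must rule out the random Fourier samples admitting a spurious common cover of size $k$, and it is precisely the truncation to samples of size $\le k$ — keeping the effective universe polynomial in $k$ rather than in $n$ — that replaces $\log n$ by $\log k$ in the query bound. A minor point to verify is that the rounding argument in the distance formula is tight, so that the relevant constant is exactly $4$, matching the hypothesis $\vep_2>4\vep_1$.
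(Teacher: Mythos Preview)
Your proposal is correct and shares the paper's high-level strategy: both rest on the two-sided bound $\tfrac14\inf_{\overline{J}}[f]\le\dist(f,\mathcal{J}_J)\le\inf_{\overline{J}}[f]$ (your distance formula is exactly \Cref{prop:fc,prop:ff}) and estimate $\min_{|J|=k}\inf_{\overline{J}}[f]$ from Fourier samples. The execution of the soundness union bound differs. The paper runs a two-phase algorithm: a preprocessing call to \algname{Coordinate-Extractor} uses one batch of samples to isolate a set $S$ of $O(k^2/\delta)$ coordinates with high degree-$k$ influence, and a fresh second batch of $O\!\big((k/\delta^2)\log(k/\delta)\big)$ samples is then union-bounded over only the $\binom{|S|}{k}$ subsets of $S$; an influence-approximation lemma (\Cref{lemma_inf_appro}) handles the fact that the true optimal $k$-set need not lie inside $S$. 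You collapse this to a single phase and replace the explicit candidate set by a growth-function argument, noting that on any $T$ sample points the hypothesis class $\{S\mapsto\mathbf 1[S\subseteq J]:|J|=k\}$ realises at most $(kT+1)^k$ labelings, since only the $\le kT$ coordinates appearing in samples of weight $\le k$ can matter. Your route is slightly more streamlined for this particular theorem --- no two-phase split, and completeness is a single Hoeffding bound at the true optimum rather than an appeal to \Cref{lemma_inf_appro} --- at the price of invoking VC-type uniform convergence; the paper's route is more elementary and, because \algname{Coordinate-Extractor} outputs a set of size $O(k^2)$ independent of the sample count, is the reusable building block for the arbitrary-gap tester in \Cref{sec:gapless}, where the downstream cost is exponential in the candidate-set size.
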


In \Cref{sec:local}, we adapt our algorithm by replacing the \algname{Fourier-Sample} with \algname{Influence-Sample}. This modification allows the algorithm to be implemented using only single-qubit operations while eliminating the need for CNOTs, thereby enhancing its experimental feasibility. The trade-off, however, is a slightly more stringent requirement for the parameter gap.

The \algname{Influence-Sample} also eliminates the need for quantum memory (i.e., ancilla qubits) in our algorithm. For comparison, the tester in \cite{arunachalam2025testing} needs $O(4^kn)$ queries without quantum memory, whereas our algorithm requires only $O\p{k \log k}$ queries in this setting.

\begin{theorem}
[Tolerant Tester for Unitaries with Single-Qubit Operations]
\label{thm:i4}
   There exists a non-adaptive tolerant $(\vep_1, \vep_2)$-tester that only requires single-qubit operations solving \Cref{prob} for $k$-junta unitaries with $O\p{k \log k}$ quantum queries if $\vep_2 > 3\sqrt{2} \vep_1$. Moreover, the tester does not require ancilla qubits.
\end{theorem}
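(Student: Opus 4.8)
The plan is to run the tester of \Cref{thm:i2} essentially unchanged, substituting only its access primitive. That tester queries the unknown $n$-qubit unitary $U$ exclusively through \algname{Fourier-Sample}, which returns a Pauli $P\in\{I,X,Y,Z\}^n$ with probability $|\widehat U(P)|^{2}$; the standard realization prepares $n$ Bell pairs, applies $U$ to one half, and performs a Bell-basis measurement, hence uses $n$ ancilla qubits and $\Theta(n)$ CNOT gates. The key point is that the analysis of \Cref{thm:i2} consumes these samples only to estimate influence statistics of $U$ — the Pauli weights $\inf_B(U)=\sum_{P:\,\supp(P)\cap B\neq\emptyset}|\widehat U(P)|^{2}=\Pr_{P}[\supp(P)\cap B\neq\emptyset]$ attached to coordinates or sets $B$ — to constant relative precision, and these can be estimated using only single-qubit state preparations, single-qubit gates, and single-qubit measurements, with no ancilla. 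The steps are then: (i) construct such a subroutine \algname{Influence-Sample}; (ii) substitute it into the tester of \Cref{thm:i2}; (iii) track the constants to obtain the gap condition.

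For step (i), the gadget rests on the operator-Fourier identity: for any Pauli $Q$ supported on $B$,
\[
 \tfrac{1}{2^{n}}\operatorname{Tr}(U^{\dagger}QUQ)=\sum_{P:\,[P,Q]=0}|\widehat U(P)|^{2}-\sum_{P:\,\{P,Q\}=0}|\widehat U(P)|^{2}=1-2\!\!\sum_{P:\,\{P,Q\}=0}\!\!|\widehat U(P)|^{2},
\]
so averaging over a uniformly random (possibly trivial) Pauli $Q$ on $B$ — which anticommutes with any fixed non-identity Pauli exactly half the time — gives $\E_{Q}\,\tfrac{1}{2^{n}}\operatorname{Tr}(U^{\dagger}QUQ)=1-\inf_{B}(U)$. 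Each term is itself the expectation of a single-qubit measurement on one application of $U$: feed $U$ the state that, on $\supp(Q)$, is a uniformly random product of $\pm1$-eigenstates of the single-qubit factors of $Q$ conditioned on their eigenvalues multiplying to $+1$, tensored with uniformly random computational-basis states on the remaining qubits; apply $U$; measure each single-qubit factor of $Q$ and output the product of the $\pm1$ outcomes. The conditioned mixture of product eigenstates is exactly the maximally mixed state on the $+1$-eigenspace of $Q$ (the $2^{|\supp(Q)|-1}$ such product states are pairwise orthogonal and span it), while the random computational-basis states average to the maximally mixed state elsewhere; since $\operatorname{Tr}(Q)=0$ the cross term vanishes and the expected output equals $\tfrac{1}{2^{n}}\operatorname{Tr}(U^{\dagger}QUQ)$. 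Thus \algname{Influence-Sample} is realized with single-qubit state preparations, single-qubit gates, and single-qubit measurements alone — no ancilla and no CNOT — in contrast to \algname{Fourier-Sample} and to the memoryless tester of \cite{arunachalam2025testing}, whose query complexity is exponential in $k$ and linear in $n$.

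For steps (ii)--(iii): the tester of \Cref{thm:i2} draws $O(k\log k)$ Pauli samples only to estimate a fixed family of influences to constant relative precision, and $O(k\log k)$ runs of \algname{Influence-Sample} estimate the same family to the same precision (each run is a $\pm1$-valued random variable of variance $O(1)$, so Hoeffding applies), after which its completeness and soundness arguments carry over almost line by line. The one new loss is that \algname{Influence-Sample} accesses $\inf_B(U)$ through a centered observable valued in $[-1,1]$ (rather than a probability in $[0,1]$) and through a mixed-state surrogate for the input of $U$, which inflates by a constant the inequality chain relating the estimated objective to the normalized Frobenius distance $\dist(U,\mathcal J_{k})$ of $U$ to the nearest $k$-junta; carrying those constants replaces the hypothesis $\vep_2^{2}>8\,\vep_1^{2}$ of \Cref{thm:i2} by $\vep_2^{2}>18\,\vep_1^{2}$, i.e.\ $\vep_2>3\sqrt{2}\,\vep_1$.

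The main obstacle is this reconciliation in (ii)--(iii). One must verify that no step of the \Cref{thm:i2} analysis secretly needs a sampled Pauli beyond its influence marginals, so that the coordinatewise substitution is faithful, and then re-derive — with the slightly lossier influence estimator in hand — the quantitative step passing from the estimated objective to $\dist(U,\mathcal J_{k})$, which is where the exact constant $3\sqrt{2}$ (as opposed to some other value) has to be pinned down; this is the delicate part. Verifying that the extra estimation variance fits within the $O(k\log k)$ budget is routine given the boundedness of the estimators.
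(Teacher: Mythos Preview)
Your gadget correctly estimates $\inf_B(U)$ with single-qubit operations, but only for a \emph{fixed, pre-specified} set $B$: one call returns a single $\pm 1$ bit whose expectation (over $Q$) is $1-\inf_B(U)$. The gap is that the tester of \Cref{thm:i2} does \emph{not} merely estimate a small ``fixed family'' of influences. The \algname{Coordinate-Extractor} must estimate the degree-$k$ influence of every one of the $n$ coordinates, and after extracting $S$ with $|S|=O(k^2)$ the tester must estimate $\inf_{\overline T}[U]$ for \emph{every} $T\in\binom{S}{k}$, i.e.\ for $\binom{|S|}{k}=2^{\Theta(k\log k)}$ sets. The only reason \Cref{thm:i2} gets away with $O(k\log k)$ queries is that each \algname{Fourier-Sample} call returns an entire \emph{string} $x$, which is then re-examined against every coordinate $i$ (does $x_i\neq 0$?) and every candidate $T$ (does $\supp(x)\not\subseteq T$?); the same $M=O(k\log k)$ samples serve all $n$ coordinates and all $\binom{|S|}{k}$ subsets via a union bound. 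Your set-specific estimator yields one scalar per query per $B$, so substituting it as written forces fresh queries for each of the $n$ coordinates in the extractor and for each of the $\binom{|S|}{k}$ subsets in the tester, giving $\Omega(n)+2^{\Omega(k\log k)}$ queries rather than $O(k\log k)$. The sentence ``verify that no step \dots\ secretly needs a sampled Pauli beyond its influence marginals'' is precisely where the argument breaks: the algorithm does need the sampled string itself, not just the marginals.

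The paper's \algname{Influence-Sample} (from \cite{bao2023testing}) resolves exactly this point. It picks a uniformly random $V\in\{I,H,R\}$, prepares a uniformly random computational-basis state $\ket{y}$, applies $(V^{\otimes n})^\dagger U V^{\otimes n}$, measures in the computational basis to obtain $y'$, and outputs the full bit string $x=y\oplus y'$. The key property (\Cref{lemma:is}) is that for \emph{every} set $T$ simultaneously one has $\Pr[x_T\neq 0]\le\inf_T[U]\le\tfrac32\Pr[x_T\neq 0]$, so a single call produces a string reusable across all coordinates and all candidate $T$, exactly as with \algname{Fourier-Sample}. The degradation of the gap requirement then comes from this multiplicative $\tfrac32$ on the soundness side (via $\tfrac23\cdot\vep_2^2/4$ versus $2\vep_1^2$), not from the $[-1,1]$-versus-$[0,1]$ observable range you invoke.
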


\begin{theorem}
[Tolerant Tester for Boolean Functions with Single-Qubit Operations]
\label{thm:i3}
   There exists a non-adaptive tolerant $(\vep_1, \vep_2)$-tester that only requires single-qubit operations solving \Cref{prob} for $k$-junta Boolean functions with $O\p{k \log k}$ quantum queries if $\vep_2 > 6\vep_1$. Moreover, the tester does not require ancilla qubits.
\end{theorem}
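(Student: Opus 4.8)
The plan is to reuse, essentially verbatim, the non-adaptive $O(k\log k)$-query algorithm behind \Cref{thm:i1}, replacing every call to \textsc{Fourier-Sample} by a call to the single-qubit, ancilla-free primitive \textsc{Influence-Sample} of \Cref{sec:local} (the same primitive that upgrades \Cref{thm:i2} to \Cref{thm:i4}); the only genuine work is re-tracking the constants, and that is precisely where the stronger hypothesis $\vep_2>6\vep_1$ gets consumed.

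First I would recall the skeleton of the \Cref{thm:i1} tester: fix a uniformly random partition of $[n]$ into $\poly(k)$ blocks, run $m=O(k\log k)$ sampling rounds, and from the sampled sets form, for each block $B$, an empirical estimate of its Fourier weight $W_B=\sum_{S:\,S\cap B\neq\emptyset}\hat f(S)^2$ (which, for a fine random partition, agrees with $\einf(B)=\sum_{i\in B}\inf_i(f)$ up to lower-order terms); then drop the $k$ heaviest blocks, sum the rest to get $\widehat D\approx\min_{|J|=k}\inf_{\overline J}(f)$, and accept iff $\widehat D$ is below a threshold. This works because $\tfrac12\inf_{\overline J}(f)\le\dist(f,\text{juntas on }J)\le\inf_{\overline J}(f)$ together with $\|f-g\|_2^2=4\Pr[f\neq g]$ converts an $(\vep_1,\vep_2)$ Hamming separation into a clean Fourier-level separation (the source of the ``$4$'' in \Cref{thm:i1}), and because $m=O(k\log k)$ samples suffice, by Chernoff plus a union bound over the $\poly(k)$ blocks, to decide which side of the threshold $\widehat D$ lies on.

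Next I would specify \textsc{Influence-Sample} and plug it in. Each call uses one query to the phase oracle of $f$ (equivalently, to the diagonal unitary $U_f$), a constant-depth layer of single-qubit gates, and a computational-basis measurement, and returns either $\bot$ or a coordinate $i\in[n]$ — hence its host block $B\ni i$ — in a law from which the same block-weight estimator can be reassembled. The crucial point, and what I expect to be the main obstacle, is that because the primitive acts on product states rather than on CNOT-built Bell/ancilla-aided states it only ``sees'' a fixed fraction of the relevant Fourier mass of each block; one must show this attenuation is (a) one-sided, i.e.\ never inflates a block's estimated weight, and (b) an absolute constant, independent of both $k$ and $|S|$ — in particular there is no $\rho^{|S|}$-type geometric loss of the kind that forced the $\exp(k)$ bounds in \cite{blais2018tolerant,chen2024tolerant}. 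I expect this to cost an extra factor $3/2$ in the required separation, which would simultaneously reproduce the $4\to 6$ Boolean degradation claimed here and the $2\sqrt2\to 3\sqrt2$ unitary degradation of \Cref{thm:i4}, the residual $\sqrt2$ of the Boolean case being the usual $\ell_2$-versus-Hamming factor.

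Finally I would redo the completeness/soundness accounting. Writing $c_1\vep_1$ for the high-probability upper bound on $\widehat D$ in the yes-case and $c_2\vep_2$ for the high-probability lower bound in the no-case, the product-state attenuation multiplies the ratio $c_1/c_2$ by the constant identified above relative to its value in the proof of \Cref{thm:i1}; since $\vep_2>4\vep_1$ sufficed there, $\vep_2>6\vep_1$ suffices here, with room for the empirical fluctuations. The $O(1)$ probability that an \textsc{Influence-Sample} call returns $\bot$ is absorbed by a rejection-sampling wrapper that inflates the number of rounds by only a constant factor, so the query count stays $O(k\log k)$; Chernoff over the $m=O(k\log k)$ rounds and a union bound over the $\poly(k)$ blocks give the claimed success probability. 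By construction every round uses only single-qubit operations and no ancilla qubits, which yields \Cref{thm:i3}.
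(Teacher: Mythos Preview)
Your top-level strategy is the same as the paper's, and your identification of the $3/2$ attenuation constant as the reason for the $4\to 6$ (Boolean) and $2\sqrt2\to 3\sqrt2$ (unitary) degradation is exactly right: that is \Cref{lemma:is}, which states $\Pr[x_T\neq 0]\le\inf_T[U]\le\tfrac32\Pr[x_T\neq 0]$, and in the Boolean accounting it turns the far-case bound $\inf_{\overline T}[f]\ge\vep_2$ into $\Pr[x_{\overline T}\neq 0]\ge\tfrac23\vep_2$ while the close-case bound $\inf_{\overline T}[f]\le 4\vep_1$ survives unchanged.

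However, your description of the underlying \Cref{thm:i1} tester is not the paper's algorithm. There is no random partition into $\poly(k)$ blocks and no ``drop the $k$ heaviest blocks'' step. The paper's \Cref{alg:tester} instead (i) runs \algname{Coordinate-Extractor} (\Cref{alg:ce}) to locate a set $S$ of size $O(k^2)$ containing all coordinates with degree-$k$ influence at least $\delta/k$, using \Cref{lemma_inf_appro} to argue this loses only an additive $\delta$ in influence; and then (ii) reuses the same $M=O(k\log k)$ samples to estimate $\inf_{\overline T}$ simultaneously for \emph{every} size-$k$ subset $T\subseteq S$, with a union bound over $\binom{|S|}{k}$ subsets. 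The single-qubit version (\Cref{alg:tester-l}) swaps in \algname{Influence-Sample} in both stages and tracks the probabilities $p_i=\Pr[|x|\le k,\ x_i\neq 0]$ in place of $\inf_i^{\le k}$. Your block-based sketch would need a separate argument that the $k$ relevant coordinates land in distinct blocks and that ``sum of remaining block weights'' approximates $\min_{|J|=k}\inf_{\overline J}$; you do not supply this, and it is not how the paper proceeds.

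You also misdescribe \algname{Influence-Sample}: it does not return $\bot$ or a single coordinate. Per \Cref{alg:is}, it outputs an entire string $x\in\{0,1\}^n$ (namely $y\oplus y'$ after conjugating $U$ by a random tensor power of $I,H,R$), and the analysis uses the event $x_T\neq 0$ for a set $T$. There is therefore no rejection-sampling wrapper and no per-call probability of failure to absorb; the $3/2$ already accounts for the full loss.
\bigskip
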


Here, we obtain an exponential separation 
  in query complexity
  between quantum and classical tolerant 
  junta testing, combining the fact as follows.
For self-containment, we provide the proof of
  the lower bound in \Cref{sec:app:lowerbound},
  we note that the proof also follows the similar line as 
  \cite{pallavoor2022approximating}.

\begin{restatable}[Classical lower bound, Theorem 2 of \cite{chen2024mildly}, 
  modified]{fact}{lowerboundI}
\label{thm:lowerbound1}
  Any non-adaptive classical $(0.01, 0.2)$-tolerant tester for $k$-junta Boolean functions  with
    success probability at least $9/10$
    must make $2^{\Omega\p{k^{1/2}}}$ queries.
\end{restatable}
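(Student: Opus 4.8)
The plan is to prove \Cref{thm:lowerbound1} via Yao's minimax principle, building on the hard distributions of Chen, De, Li, Nadimpalli and Servedio~\cite{chen2024mildly}. It suffices to exhibit two distributions $\dy$ and $\dn$ over Boolean functions $f\colon\Bn\to\B$ such that: (i) $f\sim\dy$ is $0.01$-close to some $k$-junta with probability $1-o(1)$; (ii) $f\sim\dn$ is $0.2$-far from every $k$-junta with probability $1-o(1)$; and (iii) for every fixed query set $Q=\{x^{(1)},\dots,x^{(q)}\}\subseteq\Bn$ with $q=2^{o(\sqrt k)}$, the answer vectors $(f(x^{(1)}),\dots,f(x^{(q)}))$ are $o(1)$-close in total variation between $f\sim\dy$ and $f\sim\dn$. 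A non-adaptive $q$-query tester, once its coin tosses are fixed, is a fixed function of the answer vector on a fixed query set, so (i)--(iii) force its acceptance probabilities on $\dy$ and $\dn$ to differ by $o(1)$; since a tester with success probability $9/10$ would need them to differ by at least $8/10-o(1)$, we conclude $q=2^{\Omega(\sqrt k)}$.

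For the construction I would take (a version of) the instances of~\cite{chen2024mildly}: roughly, one plants independent random ``local gadgets'' on a random partition of a $\Theta(k)$-size subset of the coordinates into $\Theta(\sqrt k)$ blocks, each of size $\Theta(\sqrt k)$, and lets $f$ be a fixed public rule combining the gadget outputs. The parameters are chosen so that $f$ has only $\Theta(k)$ relevant coordinates in both cases, while a draw from $\dy$ agrees up to error $0.01$ with a suitable $k$-coordinate restriction of itself and a draw from $\dn$ loses at least $0.2$ under every $k$-coordinate restriction. Verifying (i) and (ii) for the particular constants $0.01$ and $0.2$ is then a concentration argument over the random partition and gadgets --- this is \cite[Theorem~2]{chen2024mildly} up to re-tuning internal constants --- and the $9/10$ success threshold is reached by the standard amplification of a constant distinguishing gap.

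The technical heart is (iii). Fixing $Q$, I would isolate a ``good event'', adapted from~\cite{chen2024mildly}, on the random partition and gadgets under which the gadgets are queried only at ``structureless'' inputs, so that the answer vector is distributed identically under $\dy$ and $\dn$. Because each block carries $\Theta(\sqrt k)$ coordinates and the partition is random, a fixed pair of query points makes this event fail with probability $2^{-\Omega(\sqrt k)}$ --- a birthday-type estimate keyed to the block size --- so a union bound over the $\binom{q}{2}$ pairs bounds the overall failure probability, hence the total variation distance in (iii), by $q^2\,2^{-\Omega(\sqrt k)}=o(1)$ whenever $q=2^{o(\sqrt k)}$.

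I expect the main obstacle to be engineering the gadgets and the combining rule so that all three requirements hold \emph{simultaneously}: keeping $\dn$ far from every $k$-junta while its support stays of size $\Theta(k)$, keeping $\dy$ within $0.01$ of a $k$-junta, and still leaving the two ensembles statistically indistinguishable from few queries. The ``mildly exponential'' exponent $\sqrt k$ is exactly the output of this balancing act: blocks of size $\sqrt k$ are small enough that $\Theta(\sqrt k)$ of them fit inside an $O(k)$-junta, yet large enough that the revealing event above is $2^{-\Omega(\sqrt k)}$-rare, which is what forces $q\ge 2^{\Omega(\sqrt k)}$. Everything else is the verification sketched above, carried out as in~\cite{chen2024mildly}, whose matching upper bound~\cite{nadimpalli2024optimal} shows the exponent cannot be improved.
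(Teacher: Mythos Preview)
Your outer plan --- Yao's minimax with two distributions $\dy,\dn$ from~\cite{chen2024mildly}, plus a pairwise indistinguishability bound of the form $q^2\cdot 2^{-\Omega(\sqrt k)}$ and a union bound over query pairs --- is exactly what the paper does. However, the specific construction you describe (a random partition into $\Theta(\sqrt k)$ blocks of size $\Theta(\sqrt k)$ with per-block gadgets and a combining rule) is not the construction in~\cite{chen2024mildly} or in the paper, and the ``birthday-type estimate keyed to the block size'' is not where the $\sqrt k$ exponent comes from.

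The actual construction is a single Hamming-weight threshold. Take $n=2k$, a uniformly random equipartition $[n]=C\sqcup A$ into ``control'' and ``action'' coordinates, a uniformly random $r\colon\{0,1\}^C\to\{0,1\}$, and a half-width $c_1=0.005\sqrt k$. Both $\fy$ and $\fn$ depend only on $r(x_C)$ and on whether $|x_A|$ is below, inside, or above the band $[k/2-c_1,\,k/2+c_1]$: inside the band both output $0$; below it both output $r(x_C)$; above it $\fy=r(x_C)$ while $\fn=1-r(x_C)$. Then $\fy$ agrees with the $k$-junta $x\mapsto r(x_C)$ outside the band, whose mass is $O(c_1/\sqrt k)\le 0.01$, giving (i); farness of $\fn$ is a direct analysis. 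For (iii), a pair $x,y$ can be informative only if $x_C=y_C$ and $|x_A|,|y_A|$ straddle the band; the first condition forces the random $C$ to miss the entire difference set $\{i:x_i\ne y_i\}$, and the second forces that set to have size at least $2c_1$, so the per-pair failure probability is at most $\binom{n-2c_1}{k}/\binom{n}{k}\le e^{-2c_1 k/n}=2^{-\Omega(\sqrt k)}$. The $\sqrt k$ exponent thus comes from the threshold width $c_1=\Theta(\sqrt k)$, not from any block size.
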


According to \Cref{thm:i3}, our quantum 
  tolerant $k$-junta testers for Boolean functions can
  achieve a success probability of $9/10$ with 
  $O\p{k\log k}$ queries. 
An exponential separation on the query complexity follows.
Also, note that in this separation, the quantum algorithm
  only uses single-qubit operations.
According to \Cref{lemma:is}, we can further obtain
  a quantum algorithm
  achieving exponential saving using only Hardmard 
  gates, computational basis measurements, and classical
  post-processing.

Another of our results is a $2^{\widetilde{O}(k)}$-query algorithm for tolerant testing of unitaries, with query complexity comparable to that in \cite{chen2024tolerant,arunachalam2025testing}. However, our algorithm is more general as it applies to any parameters $0 < \vep_1 < \vep_2 < 1$, and employs a fundamentally distinct approach.

\begin{theorem}[General Tolerant Tester for Unitaries]\label{thm:i5}
   For any $k,\vep_1,\vep_2$, there exists a non-adaptive tolerant $(\vep_1, \vep_2)$-tester solving \Cref{prob} for $k$-junta unitaries with $2^{O\p{k\log k/\vep}}$ quantum queries, where $\vep \coloneqq \vep_2-\vep_1$.
\end{theorem}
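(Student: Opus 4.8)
\noindent\emph{Proof sketch (plan).}

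The plan is to reduce \Cref{thm:i5} to \emph{additively} estimating $\Delta(U)\coloneqq\dist(U,\mathcal J_k)$, the normalised Frobenius distance from $U$ to the set $\mathcal J_k$ of $k$-junta unitaries: since $\vep=\vep_2-\vep_1$, it suffices to estimate $\Delta(U)$ to additive error $O(\vep)$ and compare it against a threshold in $(\vep_1,\vep_2)$. Everything is done in Pauli analysis. For $R\subseteq[n]$ let $U_R\coloneqq\sum_{P:\,\supp(P)\subseteq R}\widehat U(P)\,P$ be the Pauli projection of $U$ onto $R$ (equivalently, the action on $\mathcal H_R$ obtained by twirling $U$ over the Pauli group on $\bar R$), which is a contraction. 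Maximising $\operatorname{Re}\tr(U^\dagger V)/2^n$ over unitary $k$-juntas $V$ supported on $R$ is, after tracing out $\bar R$, solved by the unitary factor of the polar decomposition of $U_R$, and one obtains the identity
\[
  \Delta(U)^2 \;=\; 1-\max_{|R|\le k}\;\frac{1}{2^{|R|}}\bigl\|U_R\bigr\|_1
\]
(trace norm; up to the normalisation of \Cref{def_dist_unitary}). So the problem splits into (i) replacing the maximum over the $\binom{n}{\le k}$ candidate sets by a maximum over an explicit family $\mathcal F$ of size $2^{O(k\log k/\vep)}$, and (ii) estimating $2^{-|R|}\|U_R\|_1$ for a fixed $R$ with $|R|\le k$ to additive accuracy $\Theta(\vep^2)$ using few queries. (This is essentially ``agnostically learn the closest $k$-junta, then estimate the distance to it''.)

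Step (ii) is the routine component. Each of the $\le 4^{k}$ Pauli coefficients $\widehat U(P)=\tr(P^\dagger U)/2^n$ with $\supp(P)\subseteq R$ can be estimated to precision $\Theta(\vep^2/4^{k})$ by a controlled-$U$ Hadamard test (a non-adaptive subroutine in the spirit of \FS); since $\|U_R\|_{\mathrm{op}}\le 1$, an $\ell_1$-error $\delta$ in these coefficients changes $2^{-|R|}\|U_R\|_1$ by at most $\delta$, so this precision suffices, after which one assembles the $2^{|R|}\times 2^{|R|}$ matrix $U_R$ and computes its trace norm classically. This is $2^{O(k)}\poly(1/\vep)$ queries per candidate, so the whole tester makes $|\mathcal F|\cdot 2^{O(k)}\poly(1/\vep)=2^{O(k\log k/\vep)}$ queries, all fixed in advance (hence non-adaptive).

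Step (i) -- localisation -- is the heart of the argument. One takes $\mathcal F$ to be the set of all $\le k$-subsets of a small set $S$ of ``active'' coordinates extracted from $2^{O(k\log k/\vep)}$ Pauli samples drawn according to $P\mapsto|\widehat U(P)|^2$: keep coordinates that occur frequently, but only count samples of small support (or bucket $[n]$ first), which prevents a single stray high-weight Pauli carrying a $\Theta(\vep_1^2)$ sliver of weight from blowing $|S|$ up to $\Omega(n)$; this keeps $|S|=k^{O(1/\vep)}$, so $|\mathcal F|=2^{O(k\log k/\vep)}$. Correctness then reduces to the structural lemma a full proof must establish: in the \textsc{yes} case there is, with high probability, a $k'$-junta $\widehat V$ ($k'\le k$) with $\dist(U,\widehat V)\le\vep_1+\vep/5$ whose relevant set lies inside $S$. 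The natural route is to \emph{regularise} an optimal junta $V_{\mathrm{opt}}$, repeatedly peeling off coordinates whose (conditional) influence drops below a threshold $\eta=\Theta(\vep^2/k^2)$ -- each peel perturbs the junta by $O(\sqrt\eta)$ and there are $\le k$ of them, for total drift $O(\vep)$ -- and then to show that every surviving, ``substantially relevant'' coordinate is caught in $S$.

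The genuinely delicate point -- the reason this result needs a different approach from \Cref{thm:i2} and carries the extra $1/\vep$ in the exponent -- is that $\vep_1$ may be a \emph{constant}, so $U-\widehat V$ can be arranged to cancel $\widehat V$'s dependence on a relevant coordinate, rendering that coordinate invisible to every single-coordinate statistic of $U$ alone; reconciling ``$S$ is small'' with an \emph{additive} (not merely multiplicative, as in \Cref{thm:i2}) $\vep$-guarantee on the recovered junta is where essentially all the technical work goes, and is what forces the sample count, the peeling threshold, and $|S|$ to be chosen as tightly coupled functions of $k$ and $\vep$; I expect this to be the main obstacle. Granting the lemma, the \textsc{no} case is immediate -- every $k$-junta, in particular every one supported on $S$, is $\ge\vep_2$-far from $U$, so the smallest estimated distance over $\mathcal F$ is still $\ge\vep_2-O(\vep)$ -- a union bound over the $2^{O(k\log k/\vep)}$ estimation events finishes correctness, and the query total is $2^{O(k\log k/\vep)}$.
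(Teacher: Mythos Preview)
Your high-level plan---reduce to additive estimation of $\dist(U,\mathcal J_k)$ via the nuclear-norm identity $\dist(U,\mathcal J_T)^2=1-\tfrac1N\|\Tr_{\overline T}U\|_*$, localise to a small candidate set $S$, then estimate the norm for each $k$-subset of $S$ by Hadamard tests---is exactly the paper's route (\Cref{alg:tester2}), and step~(ii) matches \Cref{alg:we} (the paper estimates matrix entries rather than Pauli coefficients, which is cosmetic).

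The gap is in step~(i): you have misidentified the difficulty, and the regularisation detour manufactures an obstacle that the direct argument avoids. The paper's structural lemma (\Cref{lemma_high_influence_dist}) never tries to show that the coordinates relevant to $V_{\mathrm{opt}}$ are visible in $U$. Instead, with $S\supseteq\{i\in T:\inf_i^{\le k}[U]\ge\tau^2/k\}$ and $V$ the optimal junta on $T$, it bounds directly the change in correlation when $V$ is projected onto Paulis supported in $S$:
\[
\frac1N\Bigl|\Tr(UV^\dagger)-\Tr(UV'^\dagger)\Bigr|
=\Bigl|\sum_{\substack{\supp(x)\subseteq T\\ \supp(x)\not\subseteq S}}\widehat U(x)\,\overline{\widehat V(x)}\Bigr|
\le\sqrt{\sum_{i\in T\setminus S}\inf_i^{\le k}[U]}\le\tau,
\]
by Cauchy--Schwarz and Parseval on $V$, using only that every such $x$ has $|\supp(x)|\le|T|=k$. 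What is controlled is the \emph{inner product with $U$}, and that is governed by the degree-$k$ Pauli mass of $U$ on the dropped coordinates; how much $V$ itself depends on those coordinates is irrelevant. Your ``cancellation'' scenario---a coordinate essential to $V$ but invisible in $U$'s single-coordinate statistics---is precisely the harmless case: dropping it barely moves $\langle U,V\rangle$. (The projected $V'$ is a contraction, not a unitary; one then writes it as a convex combination of unitary juntas on $S$ via its SVD and picks the best one, a second short step you also omit.) Consequently $|S|=O(k^2/\tau^2)=\poly(k,1/\vep)$ with $\tau=\Theta(\vep^2)$, not $k^{O(1/\vep)}$, simply because $\sum_i\inf_i^{\le k}[U]\le k$; localisation costs only $\poly(k,1/\vep)$ Fourier samples, and the exponential in the final query count comes solely from enumerating $\binom{|S|}{k}$ subsets and the per-subset estimation, giving $2^{O(k\log(k/\vep))}$. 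No regularisation of $V_{\mathrm{opt}}$ is needed, and nothing in the argument is sensitive to $\vep_1$ being a constant.
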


It is noted that the testers in \Cref{thm:i2,thm:i1,thm:i4,thm:i3} only require access to the unitary $U$ itself, while the tester in \Cref{thm:i5} requires controlled applications of $U$.

\subsection{Technical Overview}

Our algorithms employ the theory of Pauli analysis, which is Fourier analysis on the space of operators and super-operators~\cite{montanaro2008quantum,10.1145/3618260.3649662}. Pauli analysis has received increasing attention in the past couple of years, and has found applications in quantum learning theory, quantum circuit lower bounds, and quantum computational complexity~\cite{10.1145/3618260.3649662,10.1145/3717823.3718189,arunachalam_et_al:LIPIcs.ICALP.2024.13}.

Here we give a brief overview of our algorithms.

\paragraph{Fourier-Sample}
As a quantum query algorithm, the only quantum subroutine in our technique is \algname{Fourier-Sample} (\Cref{alg:fs}) introduced in \cite{bshouty1995learning}. It utilizes the Choi-Jamiolkowski isomorphism to encode a unitary $U$ as a quantum state and measure it in the Pauli basis, enabling us to sample a Pauli string with the probability of its Fourier weight. This is a simple yet fundamental subroutine that is used repeatedly throughout our algorithm. Simulating this process classically requires an exponential number of queries.

\paragraph{Coordinate-Extractor}
At the beginning, we propose an essential approach to reduce the scale of our problem. It is based on the following connection: a unitary that is close to $k$-junta should have most of its influence concentrated on just $k$ qubits. So in \Cref{sec:ce} we design a subroutine \algname{Coordinate-Extractor} (\Cref{alg:ce}) to efficiently identify a small number of coordinates with high influence in the function or unitary. 

The basic strategy is using \algname{Fourier-Sample} to estimate the influence of all $n$ coordinates simultaneously.
During this procedure, we need to call \algname{Fourier-Sample} $O(k \log k)$ times. For each sample with low weight, we use it as evidence for the influence of every coordinate in its support. The key to efficiency is that each quantum sample is reused to update the estimates for all relevant coordinates. 
Next, we construct a small set $S$ with $|S|=O(k^2)$ that contains coordinates whose \emph{degree-$k$ influence} (see \Cref{eq:k-inf} and \Cref{eq:k-inf-u} for the definition) exceeds a certain threshold. 
Importantly, we prove that $S$ is sufficient for approximating the distance to the closest $k$-junta. That is to say, the distance from $U$ to the class of all $k$-juntas can be well-approximated by its distance to the class of $k$-juntas that only act on the small set $S$. Therefore, the number of coordinates we need to consider was reduced from $n$ to $O(k^2)$, making the subsequent tolerant testing step independent of the input size $n$.

\paragraph{Our First Tester with Constant Gap}
In \Cref{sec:gap}, we present our first main algorithm, a highly efficient tolerant quantum junta tester for both Boolean functions and unitaries, achieving an $O(k\log k)$ query complexity. The strategy is founded on the tight relationship between the distance to the class of $k$-juntas and its influence: functions close to a $k$-junta on a set $T$ have very little influence outside $T$, while functions far from all $k$-juntas have significant influence outside any choice of $T$.

The core testing process is straightforward (see \Cref{alg:tester}). As a preprocessing step, we first employ \algname{Coordinate-Extractor} to identify a small candidate set $S$. Then for every possible subset $T \subseteq S$ with size $k$, we repeatedly invoke \algname{Fourier-Sample} and count the fraction of outcomes whose support is not contained within $T$. It provides the estimation for $\inf_T[U]$ for every $T$, enabling us to reliably distinguish between YES and NO instances.

\paragraph{An Experiment-friendly Tester}
Although \algname{Fourier-Sample} is straightforward and intuitive, its implementation in experiments requires multi-qubit entanglement using CNOT gates, which is difficult to execute and introduces significant errors. Therefore, in \Cref{sec:local} we replace \algname{Fourier-Sample} with \algname{Influence-sample} from \cite{bao2023testing} in our algorithm (\Cref{alg:tester-l}). This alternative approach can also estimate the influence using only single-qubit operations on single qubits. This is more experimentally feasible, but comes at the cost of a stricter requirement for the parameter gap.

\paragraph{Another Tester for Arbitrary Gap}
The technique of our final tester for unitaries in \Cref{sec:gapless}, which can be applied for arbitrary parameters, is inspired by existing work on Boolean functions \cite{nadimpalli2024optimal}. In analogy with their approach, which represents the distance from a function to $k$-junta as the expectation of a function, we formulate the distance from a unitary operator to $k$-junta as the nuclear norm of a matrix via the partial trace (see \Cref{eq:dist}).

The subsequent task is to estimate this norm. It can be shown that the nuclear norm of a matrix is a continuous function of its entries. Therefore, after again applying the \algname{Coordinate-Extractor} to reduce the query complexity, we directly estimate each element of the matrix using the Hadamard test \cite{luongo2022quantum}. See \Cref{alg:tester2} for the details of our approach.

\section{Preliminary}
In this section, we present some basic definitions and notations. For a positive integer $n$, we use $[n]$ to denote the set $\{1, 2, \dots , n\}$. For any subset $T \subseteq [n]$, we denote its complement as $\overline{T}  \coloneqq [n]\backslash T$. We use $|T|$ to denote the cardinality of $T$. For two strings $x \coloneqq(x_1, x_2, \dots, x_n), y \coloneqq(y_1, y_2, \dots, y_n)$ and $T \subseteq [n]$, we use $x^{\overline{T}}y^T$ to denote the string $z \coloneqq(z_1, z_2, \dots, z_n)$ in which $z_i = x_i$ for $i\in {\overline{T}}$ and $z_i = y_i$ for $i \in T$. For a matrix $A_{N\times N}$ and $0 \leq i,j <N$, denote the element in row $i$ and column $j$ of $A$ as $A[i][j]$.

\subsection{Boolean Functions}

Fourier analysis of Boolean functions was pioneered by Kahn, Kalai, and Linial, who established the celebrated KKL theorem  \cite{kahn1989influence} and paved the way for many fundamental results in Boolean functions. Later, these techniques were systematically consolidated by O'Donnell \cite{o2014analysis}. Most concepts are initially defined for any real-valued Boolean functions, but here we only clarify them for Boolean-valued ones.

Let $f \colon \{\pm 1\}^n \to \{\pm 1\}$ be a Boolean function. For any subset $S \subseteq [n]$, $\chi_S(x)  \coloneqq \prod_{i \in S} x_i$ are the \emph{parity basis functions}, and the \emph{Fourier expansion} of $f$ is its unique representation as a multilinear polynomial:
\[ f(x) = \sum_{S \subseteq [n]} \widehat{f}(S) \chi_S(x), \]
where $\widehat{f}(S)$ are called the \emph{Fourier coefficients} of $f$ corresponding to subset $S$, computed by
\[
\widehat{f}(S) = \mathbb{E}_x[f(x)\chi_S(x)].
\]
We have Plancherel's and Parseval's formulas:
\[
\langle f, g \rangle= \mathbb{E}_x[f(x) g(x)] = \sum_{S \subseteq [n]} \widehat{f}(S) \widehat{g}(S)
\quad \text{and} \quad
\|f\|^2 = \sum_{S \subseteq [n]} \widehat{f}^2(S)=1.
\]
therefore $\widehat{f}^2(S)$ is usually called the Fourier weight of $S$ on $f$.

We next define the influence of variables on a Boolean function. It quantifies the importance of a variable's value to the function's output. Two common definitions are given below, where the first involving flipping a certain bit is more intuitive, while the second related to Fourier weights plays an important role in our algorithms.

\begin{definition}[Influence of variables]\label{def_boolean_inf}
For $f \colon \{\pm 1\}^n \to \{\pm 1\}$ and any $i \in [n]$, the \emph{influence} of variable $i$ on $f$ is defined as
\[
\inf_i(f) \coloneqq \Pr_{x \in \{\pm 1\}^n}\left[f(x) \neq f(x^{\oplus i})\right]= \sum_{\substack{S \subseteq [n], S \ni i}} \widehat{f}^2(S),
\]
where $x^{\oplus i}$ differs from $x$ exactly in the $i$-th position. 
For $k \leq n$, we define the \emph{degree-$k$ influence} of variable $i$ on $f$ as
\begin{equation}\label{eq:k-inf}
\inf^{\leq k}_i(f) \coloneqq \sum_{\substack{S \subseteq [n], S \ni i, |S| \leq k}} \widehat{f}^2(S).
\end{equation}
The \emph{influence} of the set $T \subseteq [n]$ on $f$ is defined as
\begin{equation*}
    \inf_T(f) \coloneqq 2 \Pr_{x,y}\left[f(x) \neq f(x^{\overline{T}}y^T)\right] = \sum_{\substack{S \subseteq [n], S \cap T \neq \emptyset}} \widehat{f}^2(S).
\end{equation*}

\end{definition}

\begin{definition}[Distance between Boolean functions]\label{def_dist_boolean}
The distance between Boolean functions $f,g:\{\pm1\}^n \rightarrow \{\pm1\}$ is defined as
\[
\dist(f,g)  \coloneqq \Pr_{\substack{x \sim \{\pm 1\}^n}} [f(x) \neq g(x)].
\]
For any $T \subseteq [n]$, let $\mathcal{J}_T$ denote the class of juntas on $T$. For $S \subseteq [n]$ with $|S| \ge k$, let $\mathcal{J}_{S,k}$ denote the class of $k$-juntas on $S$. Then we define
        \[
\dist(f,\mathcal{J}_{T}) \coloneqq\min_{g \in \mathcal{J}_T} \dist(f, g),
        \]
        and
        \[
\dist(f,\mathcal{J}_{S,k}) \coloneqq\min_{T \subseteq \binom{S}{k}} \dist(f, \mathcal{J}_T).
        \]
         For simplicity, we abbreviate $\mathcal{J}_{[n],k}$ as $\mathcal{J}_{k}$. 
\end{definition}

\begin{remark}
    
For a class $\mathcal{C}$ of Boolean functions, if we can estimate $\dist(f, \mathcal{C})\coloneqq\min_{g \in \mathcal{C}} \dist(f,g)$ up to additive error $(\varepsilon_2 - \varepsilon_1)/2$ with high probability, then the estimation algorithm is a tolerant $(\vep_1, \vep_2)$-tester for that class.

\end{remark}

\begin{definition}
    [Correlation between Boolean functions]\label{def_corr_boolean}
    The correlation between Boolean functions $f,g:\{\pm 1\}^n \rightarrow \{\pm1\}$ is defined as
\[
\corr(f,g)  \coloneqq \langle f,g \rangle =  \Pr_{\substack{x \sim \{\pm 1\}^n}} [f(x)g(x)].
\]
For any $T\subseteq [n]$, let 
        \[
\corr(f,\mathcal{J}_{T}) \coloneqq\max_{g \in \mathcal{J}_T} \corr(f, g).
        \]
        For $S \subseteq [n]$ with $|S|\ge k$, let
        \[
\corr(f,\mathcal{J}_{S,k}) \coloneqq\max_{T \subseteq \binom{S}{k}} \corr(f, \mathcal{J}_T),
        \]
          where $\mathcal{J}_T$ and $\mathcal{J}_{S,k}$ are defined as \Cref{def_dist_boolean}.
\end{definition}
By \Cref{def_dist_boolean,def_corr_boolean}, for any $T \subseteq [n]$, we have
\begin{equation}\label{eq_relation_corr_dist}
\corr(f,\mathcal{J}_T) = 1-2\dist(f,\mathcal{J}_T).
\end{equation}

\subsection{Unitary Operators}

Recall that the Pauli operators are defined as
\[
\sigma_0 = \begin{pmatrix} 1 & 0 \\ 0 & 1 \end{pmatrix} = I,\quad
\sigma_1 = \begin{pmatrix} 0 & 1 \\ 1 & 0 \end{pmatrix} = X,\quad
\sigma_2 = \begin{pmatrix} 0 & -i \\ i & 0 \end{pmatrix} = Y,\quad
\sigma_3 = \begin{pmatrix} 1 & 0 \\ 0 & -1 \end{pmatrix} = Z.
\]
$\left\{ \frac{1}{\sqrt{2}} \sigma_0 , \frac{1}{\sqrt{2}} \sigma_1 , \frac{1}{\sqrt{2}} \sigma_2 , \frac{1}{\sqrt{2}} \sigma_3 \right\}$ form an orthonormal basis of the four-dimensional complex vector space to the Hilbert-Schmidt inner product
\[
\langle A, B\rangle
\coloneqq\Tr(A^\dagger B).
\]

For $x \in \{0, 1, 2, 3\}^n \cong \Zfn$, define $\sigma_x  \coloneqq \sigma_{x_1} \otimes \cdots \otimes \sigma_{x_n}$ and $\supp(x)  \coloneqq \{i \in [n] : x_i \neq 0\}$. Let $N\coloneqq 2^n$. It is easy to check that $\left\{ \frac{1}{\sqrt{N}} \sigma_x \right\}_{x \in \Zfn}$ forms an orthonormal basis (called the Pauli basis) of the complex vector space of dimension $4^n$ to the Hilbert-Schmidt inner product. 

The following concepts work for any linear operators, but we only clarify them for unitary operators: under the \emph{Pauli basis}, any unitary operator can be uniquely represented as a multilinear polynomial:
\[
U = \sum_{x \in \Z_4^n} \widehat{U}(x) \sigma_x,
\]
where $\widehat{U}(x) = \frac{1}{N} \langle U, \sigma_x \rangle 
$. It is easy to verify Plancherel's and Parseval's formulas:
\[
\frac{1}{N}
\langle U, V \rangle
= \sum_{x \in \Zfn} \widehat{U}(x)^*  \widehat{V}(x)
\quad \text{and} \quad
\frac{1}{N} \|U\|^2 = \sum_{x \in \Zfn} \left|\widehat{U}(x)\right|^2=1.
\]

\begin{definition}[Influence of qubits]\label{def_unitary_inf}
For an $n$-qubit unitary operator $U$
and any $i \in [n]$, the \emph{influence} of variable $i$ on $U$ is defined as \[
\inf_i[U] \coloneqq \sum_{x \in \Zfn: x_i \neq 0} \abs{\wU(x)}^2.
\]
The \emph{degree-$k$ influence} of variable $i$ on $U$ is defined as
\begin{equation}\label{eq:k-inf-u}
\inf_i^{\le k}[U] \coloneqq \sum_{x \in \Zfn: |x| \le k,x_i \neq 0} \abs{\wU(x)}^2.
\end{equation}
The \emph{influence} of the set $T \subseteq [n]$ on $U$ is defined as
\begin{equation*}
    \inf_T[U] \coloneqq \sum_{\substack{x \in \Zfn \\ \operatorname{supp}(x) \cap T \neq \emptyset}} \abs{\widehat{U}(x)}^2.
\end{equation*}

\end{definition}
\begin{remark}[Monotonicity of influence]
 By \Cref{def_unitary_inf}, for any $S \subseteq T \subseteq [n]$ and $n$-qubit unitary $U$, we have $\inf_S[U] \le \inf_T[U]$.   
\end{remark}

\begin{definition}
    [Partial trace of unitaries]
    For any $n$-qubit unitary $U$ and $S \subseteq [n]$, the partial trace of $U$ on $\overline{S}$ is defined as
    \[
    \Tr_{\overline{S}}\p{U}  \coloneqq
    \sum_{i \in \{0,1\}^{\overline{S}}}\p{I_S \otimes \bra{i}} U \p{I_S \otimes \ket{i}}.
    \]
\end{definition}

\begin{definition}[$k$-junta unitary]\label{def:junta}
A unitary operator $U$ is called a \emph{$k$-junta unitary} if there exists $S \subseteq [n]$ with $|S| = k$ such that
\[
U = U_S \otimes I_{\overline{S}}
\]
for some $k$-qubit unitary $U_S$.
\end{definition}

A $k$-junta Boolean function is a function $f$ for which there are at most $k$ indices $i \in [n]$ such that $\inf_i(f) > 0$. Alternatively, $f$ is a $k$-junta if there exists a set $T \subseteq [n]$ of size $|T| \leq k$ such that $\inf_{\overline{T}}(f) = 0$. The same observations hold for unitary operators.
Here we empoly the following distance for unitary property testing introduced by Chen, Nadimpalli and Yuen~\cite{chen2023testing}.
\begin{definition}[Distance between unitaries]\label{def_dist_unitary}
Let $N\coloneqq 2^n$, for any $n$-qubit unitaries $U,V$,
\begin{equation}\label{eq:dist_UV}
\dist(U,V)  \coloneqq \min_{\theta} \frac{1}{\sqrt{2N}} || e^{i\theta} U-V||_F.
\end{equation}
Let $\mathcal{J}_T$ be the class of junta unitaries on $T$.
Therefore,
        \[
        \dist(U, \mathcal{J}_T ) 
         \coloneqq \min_{V \in \mathcal{J}_T} \dist(U, V).
        \]
        For $S \subseteq [n]$, we define
        \[       \dist(U,\mathcal{J}_{S,k}) \coloneqq\min_{T \in \binom{S}{k}} \dist(U, \mathcal{J}_T),
        \]
        where $\mathcal{J}_{S,k}$ are the class of $k$-junta unitaries on $S$.
        Similar to \Cref{def_dist_boolean}, we abbreviate $\mathcal{J}_{[n],k}$ as $\mathcal{J}_{k}$. 

\end{definition}
\begin{remark}
The distance between unitaries has an equivalent definition as follows. Consider
\begin{equation}\label{eq:UV_F}
    \begin{aligned}
        \frac{1}{N} \norm{U-V}_F^2
        & = \frac{1}{N} \Tr((U-V)^\dagger (U-V)) \\
        &= 2-\frac{1}{N} \Tr(U^\dagger V+V^\dagger U) \\
        &= 2\p{1-\frac{1}{N} \abs{\Tr  \p{U^{\dag}V}}},
        \end{aligned}
        \end{equation}
By \Cref{eq:dist_UV,eq:UV_F}, we have
\begin{equation}\label{eq:dist_U_V}
\dist(U,V) = \sqrt{1-\frac{1}{N}\abs{\Tr(U^{\dag}V)}}.
\end{equation}
\end{remark}

The following propositions demonstrate that any Boolean function can be viewed as a special case of a unitary.

\begin{proposition}[Proposition 9 in \cite{montanaro2008quantum}]\label{prop:fourier_boolean_unitary}
    For Boolean function $f:\{\pm1\}^n \rightarrow \{\pm1\}$, let $U_f$ be a diagonal matrix whose diagonal entries are the $2^n$ values of the function $f$. Then for any $x \in \mathbb{Z}_4^n$,
    \[
    \wU_f(x) = 
    \begin{cases}
        0, & \text{ if }x \notin \{0,3\}^n\\
        \widehat{f}(\operatorname{supp}(x)), & \text{ if } x \in \{0,3\}^n
    \end{cases}
    \]
\end{proposition}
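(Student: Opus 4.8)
The plan is to evaluate the Pauli coefficient $\wU_f(x) = \frac{1}{N}\langle U_f | \sigma_x\rangle = \frac{1}{N}\Tr(U_f^\dagger \sigma_x)$ directly, exploiting the fact that $U_f$ is diagonal, so that only the diagonal entries of $\sigma_x$ can contribute.

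First I would record the elementary identity for a diagonal matrix: writing $U_f = \diag\bigl(f(z)\bigr)_{z \in \Bn}$, for any $N \times N$ matrix $A$ we have $\Tr(U_f^\dagger A) = \sum_{z \in \Bn} \overline{f(z)}\, A[z][z] = \sum_{z \in \Bn} f(z)\, A[z][z]$, the last step because $f$ is $\pm 1$-valued. Taking $A = \sigma_x$ reduces the whole computation to the diagonal entries of $\sigma_x = \sigma_{x_1} \otimes \cdots \otimes \sigma_{x_n}$, which factor coordinatewise as $\sigma_x[z][z] = \prod_{i=1}^{n} \sigma_{x_i}[z_i][z_i]$.

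Next I would split on $x$. Since $\sigma_0 = I$ and $\sigma_3 = Z$ are diagonal while $\sigma_1 = X$ and $\sigma_2 = Y$ have all diagonal entries equal to $0$, the product above vanishes identically whenever some coordinate has $x_i \in \{1,2\}$, i.e.\ whenever $x \notin \{0,3\}^n$; this gives the first case $\wU_f(x) = 0$. When $x \in \{0,3\}^n$, each factor equals $1$ (if $x_i = 0$) or $(-1)^{z_i}$ (if $x_i = 3$, using $Z\ket{b} = (-1)^b\ket{b}$), so $\sigma_x[z][z] = (-1)^{\sum_{i \in \supp(x)} z_i}$. Under the identification of the basis label $z \in \Bn$ with the point $\bigl((-1)^{z_1}, \ldots, (-1)^{z_n}\bigr) \in \{\pm 1\}^n$, this is precisely $\chi_{\supp(x)}$ evaluated at that point, whence $\Tr(U_f^\dagger \sigma_x) = \sum_{z} f(z)\, \chi_{\supp(x)}(z) = N\,\E_z[f\,\chi_{\supp(x)}] = N\,\widehat f(\supp(x))$, and dividing by $N$ yields the second case.

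The only point that requires care is to fix the bijection between computational-basis labels and Boolean inputs consistently, so that the diagonal of a $Z$-type Pauli string is literally a parity character; once that convention is pinned down the argument collapses to a one-line computation, and I anticipate no genuine obstacle.
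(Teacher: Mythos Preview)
Your argument is correct: the trace picks out only diagonal entries of $\sigma_x$, these vanish unless every factor is $\sigma_0$ or $\sigma_3$, and in that case the diagonal of $\sigma_x$ is exactly the parity character $\chi_{\supp(x)}$ under the natural identification $\{0,1\}^n \leftrightarrow \{\pm 1\}^n$. The paper itself does not supply a proof of this proposition---it is quoted as Proposition~9 of \cite{montanaro2008quantum}---so there is nothing to compare against; your direct computation is the standard one and fully rigorous once the basis-labeling convention is fixed, as you note.
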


As a corollary of \Cref{prop:fourier_boolean_unitary}, for any $i \in [n]$ and integer $1 \le k \le n$, 
\begin{equation}\label{eq:influence_boolean_unitary}
\inf_i^{\le k}[f] = \inf_{i}^{\le k}[U_f].
\end{equation}

\begin{proposition}[Equation (4.13) in \cite{chen2023testing}]
    For Boolean functions $f,g:\{\pm 1\}^n \rightarrow \{\pm1\}$, let $U_f$ and $U_g$ be diagonal matrices whose diagonal entries are the $2^n$ values of the function $f$ and $g$ respectively. If $g$ is a $k$-junta, then
    \[
\dist\p{U_f,U_g} = \sqrt{2\dist (f,g)}.
    \]
\end{proposition}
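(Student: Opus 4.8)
The plan is to reduce everything to the closed-form distance identity \eqref{eq:dist_U_V} and then evaluate a single trace by a counting argument. Both $U_f$ and $U_g$ are $N\times N$ diagonal matrices, $N\coloneqq 2^n$, with diagonal entries $f(x),g(x)\in\{\pm 1\}$ indexed by $x\in\{\pm 1\}^n$; in particular each is real, symmetric, and unitary, so $U_f^\dagger=U_f$ and $U_f^\dagger U_g$ is diagonal with $x$-th entry $f(x)g(x)$. Summing the diagonal,
\[
\Tr\p{U_f^\dagger U_g}=\sum_{x\in\{\pm 1\}^n}f(x)g(x)=N\cdot\mathbb{E}_x\!\left[f(x)g(x)\right]=N\,\corr(f,g)=N\p{1-2\dist(f,g)},
\]
where the last equality is the pointwise form of \eqref{eq_relation_corr_dist}, namely $\corr(f,g)=\Pr_x[f=g]-\Pr_x[f\neq g]=1-2\dist(f,g)$.

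Writing $d\coloneqq\dist(f,g)$ and substituting into \eqref{eq:dist_U_V} gives
\[
\dist(U_f,U_g)=\sqrt{\,1-\tfrac{1}{N}\abs{\Tr\p{U_f^\dagger U_g}}\,}=\sqrt{\,1-\abs{1-2d}\,}.
\]
It then remains only to discharge the absolute value: once $1-2d\ge 0$ we obtain $\dist(U_f,U_g)=\sqrt{1-(1-2d)}=\sqrt{2d}=\sqrt{2\dist(f,g)}$, which is exactly the claim. This is precisely where the $k$-junta hypothesis is used. In the intended application $g$ is a \emph{closest} $k$-junta to $f$, and since every constant function is a $0$-junta (hence a $k$-junta) while one of the two constants agrees with $f$ on at least half of $\{\pm 1\}^n$, we get $d=\dist(f,\mathcal{J}_k)\le 1/2$ for free. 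Equivalently, one may always replace $g$ by $-g$, which is still a $k$-junta and satisfies $U_{-g}=-U_g$; by the global-phase minimization built into \Cref{def_dist_unitary} this leaves $\dist(U_f,U_g)$ unchanged, so one may assume $d\le 1/2$ without loss of generality.

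The only delicate point is thus the bookkeeping around $\abs{1-2d}$ --- every other step is a one-line identity --- so a careful write-up should make explicit in what sense $d\le 1/2$ is guaranteed, noting in particular that without such an assumption the stated equality can fail (for $d>1/2$ the right-hand side may exceed $1$, whereas $\dist(U_f,U_g)\le 1$ always). Beyond that I would merely verify the routine facts invoked: that $U_f$ being unitary is consistent with $f$ taking values in $\{\pm 1\}$, that the trace of a diagonal matrix is the sum of its diagonal entries, and that the optimization over the phase $\theta$ in \eqref{eq:dist_UV} has already been performed in passing from that definition to the clean form \eqref{eq:dist_U_V}, so no further phase optimization is needed. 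With these in hand the identity $\dist(U_f,U_g)=\sqrt{2\dist(f,g)}$ follows immediately.
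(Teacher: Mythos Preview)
The paper does not supply its own proof of this proposition --- it is quoted from \cite{chen2023testing} --- so there is nothing to compare your argument against. Your computation is the natural one and is correct up to and including
\[
\dist(U_f,U_g)=\sqrt{\,1-\abs{1-2d}\,},\qquad d=\dist(f,g).
\]

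Where your write-up goes wrong is the ``without loss of generality $d\le 1/2$'' step via $g\mapsto -g$. That substitution keeps $\dist(U_f,U_g)$ fixed but changes $d$ to $1-d$, so it does \emph{not} reduce the original claim for $g$ to the claim for $-g$: the two right-hand sides $\sqrt{2d}$ and $\sqrt{2(1-d)}$ are different. In fact this argument proves exactly what you then admit in the next paragraph, namely that the identity as stated is \emph{false} when $d>1/2$ (e.g.\ $f\equiv 1$, $g\equiv -1$ gives $d=1$, $\sqrt{2d}=\sqrt{2}$, but $\dist(U_f,U_g)=0$). The hypothesis ``$g$ is a $k$-junta'' plays no role in the computation and does not by itself force $d\le 1/2$.

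So your analysis is correct, but the conclusion should be stated cleanly: the identity holds if and only if $\dist(f,g)\le 1/2$, and this extra hypothesis is automatic in the paper's intended use (where $g$ realizes $\dist(f,\mathcal{J}_k)$, hence $d\le 1/2$ by comparison with a constant function). Drop the ``WLOG'' phrasing and instead flag that the proposition, as literally stated, needs $\dist(f,g)\le 1/2$.
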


\begin{definition}[The \emph{Choi-Jamio{\l}kowski state} \cite{jamiolkowski1972linear, choi1975completely}]\label{def_choi_state} 
    Let $N \coloneqq 2^n$. The \emph{Choi-Jamio{\l}kowski state} of an $n$-qubit unitary $U$ is defined as
  \begin{equation*}
  \ket{v(U)}  \coloneqq (U \otimes I) \p{\frac{1}{\sqrt{N}} \sum_{0 \leq i < N} \ket{i}\ket{i}}  = \frac{1}{\sqrt{N}} \sum_{0 \leq i,j < N} U[i,j] \, \ket{i} \ket{j}.	
  \end{equation*}
  Since $U = \sum_{x\in \Zfn} \wU(x)\sigma_x$, we have $\ket{v(U)} = \sum_{x\in \Zfn} \wU(x)\ket{v(\sigma_x)}$. It is not hard to check that $\lrs{\ket{v(\sigma_x)}}_{x \in \Zfn}$ is an orthonormal basis.      
\end{definition}

\section{High-Influence Coordinate Extractor}\label{sec:ce}
In this section, we propose a coordinate extractor (\Cref{alg:ce}) to 
find a high-influence coordinate set $S \subseteq [n]$ such that $|S| = O(k^2)$, which is a key preprocessing step in the tolerant junta testing algorithms (\Cref{alg:tester,alg:tester2}). 
As a result, the number of coordinates we need to consider is reduced from $n$ to $O(k^2)$, making the subsequent tolerant testing step independent of the input size $n$.

\subsection{Algorithm and Analysis}

The Fourier Sampling Algorithm (Algorithm \ref{alg:fs}) is a core component of \Cref{alg:ce}. The state preparation step of Algorithm \ref{alg:fs} is inspired by  \cite{chen2023testing}, where $N \coloneqq 2^n$. Each time calling Algorithm \ref{alg:fs}, we can sample $x \in \Zfn$ with probability $|\wU(x)|^2$. Using Algorithm \ref{alg:fs} as a subroutine,  Algorithm \ref{alg:ce} can find high-influence coordinates with high probability as shown in \Cref{lemma:coor-ext}. 

\begin{algorithm}
\caption{\algname{Fourier-Sample}$(U)$}
\label{alg:fs}

\begin{algorithmic}[1]
\Input{Oracle access to $n$-qubit unitary $U$}	

\Output{$x\in \Zfn$ with probability $|\wU(x)|^2$}

\item Prepare the Choi-Jamiolkowski state
\[
\ket{v(U)} = (U \otimes I) \p{\frac{1}{\sqrt{N}} \sum_{0 \leq i < N} \ket{i}\ket{i}} 
\]
by querying $U$ once on the maximally entangled state;
	
\item Measure $\ket{v(U)}$ in the orthonormal basis $\{\ket{v(\sigma_x)}\}_{x \in \Zfn}$ as \Cref{def_choi_state}; 

\item Return the measurement result $x$.
\end{algorithmic}
\end{algorithm}

\begin{algorithm}
\caption{\algname{Coordinate-Extractor}$(U,k,\tau)$}
	\label{alg:ce}

\begin{algorithmic}[1]
\Input{Oracle access to $n$-qubit unitary $U$,  integer $1 \le k < n$, $0 < \tau < 1$}

\Output{$S \subseteq [n]$}

\item Initialize $e_i  \coloneqq 0$ for all $i \in [n]$, $\vep  \coloneqq 1/2$, $\delta  \coloneqq 0.01$;
\item Repeat the following for  $M \coloneqq\frac{2k}{\varepsilon^2\tau^2}\log\frac{k^2}{\delta \tau^2}$ times:
\begin{itemize}
	\item Sample $x \in \Zfn$ by running \algname{Fourier-Sample}$(U)$; 
	\item If $|x| \le k$, for all $i \in [n]$ s.t. $x_i \neq 0$, $e_i \leftarrow e_i+1$;   
\end{itemize}

\item Return $S  \coloneqq \lrs{i \in \lrb{n}: e_i \ge (1-\varepsilon)\frac{M\tau^2}{k}}$.
\end{algorithmic}
\end{algorithm}

\begin{lemma}\label{lemma:coor-ext}
	In \Cref{alg:ce}, $|S| \le {2k^2}/{\tau^2}$ and the probability of $S \supseteq \lrs{i \in \lrb{n}: \inf_i^{\le k}[U] \ge {\tau^2}/{k}}$ is at least $0.99$.
\end{lemma}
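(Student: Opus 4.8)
The plan is to analyze \Cref{alg:ce} via a standard Chernoff/union-bound argument, treating each call to \algname{Fourier-Sample} as an independent sample of $x \in \Zfn$ drawn with probability $|\wU(x)|^2$. For a fixed coordinate $i \in [n]$, let $p_i \coloneqq \Pr_x[\,|x| \le k \text{ and } x_i \neq 0\,] = \inf_i^{\le k}[U]$. Then $e_i$ after the loop is a sum of $T$ i.i.d. Bernoulli$(p_i)$ random variables, so $\E[e_i] = T p_i$. The returned set is $S = \{i : e_i \ge (1-\vep)T\tau^2/k\}$ with $\vep = 1/2$.

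First I would prove the completeness direction: if $\inf_i^{\le k}[U] \ge \tau^2/k$, then $\E[e_i] = T p_i \ge T\tau^2/k$, and by a multiplicative Chernoff bound the probability that $e_i < (1-\vep)\E[e_i] \le (1-\vep)T\tau^2/k$ is at most $\exp(-\vep^2 \E[e_i]/2) \le \exp(-\vep^2 T\tau^2/(2k))$. Plugging in $T = \frac{2k}{\vep^2\tau^2}\log\frac{k^2}{\delta\tau^2}$ makes this bound at most $\delta\tau^2/k^2 \le \delta \le 0.01$ per coordinate. Since there are at most $k^2/\tau^2$ coordinates with $\inf_i^{\le k}[U] \ge \tau^2/k$ (because $\sum_i \inf_i^{\le k}[U] \le k$, as each Fourier mass $|\wU(x)|^2$ with $|x| \le k$ is counted at most $k$ times), a union bound over just those coordinates gives failure probability at most $(k^2/\tau^2)\cdot(\delta\tau^2/k^2) = \delta = 0.01$, so $S$ contains all high-influence coordinates with probability at least $0.99$.

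Next I would bound $|S|$ deterministically (i.e., for every outcome of the randomness). Every $i \in S$ satisfies $e_i \ge (1-\vep)T\tau^2/k = T\tau^2/(2k)$. Summing over $i \in S$: $\sum_{i \in S} e_i \ge |S| \cdot T\tau^2/(2k)$. On the other hand, $\sum_{i \in [n]} e_i = \sum_{t=1}^T |\supp(x^{(t)})| \cdot \mathbb{1}[|x^{(t)}| \le k] \le T k$, since each recorded sample has support size at most $k$. Combining, $|S| \cdot T\tau^2/(2k) \le Tk$, hence $|S| \le 2k^2/\tau^2$, which holds with probability $1$.

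The argument is essentially routine; the only mild subtlety — and the step I would be most careful about — is making sure the Chernoff exponent and the choice of $T$ line up so that a union bound over the (at most $k^2/\tau^2$) relevant coordinates, rather than over all $n$ coordinates, suffices; this is exactly why the threshold in $T$ is $\log(k^2/(\delta\tau^2))$ rather than $\log(n/\delta)$, and it is what keeps the query complexity independent of $n$. One should also double-check that the one-sided Chernoff bound is applied in the correct direction (lower deviation of $e_i$ below its mean for completeness) and that $\E[e_i]$ could a priori exceed $T\tau^2/k$, in which case the deviation event $e_i < (1-\vep)T\tau^2/k$ is only harder, so the bound still goes through after replacing $\E[e_i]$ by the smaller quantity $T\tau^2/k$ inside the exponent.
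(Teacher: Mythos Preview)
Your proposal is correct and matches the paper's proof essentially step for step: the deterministic size bound via $\sum_i e_i \le kT$, the Chernoff lower-tail bound on each $e_i$, the observation $\sum_i \inf_i^{\le k}[U] \le k$ to limit the union bound to at most $k^2/\tau^2$ coordinates, and the resulting failure probability $\delta=0.01$. The only cosmetic slip is the inequality direction in ``$(1-\vep)\E[e_i] \le (1-\vep)T\tau^2/k$'' early on (it should be $\ge$), but you explicitly patch this in your final paragraph, so the argument is sound.
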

\begin{proof}

After running Step 2 of \Cref{alg:ce}, $\sum_{i \in [n]}e_i \le Mk$ because of the second bullet in Step 2. Thus, the number of $i$ such that 
\[e_i \ge (1-\varepsilon)\frac{M\tau^2}{k}
\]
is at most 
\[
\frac{k^2}{(1-\varepsilon)\tau^2} = \frac{2k^2}{\tau^2},
\]
which implies $|S| \le 2k^2/ \tau^2$.   

In each round of Step 2, 
$x \in \Zfn$ is sampled with probability $|\widehat{U}(x)|^2$. So combined with \Cref{eq:k-inf-u}, we know that for each $i$, the probability that $e_i$ increases by $1$ is 
\[
 \sum_{x \in \Zfn: |x| \le k,x_i \neq 0} |\wU(x)|^2 = \inf_i^{\le k}[U].
\]
Let $p_i  \coloneqq \inf_i^{\le k}[U]$. Then $e_i$  satisfies the binomial distribution $B(M,p_i)$. By the Chernoff bound, for those $i$ such that $p_i \ge {\tau^2}/{k}$,
\[
\begin{aligned}
\Prob{e_i \le (1-\varepsilon)\frac{M\tau^2}{k}} 
 &\le \Prob{\frac{e_i}{M} \le (1-\varepsilon)p_i} \\
 &\le \exp\left(-Mp_i\varepsilon^2/2\right)\\
 &\le 
\exp\left(-\log\frac{k^2}{\delta \tau^2}\right) \\
 &\le \frac{\delta\tau^2}{k^2}.
\end{aligned}
\]
Since 
\[
\begin{aligned}
	\sum_{i\in [n]}p_i &= \sum_{i \in [n]}\sum_{x \in \Zfn: |x| \le k,x_i \neq 0} |\wU(x)|^2 \\
	&\le k\sum_{x \in \Zfn: |x| \le k} |\wU(x)|^2 \le k,
\end{aligned}
\] 
the number of $i$ satisfying $p_i \ge {\tau^2}/{k}$ is at most ${k^2}/{\tau^2}$.
By the union bound, 
\[
\Prob{S \supseteq \lrs{i \in \lrb{n}: \inf_i^{\le k}[U] \ge \frac{\tau^2}{k}}} \ge 1-\frac{k^2}{\tau^2}\cdot \frac{\delta\tau^2}{k^2} = 1-\delta = 0.99.
\]
\end{proof}

\subsection{Properties of High-Influence Coordinates}
In the following, we present some properties of high-influence coordinates. These properties are useful to tolerant junta testing algorithms in \Cref{sec:gap,sec:gapless}, respectively.

\paragraph{Influence Approximation}
As shown in \Cref{lemma_inf_appro}, for a set $T'$ with size no more than $k$, if $T \subseteq T'$ contains high-influence coordinates of $T'$, then $\inf_{\overline{T'}}[U]$ can be approximated by $\inf_{\overline{T}}[U]$. \Cref{lemma_inf_appro} will be applicable in the tolerant junta tester with constant factor gap (\Cref{alg:tester}) in \Cref{sec:gap}.

In the following, we will 
use \Cref{lemma_inf_plus} to prove \Cref{lemma_inf_appro}.

\begin{lemma}
 \label{lemma_inf_plus}
 For any $n$-qubit unitary $U$, if $T' \subseteq [n]$ satisfies $|T'| \le k$, then for any $i \in T'$, 
 \[
 \inf_{\overline{T'} \cup \{i\}}[U] \le \inf_{\overline{T'}}[U]+\inf_i^{\le k}[U].
 \]
\end{lemma}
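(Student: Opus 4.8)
The statement to prove is \Cref{lemma_inf_plus}: for any $n$-qubit unitary $U$, if $T \subseteq [n]$ satisfies $|T| \ge n-k$, then for any $i \notin T$, $\inf_{T \cup \{i\}}[U] \le \inf_{T}[U]+\inf_i^{\le k}[U]$.

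\textbf{Plan of proof.} The plan is to work directly from the Fourier/Pauli-weight definitions in \Cref{def_unitary_inf} and split the Pauli strings contributing to $\inf_{T\cup\{i\}}[U]$ according to whether their support already meets $T$. First I would write
\[
\inf_{T\cup\{i\}}[U] = \sum_{\substack{x\in\Zfn \\ \supp(x)\cap(T\cup\{i\})\neq\emptyset}} \abs{\wU(x)}^2,
\]
and observe that the set of $x$ with $\supp(x)\cap(T\cup\{i\})\neq\emptyset$ is the union of $\{x : \supp(x)\cap T\neq\emptyset\}$ (which contributes exactly $\inf_T[U]$) and the ``leftover'' set $L \coloneqq \{x : \supp(x)\cap T=\emptyset,\ x_i\neq 0\}$. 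Since these two sets are disjoint, $\inf_{T\cup\{i\}}[U] = \inf_T[U] + \sum_{x\in L}\abs{\wU(x)}^2$, so it suffices to bound $\sum_{x\in L}\abs{\wU(x)}^2 \le \inf_i^{\le k}[U]$.

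\textbf{Key step: using the degree bound.} The whole point of the hypothesis $|T|\ge n-k$ is exactly this: if $x$ satisfies $\supp(x)\cap T=\emptyset$, then $\supp(x)\subseteq \overline{T}$, and since $\abs{\overline{T}} = n-|T| \le k$, we automatically have $|x| = |\supp(x)| \le k$. Hence every $x\in L$ also satisfies $|x|\le k$ and $x_i\neq 0$, so $L \subseteq \{x\in\Zfn : |x|\le k,\ x_i\neq 0\}$, and therefore
\[
\sum_{x\in L}\abs{\wU(x)}^2 \le \sum_{\substack{x\in\Zfn \\ |x|\le k,\ x_i\neq 0}}\abs{\wU(x)}^2 = \inf_i^{\le k}[U].
\]
Combining this with the decomposition above gives the claim.

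\textbf{Main obstacle.} There is essentially no hard step here; the only thing to be careful about is the bookkeeping of which Pauli strings are double-counted or dropped, and making explicit where the degree hypothesis $|T|\ge n-k$ enters (namely to convert a ``disjoint from $T$'' condition into a ``low-degree'' condition). One should also note in passing that the inequality can be strict because $L$ may be a proper subset of $\{x : |x|\le k,\ x_i\neq 0\}$ — any low-degree $x$ through $i$ whose support also touches $T$ is counted in $\inf_i^{\le k}[U]$ but not in the leftover term — which is consistent with the stated inequality being one-directional.
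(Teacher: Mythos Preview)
Your proposal is correct and follows essentially the same argument as the paper: split the sum defining $\inf_{T\cup\{i\}}[U]$ into the part where $\supp(x)\cap T\neq\emptyset$ and the leftover part where $\supp(x)\cap T=\emptyset$ and $x_i\neq 0$, then use $|T|\ge n-k$ to force $|\supp(x)|\le k$ on the leftover and bound it by $\inf_i^{\le k}[U]$. Your write-up is slightly more explicit about where the degree hypothesis enters and why the inequality may be strict, but the proof is the same.
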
  
\begin{proof}
\[
    \begin{aligned}
      \inf_{\overline{T'} \cup \{i\}}[U] &= \sum_{\substack{x \in \Zfn\\ \supp(x)\cap (\overline{T'} \cup \{i\}) \neq \emptyset }} |\widehat{U}(x)|^2 \\
      &= \sum_{\substack{x \in \Zfn\\ \supp(x)\cap \overline{T'} \neq \emptyset }} |\widehat{U}(x)|^2+\sum_{\substack{x \in \Zfn\\ \supp(x)\cap \overline{T'} = \emptyset \\ \supp(x) \cap \{i\} \neq \emptyset } } |\widehat{U}(x)|^2 \\
      &\le \inf_{\overline{T'}}[U]+\sum_{\substack{x \in \Zfn\\ |\supp(x)| \le k \\ x_i \neq 0 }} |\widehat{U}(x)|^2\\
      &= \inf_{\overline{T'}}[U]+\inf_i^{\le k}[U].
    \end{aligned}
    \]
\end{proof}

\begin{lemma}\label{lemma_inf_appro}
For any $n$-qubit unitary $U$, let $S\subseteq [n]$ satisfying
\[
S \supseteq \lrs{i \in [n]: \inf_i^{\le k}[U] \ge {\delta}/{k}},
\] 
then for any $T'\subseteq [n]$, $\abs{T'} \le k$, let $T = T'\cap S$, we have
\[
\inf_{\overline{T}}[U]  \le \inf_{\overline{T'}}[U]+\delta.
\]
\end{lemma}
\begin{proof}
By \Cref{lemma_inf_plus}, we have
    \begin{equation*}
        \begin{aligned}
        \inf_{\overline{T}}[U] 
        &\le \inf_{\overline{T'}}[U] + 
          \sum_{i\in T'\backslash T}\inf_i^{\le k}[U]\\
        &\le \inf_{\overline{T'}}[U] + k\cdot \f{\delta}{k}\\
        &= \inf_{\overline{T'}}[U] + \delta.
        \end{aligned}
    \end{equation*}
\end{proof}

\paragraph{Distance Approximation}
\Cref{lemma_high_influence_corr} shows the role of high-influence coordinates in approximating the correlation between Boolean functions and $k$-juntas. We generalize \Cref{lemma_high_influence_corr} to unitaries as \Cref{lemma_high_influence_dist}.
\begin{lemma}[Claim 29 in \cite{iyer2021junta}]\label{lemma_high_influence_corr}
For Boolean function $f:\{\pm1\}^n\rightarrow \{\pm1\}$ and subset $T \subseteq [n]$ with size $k$, let 
\[
S  \coloneqq \lrs{i \in T: \inf_i^{\le k}[f] \ge \f{\tau^2}{k}}, 
\]
where $\tau > 0$. Let $g:\{\pm1\}^n\rightarrow \{\pm1\}$ be a $k$-junta on $T \subseteq [n]$. Then  $\corr(f,\mathcal{J}_S) \ge \corr(f,g)-\tau$.   
\end{lemma}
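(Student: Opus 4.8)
The plan is to bound the loss incurred when we replace the optimal $k$-junta $g$ on $T$ by its best approximation among juntas on the smaller set $S \subseteq T$. Write $W \coloneqq T \setminus S$; by definition of $S$, every coordinate $i \in W$ has small degree-$k$ influence, $\inf_i^{\le k}(f) < \tau^2/k$. The natural candidate junta on $S$ is obtained from $g$ by ``averaging out'' the coordinates in $W$: define $h \colon \{\pm1\}^n \to \{\pm1\}$ by $h(x) \coloneqq \mathrm{sgn}\big(\E_{y}[g(x^{\overline{W}} y^{W})]\big)$ (breaking ties arbitrarily), or more directly work with the real-valued average and truncate at the end. First I would show that $\corr(f,h)$ is close to $\corr(f,g)$, and then conclude $\corr(f,\mathcal{J}_S) \ge \corr(f,h) \ge \corr(f,g) - \tau$ since $h \in \mathcal{J}_S$.

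The key step is the following Fourier computation. Since $g$ is a junta on $T$, its Fourier support is contained in subsets of $T$; partition that support into sets disjoint from $W$ (these survive the averaging over $W$) and sets intersecting $W$ (these are killed). Let $g^{=S}$ denote the part of $g$ supported on subsets of $S$. Then
\[
\corr(f,g) - \corr(f, g^{=S}) = \sum_{\substack{R \subseteq T \\ R \cap W \neq \emptyset}} \widehat{f}(R)\,\widehat{g}(R).
\]
By Cauchy--Schwarz this is at most $\sqrt{\sum_{R \cap W \neq \emptyset}\widehat{f}(R)^2}\cdot\sqrt{\sum_{R \cap W \neq \emptyset}\widehat{g}(R)^2} \le \sqrt{\sum_{R \cap W \neq \emptyset}\widehat{f}(R)^2}$, using Parseval on $g$. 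Now every such $R$ lies in $T$, so $|R| \le k$, hence $R$ contributes to $\inf_i^{\le k}(f)$ for some $i \in W$; a union bound gives $\sum_{R \cap W \neq \emptyset}\widehat{f}(R)^2 \le \sum_{i \in W}\inf_i^{\le k}(f) \le |W|\cdot \tau^2/k \le \tau^2$. Therefore $\corr(f,g) - \corr(f,g^{=S}) \le \tau$.

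The remaining point is to pass from the real-valued truncation $g^{=S}$ back to a genuine $\{\pm1\}$-valued junta on $S$ without losing more correlation: since $g^{=S}(x) = \E_y[g(x^{\overline W} y^W)] \in [-1,1]$ depends only on $x_S$, rounding each value to $\pm1$ according to its sign only increases the inner product with $f$ after taking the appropriate conditional expectation — or, more cleanly, note $\corr(f,g^{=S}) = \E_{x_S}[\widetilde f(x_S)\, g^{=S}(x_S)]$ where $\widetilde f(x_S) \coloneqq \E_{x_{\overline S}}[f(x)]$, and the $\{\pm1\}$-valued $h(x_S) \coloneqq \mathrm{sgn}(\widetilde f(x_S))$ satisfies $\corr(f,h) = \E_{x_S}|\widetilde f(x_S)| \ge \E_{x_S}[\widetilde f(x_S) g^{=S}(x_S)] = \corr(f,g^{=S})$. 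Combining, $\corr(f,\mathcal{J}_S) \ge \corr(f,h) \ge \corr(f,g^{=S}) \ge \corr(f,g) - \tau$. The main obstacle is getting the rounding step exactly right — it is tempting to claim $g^{=S}$ itself is ``almost'' a junta and wave hands, but one must be careful that the sign-rounding is measured against the correct conditional expectation $\widetilde f$ of $f$, not against $g$; once that is set up the inequality is immediate.
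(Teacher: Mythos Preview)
Your proof is correct. The paper does not give its own proof of this lemma (it is cited from \cite{iyer2021junta}), but the paper's proof of the unitary analogue, \Cref{lemma_high_influence_dist}, follows exactly the same three-step template you use: project the junta $g$ onto $S$ by averaging out $W=T\setminus S$ (the paper writes this as a partial trace $V'=\frac{1}{2^{|\overline S|}}\Tr_{\overline S}(V)\otimes I_{\overline S}$), bound the loss $\corr(f,g)-\corr(f,g^{=S})$ via Cauchy--Schwarz and the union bound $\sum_{i\in W}\inf_i^{\le k}\le\tau^2$, and then round the averaged object back to a legitimate junta on $S$ (your sign-rounding against $\widetilde f$; in the unitary case this is done by writing $\Sigma$ as a convex combination of diagonal $\pm1$ matrices). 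So your argument is essentially the Boolean specialization of the paper's own method.
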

\begin{remark}
    In \Cref{lemma_high_influence_corr}, since for any $k$-junta function $g$ on $T$, $\corr(f,\mathcal{J}_S) \ge \corr(f,g)-\tau$, it further implies 
    \[
    \corr(f,\mathcal{J}_S) \ge \corr(f,\mathcal{J}_T)-\tau.
    \]
    By \Cref{eq_relation_corr_dist}, we have \[
\dist(f,\mathcal{J}_S) \le \dist(f,\mathcal{J}_T)+\tau/2.
    \]
\end{remark}
\begin{lemma}[Unitary version of \Cref{lemma_high_influence_corr}] 
\label{lemma_high_influence_dist}
  Let $U$ be an $n$-qubit unitary, $T \subseteq [n]$ with $|T| = k$ and $\tau > 0$. If $S \subseteq [n]$ satisfies 
    \begin{equation*}
     S \supseteq\lrs{i\in T : \inf_{i}^{\le k}\lrb{U} \ge \f{\tau^2}{k}},
\end{equation*}
      then
$\dist\p{U,\mathcal{J}_S} \le \dist\p{U,\mathcal{J}_T}+\sqrt{\tau}$.
\end{lemma}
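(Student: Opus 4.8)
The plan is to reduce the unitary statement to the same kind of influence-shifting argument that underlies the Boolean case (Lemma~\ref{lemma_high_influence_corr}), but now working directly with the Frobenius/nuclear-norm formulation of distance via partial trace rather than with Fourier correlation. First I would fix an optimal $k$-junta $V \in \mathcal{J}_T$ achieving $\dist(U,\mathcal{J}_T) = \dist(U,V)$, and record that, because $V$ acts trivially outside $T$, all Pauli weight of $V$ sits on strings supported inside $T$. The goal is to produce a $k$-junta $W \in \mathcal{J}_S$ (so $W$ acts trivially outside $S$, and $\abs{S} \ge \abs{T}$ may exceed $k$, but we may further restrict to a $k$-subset once the estimate is in hand) with $\dist(U,W)$ not much larger than $\dist(U,V)$.

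The key step is the construction of $W$ from $V$ by ``averaging out'' the coordinates in $T \setminus S$. Concretely, for $i \in T \setminus S$ we have $\inf_i^{\le k}[U] < \tau^2/k$, and I would like to argue that $U$ itself is close to some unitary that does not depend on those coordinates; equivalently, that replacing $U$ on the coordinates $T \setminus S$ by its partial-trace-normalized projection changes it by little in Frobenius norm, with the change controlled by $\sum_{i \in T\setminus S}\inf_i[U]$. The natural device is the projection onto Pauli strings whose support avoids $T \setminus S$: writing $P$ for this projector on the Pauli-coefficient space, $\norm{U - P(U)}^2 = \sum_{\supp(x)\cap(T\setminus S)\neq\emptyset}\abs{\wU(x)}^2 \le \sum_{i\in T\setminus S}\inf_i[U]$. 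However $P(U)$ is generally not unitary, so one must pass to the closest unitary (polar factor) and pay another factor; this is exactly where a square root enters, which matches the $\sqrt{\tau}$ loss in the statement. An alternative and cleaner route, parallel to the remark after Lemma~\ref{lemma_high_influence_corr}, is: take the optimal $V$ on $T$, and for each $i \in T\setminus S$ replace the dependence of $V$ on qubit $i$ by the identity action (a partial trace / twirl over qubit $i$, suitably renormalized); bound the incremental distance at each step by something like $\sqrt{\inf_i[U]} + \sqrt{\inf_i[V - U]}$-type quantities, and then invoke $\inf_i^{\le k}[U] < \tau^2/k$ together with $\abs{T\setminus S}\le k$ to get a total of $\sqrt{k \cdot \tau^2/k} = \tau$, absorbed into $\sqrt\tau$ after accounting for the unitarity-restoration loss.

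I would therefore carry out the steps in this order: (1) set up the optimal $V\in\mathcal{J}_T$ and translate $\dist(U,V)$ into $1 - \frac1N\abs{\Tr(U^\dagger V)}$ via \eqref{eq:dist_U_V}; (2) define the projector $P$ onto Pauli strings avoiding $T\setminus S$ and show $\norm{(\mathrm{id}-P)(U)}$ and $\norm{(\mathrm{id}-P)(V)}$ are both small --- the first by the degree-$k$ influence hypothesis and Parseval, the second by the junta structure of $V$ combined with the fact that $V$ is close to $U$; (3) let $W$ be the unitary polar factor of $P(V)$ (which lies in $\mathcal{J}_{S}$), and bound $\dist(V,W)$ by $O(\norm{(\mathrm{id}-P)(V)}^{1/2})$ using the standard estimate that the distance from an operator to its polar factor is controlled by the square root of its defect from unitarity; (4) combine via the triangle inequality for $\dist(\cdot,\cdot)$ (which holds up to the usual constants for this normalized Frobenius metric) to get $\dist(U,W)\le \dist(U,V) + \sqrt\tau$, and finally restrict $W$ to act on a size-$k$ subset of $S$ without increasing the distance (since $W$ already has at most $k$ active qubits, all inside $S$).

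The main obstacle I expect is step (3): controlling the error incurred when projecting a unitary onto a Pauli subspace and then re-unitarizing. The projection is linear and easy to bound in Frobenius norm, but $P(V)$ need not be unitary, and getting the right power ($\sqrt{\cdot}$, not the trivial first power) in $\dist(V, \text{polar}(P(V)))$ requires the perturbation bound for polar decomposition rather than a crude argument; one has to check that the relevant operator norms (not just Frobenius norms) behave, possibly by first arguing that $\frac1N\norm{(\mathrm{id}-P)(V)}_F^2$ small forces the singular values of $P(V)$ to be uniformly close to $1$. Getting the constants to land exactly at $\sqrt\tau$ (rather than, say, $C\sqrt\tau$) may force a slightly different bookkeeping --- e.g.\ using $\dist^2$ additivity across the coordinate-by-coordinate replacement rather than a single global projection --- and I would fall back on the iterative version of step (2)--(3), one qubit of $T\setminus S$ at a time, if the global bound is off by a constant.
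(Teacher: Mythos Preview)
Your plan has a real gap, and it is not where you flag it. The problematic step is (2)/(4): you want to bound $\norm{(\mathrm{id}-P)(V)}$, i.e.\ the Pauli weight of the optimal junta $V\in\mathcal{J}_T$ on strings touching $T\setminus S$, and then invoke the triangle inequality $\dist(U,W)\le \dist(U,V)+\dist(V,W)$. But the hypothesis $\inf_i^{\le k}[U]<\tau^2/k$ for $i\in T\setminus S$ constrains the weight of $U$ on those coordinates, not of $V$. There is no reason $V$ should have small weight there, and your fallback ``$V$ is close to $U$'' gives nothing: $\dist(U,V)=\dist(U,\mathcal{J}_T)$ may be arbitrarily close to $1$, and even when it is small, closeness in Frobenius norm does not transfer low influence of a specific coordinate. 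The iterative variant has the same defect: the term $\inf_i[V-U]$ (or $\inf_i[V]$) is simply uncontrolled.

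The paper avoids this by never estimating $\norm{V-V'}$. With $V'=\tfrac{1}{2^{|\overline S|}}\Tr_{\overline S}(V)\otimes I_{\overline S}$ (which is exactly your $P(V)$, since $\widehat{V'}(x)=\widehat{V}(x)\cdot\mathbf{1}[\supp(x)\subseteq S]$), it compares the \emph{correlations} directly:
\[
\tfrac{1}{N}\Bigl|\abs{\Tr(UV^\dagger)}-\abs{\Tr(UV'^\dagger)}\Bigr|
\le\Bigl|\sum_{\substack{\supp(x)\subseteq T\\ \supp(x)\nsubseteq S}}\widehat U(x)\overline{\widehat V(x)}\Bigr|
\le\Bigl(\sum_{i\in T\setminus S}\inf_i^{\le k}[U]\Bigr)^{1/2}\le\tau,
\]
where Cauchy--Schwarz absorbs the $V$-factor via Parseval and only the $U$-factor needs the hypothesis. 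For the unitarization, the polar factor of $V'$ is also the wrong object (it maximizes $\mathrm{Re}\,\Tr(V'^\dagger W)$, not $\abs{\Tr(UW^\dagger)}$). Instead, since the singular values of $V'$ lie in $[0,1]$, one writes $V'=\sum_a p_a V_a$ as a convex combination of unitaries $V_a\in\mathcal{J}_S$; then $\abs{\Tr(UV'^\dagger)}\le\sum_a p_a\abs{\Tr(UV_a^\dagger)}$ forces some $V_b$ with $\tfrac{1}{N}\abs{\Tr(UV_b^\dagger)}\ge\tfrac{1}{N}\abs{\Tr(UV^\dagger)}-\tau$, hence $\dist^2(U,V_b)\le\dist^2(U,V)+\tau$ and the $\sqrt\tau$ loss falls out immediately.
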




\begin{proof}
Let $V$ be a $k$-junta unitary on $T$ such that $\dist(U,V) = \dist(U,\mathcal{J}_T)$.
  We define 
  \[
  V'  \coloneqq\f{1}{2^{|\overline{S}|}} \Tr_{\overline{S}}\p{V}\otimes I_{\overline{S}}.
  \]
Let $N \coloneqq 2^n$. Since $V = \sum_{y \in \Zfn} \widehat{V}(y) \sigma_y$, for any $x \in \Zfn$, we have
  \[
  \begin{aligned}
  \widehat{V'}(x) &= \f{1}{N}\langle V', \sigma_x \rangle \\
  &= \f{1}{N }\sum_{y \in \Zfn}\widehat{V}(y)\langle \Tr_{\overline{S}}\p{\sigma_y} \otimes I_{\overline{S}} | \sigma_x \rangle.\\
  \end{aligned}
  \]
Note that
\[
\f{1}{N }\langle \Tr_{\overline{S}}\p{\sigma_y} \otimes I_{\overline{S}} | \sigma_x \rangle=
\begin{cases}
    1, &\text{if }\supp(y) \subseteq S\text{ and }y=x\\
    0, &\text{otherwise}
\end{cases}.
\]
  Thus, we have
  \begin{equation*}
    \widehat{V'}(x) = \begin{cases}
      \widehat{V}(x),  & \text{if } \supp(x)\subseteq S    \\
      0,               & \text{otherwise}
    \end{cases}.
  \end{equation*}
  Therefore,
  \begin{equation}  \label{eq:tr_UV_tau}
  \begin{aligned}
    \f{1}{N}\left|\left|\Tr\p{UV^\dagger}\right| - \left|\Tr\p{UV'^\dagger}\right|\right| &\le 
      \f{1}{N}\left|\Tr\p{UV^\dagger} - \Tr\p{UV'^\dagger}\right| \\
      &= \abs{\sum_{\substack{x\in \Zfn \\ \supp(x)\subseteq T \\ \supp(x) \nsubseteq S}} 
        \widehat{U}(x) \overline{\widehat{V}(x)}} \\
    &\le \sqrt{\sum_{\substack{x\in \Zfn \\ \supp(x)\subseteq T \\ \supp(x) \nsubseteq S}} 
      \abs{\widehat{U}(x)}^2} \\
    &\le \sqrt{\sum_{i\in T \backslash S}
      \sum_{\substack{x\in \Zfn \\ \supp(x)\subseteq T \\ \supp(x) \ni i}} 
      \abs{\widehat{U}(x)}^2} \\
    &\le \sqrt{\sum_{i\in T\backslash S} \inf_i^{\le k} [U]} \\
    &\le \sqrt{k\cdot \f{\tau^2}{k}} = \tau.
  \end{aligned}
  \end{equation}
Note that the spectral norm of $\Tr_{\overline{S}}(V)$ is 
\[
\begin{aligned}
    ||\Tr_{\overline{S}}(V)|| &= \norm{\sum_{i \in \{0,1\}^{\overline{S}}}\p{I_S\otimes \bra{i}} V \p{I_S\otimes \ket{i}}}\\
    &\le \sum_{i \in \{0,1\}^{\overline{S}}} \norm{\p{I_S\otimes \bra{i}} V \p{I_S\otimes \ket{i}}}\\
    &\le \sum_{i \in \{0,1\}^{\overline{S}}} \norm{I_S\otimes \bra{i}} \cdot \norm{V} \cdot \norm{I_S\otimes \ket{i}} \\
    &= 2^{|\overline{S}|}.
\end{aligned}
\]
Let the SVD decomposition of $\Tr_{\overline{S}}(V)/2^{|\overline{S}|}$ be $W_1 \Sigma W_2$. Then $\Sigma$ is a 
    diagonal matrix satisfying $0\le \Sigma \le I$ and $W_1,W_2$ are unitaries.
  There exists a probability distribution $\lrs{p_a}_{a\in \lrs{\pm 1}^S}$ such that 
    $\Sigma = \sum_{a\in \lrs{\pm 1}^S} p_a \diag(a)$, where $\diag(a)$ is the diagonal matrix 
    with $a$ in the diagonal entries.
  Therefore, 
  \[
  \f{1}{2^{|\overline{S}|}}\Tr_{\overline{S}}[V] = W_1 \Sigma W_2 = \sum_{a\in \lrs{\pm 1}^S} p_aW_1\diag(a)W_2.
  \]
  For any $a \in \{\pm 1\}^S$, let $V_a = W_1\diag(a) W_2 \otimes I_{\overline{S}}$. 
  We conclude that $V' = \sum_{a\in \lrs{\pm 1}^S} p_a V_a$. 
  Thus,
  \begin{equation}
    \label{eq:tr_dist}
    \abs{\Tr\p{UV'^\dagger}} = \abs{\sum_{a\in \lrs{\pm 1}^S}p_a\Tr\p{UV_a^\dagger}}
      \le \sum_{a\in \lrs{\pm 1}^S}p_a\abs{\Tr\p{UV_a^\dagger}}.
  \end{equation}
  By \Cref{eq:tr_UV_tau,eq:tr_dist},
  there exists some $b\in \lrs{\pm 1}^S$ such that 
  \begin{equation}\label{eq:tr_UV}
    \f{1}{N}\abs{\Tr\p{UV_b^\dagger}} 
      \ge \f{1}{N}\abs{\Tr\p{UV'^\dagger}} 
      \ge \f{1}{N}\abs{\Tr\p{UV^\dagger}} - \tau.
  \end{equation}
    Noted that $V_b$ is a junta on $S$. Combining \Cref{eq:dist_U_V,eq:tr_UV}, we have
  \[
  \dist^2\p{U,V_b} \le \dist^2\p{U,V}+\tau.
  \]
    Therefore,
    \[
    \begin{aligned}
      \dist\p{U,\mathcal{J}_S} &\le  \dist\p{U,V_b}\\ 
      &\le
      \dist\p{U,V}+\sqrt{\tau}\\
      &=\dist\p{U,\mathcal{J}_T}+\sqrt{\tau}. \end{aligned}
    \]
\end{proof}
As a corollary of \Cref{lemma_high_influence_dist}, \Cref{cor_high_influence_dist} shows that the high-influence coordinator extractor can be used to estimate the distance between a unitary and $k$-juntas:

\begin{corollary}
\label{cor_high_influence_dist}
  Let $U$ be an $n$-qubit unitary and $\tau > 0$. For set $S \subseteq [n]$, if
  \begin{equation*}
    S \supseteq 
    \lrs{i\in [n] : \inf_{i}^{\le k}\lrb{U} \ge \f{\tau^2}{k}},
  \end{equation*}
  then
$\dist\p{U,\mathcal{J}_{S,k}} \le \dist\p{U,\mathcal{J}_k}+\sqrt{\tau}$.
\end{corollary}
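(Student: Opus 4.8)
The plan is to deduce \Cref{cor_high_influence_dist} from \Cref{lemma_high_influence_dist} by specializing the latter to the optimal witness set for $\dist\p{U,\mathcal{J}_k}$. First I would fix $T^\star \in \binom{[n]}{k}$ attaining $\dist\p{U,\mathcal{J}_{T^\star}} = \dist\p{U,\mathcal{J}_k}$; such a minimizer exists because $\binom{[n]}{k}$ is finite and, for each fixed $T$, the number $\dist\p{U,\mathcal{J}_T}$ is the minimum of the continuous map $V \mapsto \dist(U,V)$ over the compact set of junta unitaries on $T$.

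The point to be careful about is that invoking \Cref{lemma_high_influence_dist} with the set $S$ itself would only bound $\dist\p{U,\mathcal{J}_S}$, the distance to \emph{arbitrary} juntas on $S$, which is weaker than the distance to $k$-juntas on $S$ that the corollary demands. Instead I would apply \Cref{lemma_high_influence_dist} with $S \cap T^\star$ playing the role of its ``$S$'' and $T^\star$ playing the role of its ``$T$''. The only hypothesis to check is $S \cap T^\star \supseteq \lrs{i \in T^\star : \inf_i^{\le k}[U] \ge \tau^2/k}$, and this is immediate from the corollary's assumption $S \supseteq \lrs{i \in [n] : \inf_i^{\le k}[U] \ge \tau^2/k}$ by intersecting both sides with $T^\star$. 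The lemma then gives
\[
\dist\p{U,\mathcal{J}_{S \cap T^\star}} \le \dist\p{U,\mathcal{J}_{T^\star}} + \sqrt{\tau} = \dist\p{U,\mathcal{J}_k} + \sqrt{\tau}.
\]
Finally, since $|S \cap T^\star| \le |T^\star| = k$, every junta on $S\cap T^\star$ is in particular a $k$-junta, and completing $S\cap T^\star$ to some $T' \in \binom{S}{k}$ (possible because $|S|\ge k$, which is implicit for $\mathcal{J}_{S,k}$ to be defined and holds in all our applications) gives $\mathcal{J}_{S\cap T^\star} \subseteq \mathcal{J}_{T'} \subseteq \mathcal{J}_{S,k}$, whence $\dist\p{U,\mathcal{J}_{S,k}} \le \dist\p{U,\mathcal{J}_{S\cap T^\star}}$; combining with the display completes the proof.

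I do not anticipate a genuine obstacle beyond the bookkeeping point flagged above: the essential idea is that intersecting with the optimal $T^\star$ keeps the resulting junta within $k$ coordinates, so that one lands inside $\mathcal{J}_{S,k}$ rather than the larger class $\mathcal{J}_S$. As an alternative I could instead re-open the proof of \Cref{lemma_high_influence_dist}: when the optimal $V$ is supported on $T^\star$, the operator $V' = \frac{1}{2^{|\overline S|}}\Tr_{\overline S}(V)\otimes I_{\overline S}$ is actually supported on $S\cap T^\star$, so each $V_a$ in the convex decomposition used there is already a $k$-junta — but the specialization above avoids re-doing that analysis.
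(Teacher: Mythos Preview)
Your proposal is correct and follows essentially the same route as the paper: both apply \Cref{lemma_high_influence_dist} with $T$ the optimal size-$k$ set and with $S\cap T$ in place of the lemma's ``$S$'', then use $|S\cap T|\le k$ to land inside $\mathcal{J}_{S,k}$. The paper phrases it as a minimum over all $T\in\binom{[n]}{k}$ rather than fixing a single minimizer $T^\star$, but this is only a cosmetic difference; your bookkeeping around $|S|\ge k$ is in fact slightly more careful than the paper's.
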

\begin{proof}
    By \Cref{lemma_high_influence_dist}, for any $T \subseteq [n]$ with $|T| = k$, 
    \[
    \dist(U, \mathcal{J}_{S\cap T}) \le \dist(U, \mathcal{J}_T)+\sqrt{\tau}.
    \]
    Thus, 
    \[
    \begin{aligned} \dist(U,\mathcal{J}_{S,k}) &=
    \min_{T \in \binom{[n]}{k}}\dist(U,\mathcal{J}_{S\cap T})\\
    &\le \min_{T\in \binom{[n]}{k}}\dist(U,\mathcal{J}_T)+\sqrt{\tau} \\
    &= \dist(U,\mathcal{J}_k)+\sqrt{\tau}. 
    \end{aligned}
    \]
\end{proof}

Combining \Cref{lemma:coor-ext} 
and \Cref{cor_high_influence_dist}, we give the following theorem, which will be used to design a tolerant junta tester for an arbitrary gap (\Cref{alg:tester2}) in \Cref{sec:gapless}.
\begin{theorem}\label{theorem_reduction}
   For any $n$-qubit unitary $U$, integer number $1 \le k < n$ and $\tau > 0$, using $O\p{\frac{k}{\tau^2}\log\frac{k}{ \tau}}$ queries,  \Cref{alg:ce} can find a subset $S \subseteq [n]$ such that $|S| \le {2k^2}/{\tau^2}$. Besides, with probability at least $0.99$,  
    \[    \dist\p{U,\mathcal{J}_{S,k}} \le \dist\p{U,\mathcal{J}_{k}}+\sqrt{\tau}.
    \]
\end{theorem}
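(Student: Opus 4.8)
The plan is to combine the two black-box guarantees already established: the complexity/size bound and correctness of \algname{Coordinate-Extractor} from \Cref{lemma:coor-ext}, and the distance-approximation guarantee from \Cref{cor_high_influence_dist}. Concretely, I would run \algname{Coordinate-Extractor}$(U,k,\tau)$ and let $S$ be its output. Two things need to be checked: (i) the query count is $O\!\p{\frac{k}{\tau^2}\log\frac{k}{\tau}}$, and (ii) with probability at least $0.99$ the set $S$ both has size at most $2k^2/\tau^2$ and contains every coordinate $i$ with $\inf_i^{\le k}[U]\ge \tau^2/k$, which by \Cref{cor_high_influence_dist} yields $\dist(U,\mathcal{J}_{S,k})\le \dist(U,\mathcal{J}_k)+\sqrt{\tau}$.

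For (i), I would just read off the number of iterations $T=\frac{2k}{\vep^2\tau^2}\log\frac{k^2}{\delta\tau^2}$ from Step 2 of \Cref{alg:ce}, substitute the fixed constants $\vep=1/2$ and $\delta=0.01$, and note each iteration makes exactly one call to \algname{Fourier-Sample}, which uses one query to $U$. This gives $T = \frac{8k}{\tau^2}\log\frac{100k^2}{\tau^2} = O\!\p{\frac{k}{\tau^2}\log\frac{k}{\tau}}$, using $\log\frac{k^2}{\tau^2}=2\log\frac{k}{\tau}$ up to constants. For (ii), \Cref{lemma:coor-ext} directly states that $|S|\le 2k^2/\tau^2$ deterministically and that $S\supseteq\lrs{i\in[n]:\inf_i^{\le k}[U]\ge \tau^2/k}$ with probability at least $0.99$. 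On the latter event, \Cref{cor_high_influence_dist} (applied with the same $\tau$) gives exactly $\dist(U,\mathcal{J}_{S,k})\le\dist(U,\mathcal{J}_k)+\sqrt{\tau}$, completing the argument.

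Honestly, there is no substantive obstacle here: the theorem is a packaging of \Cref{lemma:coor-ext} and \Cref{cor_high_influence_dist}, and the only mild care needed is the bookkeeping that the parameter $\tau$ appearing in the hypothesis of \Cref{cor_high_influence_dist} is the same one fed into \algname{Coordinate-Extractor}, and that simplifying $T$ to the claimed asymptotic form is valid (the $\log(k^2/\tau^2)$ factor collapses to $O(\log(k/\tau))$, and the constant factors $1/\vep^2$, $1/\delta$ are absorbed). I would also remark that the success probability $0.99$ can be boosted to $1-\eta$ for any constant $\eta$ at the cost of an extra constant factor in the query count, since Step 2's count depends on $\delta$ only logarithmically — though this is not needed for the statement as written.
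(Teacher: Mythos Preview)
Your proposal is correct and matches the paper's approach exactly: the paper simply states that the theorem follows by combining \Cref{lemma:coor-ext} and \Cref{cor_high_influence_dist}, which is precisely what you do, with the added (and correct) bookkeeping for the query count that the paper leaves implicit.
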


\section{Efficient Tolerant Junta Tester with Constant Factor Gap}\label{sec:gap}
In this section, we will present an efficient tolerant junta tester for both Boolean functions and unitaries: when $\vep_1$ and $\vep_2$ are separated by a constant multiplicative factor, tolerant $k$-junta testing can be achieved using a quantum algorithm with $O(k \log k)$ queries. For Boolean functions, this is a substantial improvement over the current classical exponential results \cite{nadimpalli2024optimal, chen2025mysterious}, indicating that quantum computing has an exponential speedup for the tolerant junta testing problem. 

Moreover, the construction of our algorithm is natural and concise. We first apply the high-influence coordinate extractor (\Cref{alg:ce}) in \Cref{sec:ce} to narrow down the range of possible influential set $T$ so that the number of rounds of the inner loop is independent of $n$. Then we use $\algname{Fourier-Sample}$ (\Cref{alg:fs}) to estimate the influence for each subset with size $k$. 

We will first show the relation between the influence and distance from juntas. Then we present our tolerant junta tester for unitary operators. For the Boolean case, the algorithm and analysis follow a similar line with slightly different parameters.

\subsection{Relationships between Distances-to-juntas and Influences}

For both Boolean functions $f$ and unitary operators $U$, their distances to juntas are closely related to the influence of underlying sets. Most of these facts were established in previous work (see \Cref{prop:ff,prop:uc,prop:uf}).
We also give a self-contained proof for \Cref{prop:fc}.

\begin{lemma}\label{prop:fc}
    If a Boolean function $f:\{\pm 1\}^n \rightarrow \{\pm1\}$ is $\varepsilon_1$-close to a $k$-junta on $T \subseteq [n]$ with $|T|=k$, then
\[
\inf_{\overline{T}}[f] \leq 4 \varepsilon_1.
\]
\end{lemma}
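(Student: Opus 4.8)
The plan is to bound the influence outside $T$ by comparing $f$ to the $k$-junta $g$ on $T$ that witnesses $\dist(f,g) = \varepsilon_1 \le \varepsilon_1$. The key observation is that since $g$ depends only on coordinates in $T$, all of its Fourier weight sits on subsets $S \subseteq T$, so $\widehat g(S) = 0$ whenever $S \cap \overline T \neq \emptyset$. Therefore $\inf_{\overline T}[f] = \sum_{S \cap \overline T \neq \emptyset} \widehat f(S)^2 = \sum_{S \cap \overline T \neq \emptyset} (\widehat f(S) - \widehat g(S))^2 \le \sum_{S \subseteq [n]} (\widehat f(S) - \widehat g(S))^2 = \|f - g\|_2^2$ by Parseval.

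Next I would convert the $L_2$ distance $\|f-g\|_2^2$ into the Hamming distance $\dist(f,g)$. Since $f,g$ are $\{\pm 1\}$-valued, $(f(x)-g(x))^2$ equals $0$ when $f(x) = g(x)$ and equals $4$ when $f(x) \neq g(x)$, so $\|f-g\|_2^2 = \mathbb{E}_x[(f(x)-g(x))^2] = 4\Pr_x[f(x)\neq g(x)] = 4\dist(f,g) \le 4\varepsilon_1$. Chaining the two inequalities gives $\inf_{\overline T}[f] \le 4\varepsilon_1$, which is exactly the claim.

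I do not anticipate a serious obstacle here; this is a short计算 combining Parseval's identity with the elementary fact that squared $L_2$ distance between Boolean functions is four times Hamming distance. The only point requiring a little care is making sure the Fourier support of a $k$-junta on $T$ really is confined to subsets of $T$ — this follows directly from $\widehat g(S) = \mathbb{E}_x[g(x)\chi_S(x)]$ and the fact that for any $i \in S \cap \overline T$, averaging over $x_i$ (on which $g$ does not depend) kills the term. One should also note the statement assumes $|T| = k$ but the argument never uses the size of $T$, only that $g$ is supported on $T$; so the bound in fact holds for any $T$ with $f$ being $\varepsilon_1$-close to a junta on $T$.
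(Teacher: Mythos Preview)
Your argument is correct and in fact slightly cleaner than the paper's. Both proofs start from the observation that a junta $g$ on $T$ has $\widehat g(S)=0$ whenever $S\nsubseteq T$, but they diverge from there. You apply Parseval directly to $f-g$: since $\widehat f(S)=\widehat f(S)-\widehat g(S)$ for every $S$ meeting $\overline T$, the influence $\inf_{\overline T}[f]$ is bounded by $\|f-g\|_2^2=4\dist(f,g)\le 4\varepsilon_1$. The paper instead works with the inner product $\langle f,g\rangle=\sum_{S\subseteq T}\widehat f(S)\widehat g(S)\ge 1-2\varepsilon_1$ and then invokes Cauchy--Schwarz (together with $\sum_{S\subseteq T}\widehat g(S)^2=1$) to deduce $\sum_{S\subseteq T}\widehat f(S)^2\ge(1-2\varepsilon_1)^2>1-4\varepsilon_1$. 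Your route avoids Cauchy--Schwarz entirely and is one step shorter; the paper's route yields a marginally sharper strict inequality, but that is not needed for the lemma. Your remark that the size of $T$ is irrelevant is also correct and applies equally to the paper's proof.
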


\begin{proof}
    Assume that $f$ is $\varepsilon_1$-close to a $k$-junta Boolean function $g$ on $T$. For all $S \nsubseteq T$, we have $\widehat{g}(S) =0$. Therefore,
    \[
    \sum_{S\subseteq T} \widehat{g}(S)^2 =1.
    \]
    We know that
    \[
    \dist(f, g)=\frac{1-\langle f, g\rangle}{2} \leq \vep_1,
    \]
    so
    \[
    \sum_{S \subseteq T} \widehat{f}(S) \widehat{g}(S) =\langle f, g \rangle=1-2\dist(f, g) \geq 1 - 2\vep_1.
    \]
    By Cauchy-Schwarz inequality, 
    \[
    \sum_{S \subseteq T} \widehat{f}(S)^2 \cdot 1\geq \left| \sum_{S \subseteq T} \widehat{f}(S)\widehat{g}(S) \right|^2 \geq (1 - 2\vep_1)^2 > 1 - 4\vep_1.
    \]
    Therefore,
    \[
    \inf_{\overline{T}}[f] =\sum_{S \cap \overline{T} \neq \emptyset} \widehat{f}(S)^2  \leq 4 \varepsilon_1.
    \]
\end{proof}

\begin{lemma}[Proposition 2.3 of \cite{blais2009testing}] \label{prop:ff}
    If a Boolean function $f:\{\pm 1\}^n \rightarrow \{\pm1\}$ is $\varepsilon_2$-far from every $k$-junta, then for every set $T \subseteq [n]$ of size $|T| \leq k$,  
\[
\inf_{\overline{T}}[f]\geq \varepsilon_2.
\]
\end{lemma}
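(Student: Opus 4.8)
The plan is to establish the contrapositive. Fix a set $T\subseteq[n]$ with $|T|\le k$ and suppose that $\inf_{\overline T}[f]<\vep_2$; I will then produce a $k$-junta $g$ depending only on the coordinates in $T$ with $\dist(f,g)<\vep_2$. Since a function on $\{\pm1\}^n$ that depends only on $T$ has at most $|T|\le k$ relevant variables, such a $g$ lies in $\mathcal{J}_T\subseteq\mathcal{J}_k$, so $f$ would be $\vep_2$-close to some $k$-junta, contradicting the hypothesis. Hence $\inf_{\overline T}[f]\ge\vep_2$ for every $T$ of size at most $k$.

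For the construction I would take the rounding of the best $\ell_2$-approximation of $f$ supported on $T$. Let $f^{\subseteq T}\coloneqq\sum_{S\subseteq T}\widehat{f}(S)\chi_S$; equivalently $f^{\subseteq T}(x)=\mathbb{E}_{y}\!\left[f\!\left(y^{\overline T}x^{T}\right)\right]$, which depends only on $x|_T$. By Parseval, and using $\{S:S\not\subseteq T\}=\{S:S\cap\overline T\neq\emptyset\}$,
\[
\big\|f-f^{\subseteq T}\big\|_2^2=\sum_{S\not\subseteq T}\widehat{f}(S)^2=\sum_{S:\,S\cap\overline T\neq\emptyset}\widehat{f}(S)^2=\inf_{\overline T}[f].
\]
Now set $g(x)\coloneqq\operatorname{sign}\!\big(f^{\subseteq T}(x)\big)$, breaking ties arbitrarily; $g$ is $\{\pm1\}$-valued, still depends only on $x|_T$, and is therefore a $k$-junta on $T$.

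It remains to bound $\dist(f,g)$. The key pointwise fact is that on the disagreement set $\{x:f(x)\neq g(x)\}$ the real number $f^{\subseteq T}(x)$ lies weakly on the opposite side of $0$ from $f(x)\in\{\pm1\}$, so $\big|f(x)-f^{\subseteq T}(x)\big|\ge1$ there. Consequently
\[
\inf_{\overline T}[f]=\mathbb{E}_x\!\left[\big(f(x)-f^{\subseteq T}(x)\big)^2\right]\ \ge\ \mathbb{E}_x\!\left[\big(f(x)-f^{\subseteq T}(x)\big)^2\,\mathbf{1}\{f(x)\neq g(x)\}\right]\ \ge\ \Pr_x\!\left[f(x)\neq g(x)\right]=\dist(f,g),
\]
so $\dist(f,g)\le\inf_{\overline T}[f]<\vep_2$, as required. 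I do not expect a real obstacle: the only point needing care is the pointwise inequality $\big|f(x)-f^{\subseteq T}(x)\big|\ge1$ on the disagreement set, and checking that the value assigned to $\operatorname{sign}(0)$ is irrelevant. (A slightly sharper variant, giving $\dist(f,g)\le\tfrac12\inf_{\overline T}[f]$, takes $g$ to be the coordinatewise majority value of $f(w^{T}z^{\overline T})$ over uniform $z$ for each $w\in\{\pm1\}^T$, and uses $\min(a,1-a)\le 2a(1-a)$ on $[0,1]$ together with $\tfrac12\inf_{\overline T}[f]=\Pr_{w,z,z'}[f(w^{T}z^{\overline T})\neq f(w^{T}z'^{\overline T})]$; either bound suffices for the lemma.)
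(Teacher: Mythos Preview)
Your argument is correct: the contrapositive via $g=\operatorname{sign}(f^{\subseteq T})$ together with the pointwise bound $|f(x)-f^{\subseteq T}(x)|\ge 1$ on $\{f\neq g\}$ gives $\dist(f,g)\le\inf_{\overline T}[f]<\vep_2$, exactly as needed. Note that the paper does not supply its own proof of this lemma but simply cites it as Proposition~2.3 of \cite{blais2009testing}; your write-up is the standard proof of that classical fact (and your parenthetical majority-vote variant is indeed the sharper $\dist\le\tfrac12\inf_{\overline T}[f]$ version that also appears in the literature).
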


\begin{lemma}[Lemma 2.8 of \cite{chen2024tolerant}]\label{prop:uc}
    If a unitary $U$ is $\varepsilon_1$-close to a $k$-junta unitary on $T \subseteq [n]$ with $|T| = k$, then
\[
\inf_{\overline{T}}[U] \leq 2 \varepsilon_1^2.
\]
\end{lemma}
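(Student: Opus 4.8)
The plan is to mirror the Boolean argument in \Cref{prop:fc}, working in the Pauli basis instead of the Fourier basis and using the trace characterization of the unitary distance. Let $V = V_T \otimes I_{\overline{T}}$ be a $k$-junta unitary on $T$ with $\dist(U,V) = \dist(U,\mathcal{J}_T) \le \varepsilon_1$ (or, more weakly, with $\dist(U,V)\le\varepsilon_1$ given by hypothesis). First I would use \Cref{eq:dist_U_V}, namely $\dist(U,V) = \sqrt{1 - \frac1N\abs{\Tr(U^\dagger V)}}$, to translate the closeness hypothesis into the lower bound
\[
\f1N\abs{\Tr\p{U^\dagger V}} \ge 1 - \varepsilon_1^2 .
\]
Note the absolute value here already absorbs the phase optimization in \Cref{def_dist_unitary}, so no separate handling of $e^{i\theta}$ is needed.

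Next I would expand both operators in the Pauli basis. Since $V$ acts trivially on $\overline{T}$, its only nonzero Pauli coefficients $\widehat{V}(x)$ are those with $\supp(x)\subseteq T$, and Parseval gives $\sum_{\supp(x)\subseteq T}\abs{\widehat{V}(x)}^2 = 1$. By the Pauli–Plancherel formula,
\[
\f1N\Tr\p{U^\dagger V} = \sum_{x\in\Zfn}\overline{\widehat{U}(x)}\,\widehat{V}(x) = \sum_{\substack{x\in\Zfn\\ \supp(x)\subseteq T}}\overline{\widehat{U}(x)}\,\widehat{V}(x),
\]
and Cauchy–Schwarz together with $\sum_{\supp(x)\subseteq T}\abs{\widehat{V}(x)}^2 = 1$ yields
\[
\abs{\f1N\Tr\p{U^\dagger V}}^2 \le \sum_{\substack{x\in\Zfn\\ \supp(x)\subseteq T}}\abs{\widehat{U}(x)}^2 .
\]

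Combining the two displays gives $\sum_{\supp(x)\subseteq T}\abs{\widehat{U}(x)}^2 \ge (1-\varepsilon_1^2)^2 \ge 1 - 2\varepsilon_1^2$, and since the total Pauli weight of $U$ is $1$, taking complements gives
\[
\inf_{\overline{T}}[U] = \sum_{\substack{x\in\Zfn\\ \supp(x)\cap\overline{T}\neq\emptyset}}\abs{\widehat{U}(x)}^2 = 1 - \sum_{\substack{x\in\Zfn\\ \supp(x)\subseteq T}}\abs{\widehat{U}(x)}^2 \le 2\varepsilon_1^2,
\]
as claimed. There is no serious obstacle here: the proof is a two-line Cauchy–Schwarz estimate once the distance is rewritten via $\Tr(U^\dagger V)$. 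The only points requiring mild care are keeping complex conjugates and absolute values throughout (the $\widehat{U}(x)$ need not be real, unlike in the Boolean case), and observing that the phase minimization in \Cref{def_dist_unitary} is exactly what converts $\mathrm{Re}\,\Tr(U^\dagger V)$ into $\abs{\Tr(U^\dagger V)}$, which is what makes the bound clean.
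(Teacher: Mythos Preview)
Your proof is correct. The paper does not actually give its own proof of this lemma---it simply cites Lemma~2.8 of \cite{chen2024tolerant}---so there is nothing to compare against directly. That said, your argument is the exact unitary analogue of the proof the paper \emph{does} spell out for the Boolean case (\Cref{prop:fc}): translate closeness into a lower bound on $\frac{1}{N}\abs{\Tr(U^\dagger V)}$, restrict the Plancherel sum to Pauli strings supported in $T$, apply Cauchy--Schwarz against the unit Pauli mass of $V$, and take complements. The only substantive difference from the Boolean version is that the unitary distance already involves a square root, which is why you end up with $2\varepsilon_1^2$ rather than $4\varepsilon_1$.
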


\begin{lemma}[\cite{wang2011property}, proof of Theorem 4]\label{prop:uf}
   If a unitary $U$ is $\varepsilon_2$-far from every $k$-junta unitary, then for every $T \subseteq [n]$ with $|T| \leq k$, 
\[
\inf_{\overline{T}}[U] \geq \frac{\varepsilon_2^2}{4}.
\]
\end{lemma}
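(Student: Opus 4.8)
The plan is to establish, for every fixed $T\subseteq[n]$ with $|T|\le k$, the stronger inequality $\inf_{\overline T}[U]\ge\dist(U,\mathcal{J}_T)^2$. Granting this, since every junta unitary on such a $T$ is a $k$-junta, $U$ being $\varepsilon_2$-far from \emph{every} $k$-junta forces $\dist(U,\mathcal{J}_T)\ge\varepsilon_2$, and hence $\inf_{\overline T}[U]\ge\varepsilon_2^2\ge\varepsilon_2^2/4$ — in fact beating the claimed bound. By \Cref{eq:dist_U_V} it is equivalent to produce a single $k$-junta unitary $V$ supported on $T$ with $\tfrac1N\big|\Tr(U^\dagger V)\big|\ge 1-\inf_{\overline T}[U]$.

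The witness is built from a partial trace, in the same way as in the proof of \Cref{lemma_high_influence_dist}. Set $M:=2^{-|\overline T|}\Tr_{\overline T}(U)$, an operator on the qubits in $T$, and consider $M\otimes I_{\overline T}$. The Pauli-coefficient computation carried out there shows $\widehat{M\otimes I_{\overline T}}(x)=\widehat U(x)$ when $\supp(x)\subseteq T$ and $0$ otherwise, so by Plancherel $\tfrac1N\Tr\big(U^\dagger(M\otimes I_{\overline T})\big)=\sum_{\supp(x)\subseteq T}|\widehat U(x)|^2=1-\inf_{\overline T}[U]$, a nonnegative real. Unfolding the partial trace also gives the identity $\tfrac1N\Tr\big(U^\dagger(W\otimes I_{\overline T})\big)=2^{-|T|}\Tr(M^\dagger W)$ for every operator $W$ on $T$; taking $W=M$ recovers $1-\inf_{\overline T}[U]=2^{-|T|}\|M\|_F^2=2^{-|T|}\sum_i\sigma_i(M)^2$, where $\sigma_i(M)$ are the singular values of $M$.

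The one substantive ingredient is that $\|M\|_{\mathrm{op}}\le 1$: $M$ is the average, over computational-basis states $\ket i$ on the $\overline T$-register, of the compressions $(I_T\otimes\bra i)\,U\,(I_T\otimes\ket i)$, each of which is a compression of a unitary and so has operator norm at most $1$. Consequently every $\sigma_i(M)\in[0,1]$, hence $\sum_i\sigma_i(M)\ge\sum_i\sigma_i(M)^2$. Now let $V_T$ be the unitary factor in a polar decomposition of $M$ (equivalently, following \Cref{lemma_high_influence_dist}, write $M\otimes I_{\overline T}$ as a convex combination of junta unitaries $V_a$ on $T$ and pick the one with largest overlap); padding $T$ to a set of size exactly $k$ if necessary, $V:=V_T\otimes I_{\overline T}$ is a $k$-junta unitary with $\tfrac1N\Tr(U^\dagger V)=2^{-|T|}\sum_i\sigma_i(M)\ge 1-\inf_{\overline T}[U]$. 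By \Cref{eq:dist_U_V}, $\dist(U,V)^2=1-\tfrac1N|\Tr(U^\dagger V)|\le\inf_{\overline T}[U]$, as desired.

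The main obstacle is exactly this last ``rounding'' step. The natural witness $M\otimes I_{\overline T}$ is typically \emph{not} unitary — its normalized Frobenius norm-squared is $1-\inf_{\overline T}[U]<1$ — so it cannot be used directly, and one must argue that replacing it by a genuine junta unitary costs no more than $\inf_{\overline T}[U]$ of trace overlap with $U$. This is precisely what $0\le\sigma_i(M)\le1$ delivers, via either the polar decomposition or the convex-combination device already present in the proof of \Cref{lemma_high_influence_dist}; the rest (the Pauli-coefficient identity, Plancherel, and \Cref{eq:dist_U_V}) is routine bookkeeping.
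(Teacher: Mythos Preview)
Your proof is correct. The paper does not actually prove this lemma---it is stated with a citation to \cite{wang2011property}---so there is no in-paper argument to compare against directly. That said, your approach is essentially the one the paper itself deploys elsewhere: the Pauli-coefficient identity for $2^{-|\overline T|}\Tr_{\overline T}(U)\otimes I_{\overline T}$ is exactly the computation in the proof of \Cref{lemma_high_influence_dist}, and the polar-decomposition step recovers the nuclear-norm formula \Cref{eq:dist} from \Cref{sec:gapless}, namely $\dist(U,\mathcal{J}_T)^2=1-\tfrac1N\|U_T\|_*$. Your inequality $\sum_i\sigma_i(M)\ge\sum_i\sigma_i(M)^2$ (valid since $\|M\|_{\mathrm{op}}\le 1$) then gives $\tfrac1N\|U_T\|_*\ge 1-\inf_{\overline T}[U]$, hence $\dist(U,\mathcal{J}_T)^2\le\inf_{\overline T}[U]$.

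It is worth noting that you have proved the sharper bound $\inf_{\overline T}[U]\ge\varepsilon_2^2$, four times stronger than the stated $\varepsilon_2^2/4$. With the distance \Cref{eq:dist_U_V} used in this paper your bound is indeed correct; the extra factor of $4$ in the cited form presumably reflects a different normalization or a coarser estimate in \cite{wang2011property}.
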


\subsection{An 	\texorpdfstring{$O(k\log k)$}{klogk}-query Non-adaptive Tolerant Junta Tester}\label{sec:u}

Now we are ready to show our non-adaptive tolerant junta tester for unitary operators.

\begin{algorithm}
\caption{$\algname{Tolerant-Junta-Tester}(U, \vep_1, \vep_2)$}
\label{alg:tester}

\begin{algorithmic}[1]\label{alg:1}

\Input{Oracle access to $n$-qubit unitary $U$, 
  integer $1 \le k < n$, $0<2\sqrt{2}\vep_1 < \vep_2<1$}

\Output{`Yes' or `No'}

\item Let $\delta  \coloneqq (\vep_2^2/4 - 2\vep_1^2)/{3}$;
 \item Run $S = \algname{Coordinate-Extractor}\p{U, k, \tau}$
  where $\tau  \coloneqq \sqrt{\delta}$; \label{item_ce}
\item For all \(i \in [M]\), run $\algname{Fourier-Sample}(U)$
  to obtain $x^{(i)}$, with 
  $M \coloneqq \log {\p{200 \p{\binom{\abs{S}}{k}+1}}}/2{\delta^2}$;
\item If $\abs{S} \ge k$, repeat the following for each subset $T$ 
  of $S$ with $\abs{T} = k$; otherwise let $T=S$:
\begin{itemize}
    \item Initialize $s_{\overline{T}}  \coloneqq 0$;
    \item \(i = 1, 2, \dots,M\), 
      if 
      $\supp(x^{(i)}) \cap \overline{T} \neq \emptyset$,
      $s_{\overline{T}} \leftarrow s_{\overline{T}}+1$;  
    \item Let $s_{\overline{T}} 
      \leftarrow {s_{\overline{T}}}/{M}$;
\end{itemize}
\item If $\min_T s_{\overline{T}} \leq 2\vep_1^2+2\delta$, 
  output `Yes', otherwise output `No'.
\end{algorithmic}
\end{algorithm}

\begin{theorem}[Formal version of \Cref{thm:i2}]\label{thm:tester}
    For any unitary $U$, integer $1 \le k < n$ and constant numbers  $\vep_1, 
      \vep_2 \in (0,1)$ satisfying $ \vep_2 >  2\sqrt{2}\vep_1$, if $U$ is $\vep_1$-close
      to some $k$-junta unitary, 
      $\algname{Tolerant-Junta-Tester}(U, \vep_1, \vep_2)$ 
      will output `Yes' with probability $9/10$. 
    On the other hand, if $U$ is $\vep_2$-far from any $k$-junta 
      unitary, $\algname{Tolerant-Junta-Tester}(U, \vep_1, \vep_2)$ 
      will output `No' with probability $9/10$.
    Besides, the algorithm is non-adaptive and makes $O\p{\f{k}{\delta^2} \log \f{k}{\delta}}$ queries to $U$, where $\delta = (\vep_2^2-8\vep_1^2)/12$. 
\end{theorem}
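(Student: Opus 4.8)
The plan is to analyze \Cref{alg:tester} by separately controlling the preprocessing step (\algname{Coordinate-Extractor}), the statistical estimation step (the $M$ calls to \algname{Fourier-Sample}), and then combining the structural lemmas \Cref{prop:uc} and \Cref{prop:uf} to certify correctness in the YES and NO cases. First I would recall that, by \Cref{lemma:coor-ext} applied with $\tau=\sqrt\delta$, the set $S$ returned in Step~\ref{item_ce} satisfies $|S|\le 2k^2/\delta$ and, with probability at least $0.99$, contains every coordinate $i$ with $\inf_i^{\le k}[U]\ge \delta/k$. Fix this good event. The number of size-$k$ subsets $T\subseteq S$ is then at most $\binom{|S|}{k}$, which is bounded independently of $n$, and this is exactly the quantity appearing inside the logarithm defining $M$; hence the whole algorithm makes $O(k/\delta^2\cdot\log(k/\delta)) + O(M)$ queries, and since $\log\binom{|S|}{k}=O(k\log(k/\delta))$ one gets the claimed bound $O\big(\tfrac{k}{\delta^2}\log\tfrac{k}{\delta}\big)$.

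Next I would handle the estimation step. Each $x^{(i)}$ is an independent sample from the distribution $\Pr[x]=|\wU(x)|^2$, so for a fixed $T$ the indicator $\mathbf{1}[\supp(x^{(i)})\cap T\neq\emptyset]$ has expectation exactly $\inf_T[U]$, and $s_{\overline T}$ (after normalization) is the empirical mean of $M$ such indicators. By Hoeffding's inequality, $\Pr[|s_{\overline T}-\inf_T[U]|>\delta]\le 2\exp(-2M\delta^2)$, and with the chosen $M=\log(200(\binom{|S|}{k}+1))/(2\delta^2)$ this failure probability is at most $1/(100(\binom{|S|}{k}+1))$ per set $T$. A union bound over all (at most $\binom{|S|}{k}$) sets $T$ considered shows that, with probability at least $1-1/100$, every estimate $s_{\overline T}$ is within $\delta$ of $\inf_T[U]$ simultaneously. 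Combining this with the good event for \algname{Coordinate-Extractor} via another union bound leaves total failure probability at most $2/100 < 1/10$, so from now on I assume all estimates are $\delta$-accurate and $S$ contains all high-influence coordinates.

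Now I would verify correctness on the two instance types. For the YES case: suppose $U$ is $\vep_1$-close to a $k$-junta on some $T^\star\subseteq[n]$ with $|T^\star|=k$. By \Cref{prop:uc}, $\inf_{\overline{T^\star}}[U]\le 2\vep_1^2$, equivalently $\inf_{T^\star}[U]\le 2\vep_1^2$ (note $\inf_{\overline{T^\star}}$ in the lemma means influence of the coordinates outside $T^\star$, matching the algorithm's $\inf_T$ with $T$ being the $k$ relevant coordinates; I would double-check this notational alignment, as it is the main place to be careful). Let $T\coloneqq S\cap T^\star$; by \Cref{lemma_inf_plus} iterated (or directly \Cref{lemma_inf_appro} with the $\delta/k$ threshold), $\inf_{?}[U]\le \inf_{T^\star\text{-influence}}[U]+\delta\le 2\vep_1^2+\delta$ where the "?" set is the one the algorithm actually scores — I would set this up so that the set $T$ the algorithm loops over that best matches $T^\star$ has $s_{\overline T}\le \inf_T[U]+\delta \le 2\vep_1^2+2\delta$, triggering the `Yes' output. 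For the NO case: suppose $U$ is $\vep_2$-far from every $k$-junta. By \Cref{prop:uf}, for every $T$ with $|T|\le k$ we have $\inf_T[U]\ge \vep_2^2/4$, so every estimate satisfies $s_{\overline T}\ge \vep_2^2/4-\delta$; thus $\min_T s_{\overline T}\ge \vep_2^2/4-\delta$. The threshold test outputs `No' correctly provided $\vep_2^2/4-\delta > 2\vep_1^2+2\delta$, i.e. $3\delta < \vep_2^2/4-2\vep_1^2$, which is exactly why $\delta$ was chosen to equal $(\vep_2^2/4-2\vep_1^2)/3$; this uses $\vep_2>2\sqrt2\vep_1$ to guarantee $\delta>0$. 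Finally I would note $\delta=(\vep_2^2-8\vep_1^2)/12$, matching the stated query bound, and collect the constants to confirm overall success probability $9/10$. I expect the main obstacle to be bookkeeping the YES case carefully: reconciling the "$k$ relevant coordinates" language of \Cref{prop:uc} with the algorithm's convention that $T$ is the set of relevant coordinates and $s_{\overline T}$ estimates $\inf_T[U]=\sum_{\supp(x)\cap T\ne\emptyset}|\wU(x)|^2$, and ensuring the threshold slack absorbs both the $\delta$ from \algname{Coordinate-Extractor}'s approximation of the relevant set and the $\delta$ from Hoeffding estimation without an off-by-constant error.
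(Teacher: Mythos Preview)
Your overall strategy --- \algname{Coordinate-Extractor} via \Cref{lemma:coor-ext}, Hoeffding plus a union bound over size-$k$ subsets of $S$, then \Cref{prop:uc}/\Cref{lemma_inf_appro} for the YES case and \Cref{prop:uf} for the NO case --- is exactly what the paper does, and your query-count bookkeeping is correct.

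However, the confusion you flag is not resolved correctly and is a real error. The quantity $s_{\overline T}$ is meant to estimate $\inf_{\overline T}[U]=\sum_{\supp(x)\not\subseteq T}|\wU(x)|^2$, the Fourier weight escaping the candidate relevant set $T$; the indicator in Step~4 should be $\supp(x^{(i)})\cap\overline T\neq\emptyset$ (equivalently $\supp(x^{(i)})\not\subseteq T$), as in the Technical Overview and the paper's proof line $\E[s_{\overline T}]=\inf_{\overline T}[U]$. Your sentence ``$\inf_{\overline{T^\star}}[U]\le 2\vep_1^2$, equivalently $\inf_{T^\star}[U]\le 2\vep_1^2$'' is false: for a unitary close to a junta on $T^\star$, $\inf_{T^\star}[U]$ is typically near $1$, not small. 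Likewise \Cref{prop:uf} bounds $\inf_{\overline T}[U]$, not $\inf_T[U]$. Once you use the correct indicator, the YES case is clean --- with $T=T^\star\cap S$, \Cref{lemma_inf_appro} gives $\inf_{\overline T}[U]\le\inf_{\overline{T^\star}}[U]+\delta\le 2\vep_1^2+\delta$, hence $s_{\overline T}\le 2\vep_1^2+2\delta$ --- and the NO case follows directly from \Cref{prop:uf}: $\inf_{\overline T}[U]\ge\vep_2^2/4$ for every $T$, so $s_{\overline T}\ge\vep_2^2/4-\delta>2\vep_1^2+2\delta$. You should also add one sentence on non-adaptivity: all \algname{Fourier-Sample} calls (those inside \algname{Coordinate-Extractor} and the $M$ in Step~3) can be issued up front, with everything else being classical post-processing.
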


\begin{proof}
    From \Cref{lemma:coor-ext}, $\abs{S}\le {2k^2}/{\delta}$ and with probability 
      at least $0.99$, we have     \begin{equation*}
       S \supseteq 
        \lrs{i \in \lrb{n}: \inf_i^{\le k}[U] \ge \frac{\delta}{k}}.
    \end{equation*}

    If $U$ is $\vep_1$-close to some $k$-junta unitary,
      according to \Cref{prop:uc}, there exists 
      $T'\subseteq [n]$ with $|T'| = k$ such that 
      $\inf_{\overline{T'}}[U] \le 2 \varepsilon_1^2$.
    Let $T= T'\cap S$, by \Cref{lemma_inf_appro}, we have
    \begin{equation*}
    \inf_{\overline{T}}[U] 
        \le \inf_{\overline{T'}}[U] + \delta
        \le 2 \varepsilon_1^2 + \delta.
    \end{equation*}
    We conclude that if $U$ is $\vep_1$-close to some 
      $k$-junta unitary, there must exist a $T\subseteq S$
      with $\abs{T} \le k$, such that 
      $\inf_{\overline{T}}[U] \le 2 \varepsilon_1^2 + \delta$.

    If $U$ is $\vep_2$-far from any $k$-junta unitary,
      according to \Cref{prop:uf}, we have that
      for any $T\subseteq [n]$ with $\abs{T}=k$,      $\inf_{\overline{T}}[U] \ge \varepsilon_2^2/4$.
    Therefore for any $T\subseteq {S}$ with $\abs{T}=k$,
      $\inf_{\overline{T}}[U] \ge \varepsilon_2^2/4$.
      
    For any subset $T\subseteq S$ with $\abs{T} = k$, 
      and also $T=S$, we have 
        $\E{s_{\overline{T}}} = \inf_{\overline{T}}[U]$.
    By Hoeffding's bound, 
    \begin{equation*}
        \Pr  \lrb{\abs{ 
          s_{\overline{T}} - \inf_{\overline{T}}[U] }
          \geq \delta} 
        \leq 2 \exp\left( -2M\delta^2 \right) 
        \leq \f{0.01}{\binom{\abs{S}}{k}+1}.
    \end{equation*}
    By the union bound, we conclude that, with probability at least
      $0.99$, 
    \begin{equation*}
      \abs{  s_{\overline{T}} - \inf_{\overline{T}}[U] }
          \le \delta
    \end{equation*}
    holds for all $T\subseteq S$ with $\abs{T}\le k$
      and $T=S$.

    Therefore, if there exists a $T\subseteq S$ with
      $\abs{T}\le k$ such that 
      $\inf_{\overline{T}}[U] \le 2 \varepsilon_1^2+\delta$,
      with probability at least $0.99$, the algorithm
      will find a $T\subseteq S$ with $\abs{T}=k$ or $T=S$ such that 
      $s_{\overline{T}} \le 2 \varepsilon_1^2+2\delta$
      and output `Yes' according to the mononicity of 
      influence.
    On the other case, if for any $T\subseteq {S}$ with $\abs{T}=k$,
      $\inf_{\overline{T}}[U] \ge \varepsilon_2^2/4$, 
      with probability at least $0.99$ the algorithm will output
      `No'.

Note that the number of queries in \algname{Coordinate-Extractor} (\Cref{alg:ce}) is $O\p{\f{k}{\tau^2} \log \f{k}{\tau^2}}$.
Since $\tau = \sqrt{\delta},|S| \le 2k^2/\delta$,
    the query complexity of \Cref{alg:tester} is
    \[
    \begin{aligned}
    O\p{\f{k}{\tau^2} \log \f{k}{\tau^2}}+O\p{\f{\log\binom{|S|}{k}}{\delta^2}} &= O\p{\f{k}{\delta}\log \f{k}{\delta}}+O\p{\f{k\log \f{|S|}{k}}{\delta^2}} \\
    &= O\p{\f{k}{\delta^2}\log \f{k}{\delta}}.
    \end{aligned}
    \]
    To see that it is non-adaptive, we note that this algorithm 
      could invoke $\algname{Fourier-Sample}$ at the beginning
      and then feed the date into the subroutine
      $\algname{Coordinate-Extractor}$ to identify $S$,
      and then estimate the influence for all size-$k$ 
      subsets of $S$.
\end{proof}

\paragraph{For Boolean functions}

The junta tester for Boolean functions follows a similar 
  line. 
Most part of \Cref{alg:tester} works well for Boolean
  functions, except that we need to make some changes to the parameters. 
According to \Cref{prop:fc,prop:ff}, 
  we replace $2\vep_1^2$ and $\vep_2^2/4$ by $4\vep_1$ and $\vep_2$ in the 
  first and the fifth line of the algorithm, and therefore 
  $\delta \coloneqq (\vep_2 - 4\vep_1)/{3}$.
We call the algorithm with new parameters $\algname{Tolerant-Boolean-Junta-Tester}$.
We have the following theorem.

\begin{theorem}[Formal version of \Cref{thm:i1}]\label{thm:tester-boolean}
    For any Boolean function $f:\{\pm1\}^n \rightarrow \{\pm1\}$, integer $1 \le k < n$ and constant numbers $\vep_1, 
      \vep_2 \in (0,1)$ satisfying $\vep_2 > 4\vep_1$, if $f$ is $\vep_1$-close
      to some $k$-junta Boolean function, 
      $\algname{Tolerant-Boolean-Junta-Tester}(f, \vep_1, \vep_2)$ 
      will output `Yes' with probability $9/10$; if $f$ is $\vep_2$-far from any $k$-junta Boolean
      function, $\algname{Tolerant-Boolean-Junta-Tester}(f, \vep_1, \vep_2)$ 
      will output `No' with probability $9/10$.
    Besides, the algorithm is non-adaptive and makes $O\p{\f{k}{\delta^2} \log \f{k}{\delta}}$ queries to $f$, where $\delta = (\vep_2-4\vep_1)/3$.
\end{theorem}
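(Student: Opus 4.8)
The plan is to mirror the proof of \Cref{thm:tester} almost line-for-line, substituting the Boolean analogues of \Cref{prop:uc,prop:uf} (namely \Cref{prop:fc,prop:ff}) and adjusting the numerical thresholds accordingly. First I would invoke \Cref{lemma:coor-ext} with $\tau = \sqrt{\delta}$ where $\delta \coloneqq (\vep_2 - 4\vep_1)/3$, using \Cref{eq:influence_boolean_unitary} to work with $\inf_i^{\le k}[f] = \inf_i^{\le k}[U_f]$, to conclude that with probability at least $0.99$ the extracted set $S$ has $|S| \le 2k^2/\delta$ and satisfies $S \supseteq \{i \in [n] : \inf_i^{\le k}[f] \ge \delta/k\}$.

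Next I would handle the two cases. If $f$ is $\vep_1$-close to a $k$-junta on some $T' \subseteq [n]$ with $|T'| \le k$, then by \Cref{prop:fc} we have $\inf_{\overline{T'}}[f] \le 4\vep_1$; setting $T = T' \cap S$ and applying \Cref{lemma_inf_appro} (valid since the degree-$k$ influences outside $S$ within $T'$ are small by the guarantee on $S$) yields $\inf_{\overline{T}}[f] \le 4\vep_1 + \delta$ for some $T \subseteq S$ with $|T| \le k$. If instead $f$ is $\vep_2$-far from every $k$-junta, then by \Cref{prop:ff}, $\inf_{\overline{T}}[f] \ge \vep_2$ for every $T \subseteq [n]$ with $|T| \le k$, in particular for every size-$k$ subset of $S$. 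Then I would use that each \algname{Fourier-Sample} call on $U_f$ returns $x$ with probability $|\widehat{U_f}(x)|^2$, so by \Cref{prop:fourier_boolean_unitary} the indicator $\mathbf{1}[\supp(x^{(i)}) \cap T \neq \emptyset]$ has expectation exactly $\inf_T[f]$; a Hoeffding bound with $M = \Theta(\log(\binom{|S|}{k}+1)/\delta^2)$ samples plus a union bound over all size-$k$ subsets $T$ of $S$ gives, with probability at least $0.99$, $|s_{\overline{T}} - \inf_{\overline{T}}[f]| \le \delta$ simultaneously for all such $T$.

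Finally I would close the gap: the acceptance threshold should be $4\vep_1 + 2\delta$ (the Boolean counterpart of $2\vep_1^2 + 2\delta$ in \Cref{alg:tester}), and the condition $\vep_2 > 4\vep_1$ is exactly what makes $4\vep_1 + 2\delta < \vep_2 - \delta$, so the YES and NO cases are separated by more than the estimation error. Combining the three $0.99$-probability events by a union bound gives overall success probability at least $9/10$. For the query count, \algname{Coordinate-Extractor} uses $O\p{\f{k}{\tau^2}\log\f{k}{\tau^2}} = O\p{\f{k}{\delta}\log\f{k}{\delta}}$ queries and the sampling stage uses $M = O\p{\f{\log\binom{|S|}{k}}{\delta^2}} = O\p{\f{k\log(|S|/k)}{\delta^2}} = O\p{\f{k}{\delta^2}\log\f{k}{\delta}}$ queries (since $|S| \le 2k^2/\delta$), for a total of $O\p{\f{k}{\delta^2}\log\f{k}{\delta}}$; non-adaptivity follows as in \Cref{thm:tester} since all \algname{Fourier-Sample} calls can be made up front and then post-processed.

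I do not expect any genuine obstacle, since the argument is a direct transplant of the unitary proof. The only point requiring mild care is bookkeeping the constants: one must check that the Boolean thresholds $4\vep_1$ (from \Cref{prop:fc}) and $\vep_2$ (from \Cref{prop:ff}) really are additively separated by $3\delta$, whereas the unitary case separated $2\vep_1^2$ and $\vep_2^2/4$ — so the Boolean gap condition $\vep_2 > 4\vep_1$ plays the role that $\vep_2^2 > 8\vep_1^2$ (i.e. $\vep_2 > 2\sqrt 2\,\vep_1$) played before, and the factor-$4$ loss in \Cref{prop:fc} (versus the factor $2$ in the influence bound of \Cref{prop:uc} scaled by the $\sqrt{2\dist}$ relation) is exactly what degrades the required gap from the unitary setting.
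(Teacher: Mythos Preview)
Your proposal is correct and takes essentially the same approach as the paper: the paper does not give a separate proof for \Cref{thm:tester-boolean} but simply remarks that the argument for \Cref{thm:tester} goes through once $2\vep_1^2$ and $\vep_2^2/4$ are replaced by $4\vep_1$ and $\vep_2$ (via \Cref{prop:fc,prop:ff}) and $\delta$ is redefined as $(\vep_2-4\vep_1)/3$, which is exactly the substitution you carry out.
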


\section{Tolerant Junta Tester with Constant Gap using Single-Qubit Operations
}\label{sec:local}
In this section we introduce \algname{Influence-Sample} as \Cref{alg:is} 
which was proposed by Bao and Yao \cite{bao2023testing},
and then 
  show that we can replace the invocation of \algname{Fourier-Sample}
  with \algname{Influence-Sample} to obtain the following theorem. 

\begin{algorithm}
\caption{\algname{Influence-Sample}$(U)$}
\label{alg:is}

\begin{algorithmic}[1]\label{alg:2}
\Input{Oracle access to $n$-qubit unitary $U$.}	
\Output{$x \in \lrs{0,1}^n $.}
    \item Uniformly randomly choose $y\in \{0,1\}^n$. 
        Prepare state $\ket{y}$;
    
    \item Uniformly randomly choose $V$ from 
      \begin{center}
      $\lrs{
          I = \p{\begin{matrix}1&0\\0&1\end{matrix}},
          H = \f{1}{\sqrt{2}}\p{\begin{matrix}1&1\\1&-1\end{matrix}},
          R = \f{1}{\sqrt{2}}\p{\begin{matrix}1&-i\\-i&1\end{matrix}}
      }$;
      \end{center}
    
    \item Obtain 
      $\p{V^{\otimes n}}^\dagger U V^{\otimes n} \ket{y}$;
    
    \item Measure the qubits in the computational basis.
      Let the result be $y'$;

    \item Return $x\coloneqq  y \oplus y'$, where $\oplus$ is 
      the XOR operator.
\end{algorithmic}
\end{algorithm}

\begin{theorem}[Formal version of \Cref{thm:i4}]\label{thm:tester-l}
    For any $n$-qubit unitary $U$, integer $1 \le k < n$, and constant numbers 
      $\vep_1, \vep_2 \in (0,1)$ satisfying $ 2\sqrt 3\vep_1 < \vep_2$, 
      if $U$ is $\vep_1$-close to some $k$-junta unitary, 
      $\algname{Tolerant-Junta-Tester-Local}(U, k, \vep_1, \vep_2)$ 
      will output `Yes' with probability $9/10$. 
    On the other hand, if $U$ is $\vep_2$-far from any $k$-junta 
      unitary, $\algname{Tolerant-Junta-Tester-Local}(U, k, 
      \vep_1, \vep_2)$ will output `No' with probability $9/10$.
    Besides, the algorithm is non-adaptive, only 
      requires single-qubit operations, does not require ancilla qubits, and makes 
$O\p{\f{k}{\delta^2} \log \f{k}{\delta}}$
queries to $U$, where $\delta = (\vep_2^2-12\vep_1^2)/18$.
\end{theorem}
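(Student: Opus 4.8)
The plan is to mirror the proof of \Cref{thm:tester} almost verbatim, replacing the only quantum subroutine \algname{Fourier-Sample} with \algname{Influence-Sample}, and then tracking how the distance/influence thresholds change. The first step is to understand what \algname{Influence-Sample} actually outputs: one must show that if $x \in \{0,1\}^n$ is the returned string, then for each coordinate $i$, $\Pr[x_i \neq 0]$ equals (up to a known constant) the influence $\inf_i[U]$, and more generally that $\Pr[\supp(x) \cap T \neq \emptyset]$ is a known affine function of $\inf_T[U]$. The averaging over $V \in \{I, H, R\}$ and over $y \in \{0,1\}^n$ is exactly what symmetrizes the three Pauli directions, so the probability that $x_i \neq 0$ should come out to roughly $\frac{2}{3}\inf_i[U]$ (each of $X,Y,Z$ contributes with weight $1/3$, and conjugating by $H$ or $R$ rotates a given Pauli onto $Z$, which flips $y_i$, while $I$ only sees the $Z$-part); I would state this as a lemma, presumably already available from \cite{bao2023testing}, giving the exact constant, call it $c$, with $\E[s_{\overline T}] = c \cdot \inf_{\overline T}[U]$ for the appropriately defined counter.

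Second, with that constant in hand, I would re-derive the two one-sided bounds. On the YES side, if $U$ is $\vep_1$-close to a $k$-junta on $T'$, then \Cref{prop:uc} gives $\inf_{\overline{T'}}[U] \le 2\vep_1^2$, and after intersecting with the extracted set $S$ and applying \Cref{lemma_inf_appro} we get some $T \subseteq S$, $|T| \le k$, with $\inf_{\overline T}[U] \le 2\vep_1^2 + \delta$. On the NO side, \Cref{prop:uf} gives $\inf_{\overline T}[U] \ge \vep_2^2/4$ for every size-$k$ set. For these to be separable after the loss factor $c$ and the two additive $\delta$ slacks from Hoeffding, I need $c(2\vep_1^2 + 2\delta) < c(\vep_2^2/4)$ — i.e. the gap $\vep_2^2/4 - 2\vep_1^2$ must be at least $3\delta$, which is exactly why the statement sets $\delta = (\vep_2^2 - 12\vep_1^2)/18$: solving $\vep_2^2/4 - 2\vep_1^2 = 3\delta$ in the rescaled variables (or tracking the factor-of-$3/2$ discrepancy between $\vep_2^2/4$ here versus $\vep_2^2$ in the Boolean normalization) yields the $2\sqrt 3$ threshold $\vep_2 > 2\sqrt 3 \vep_1$ rather than $2\sqrt 2 \vep_1$. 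I would set up $\algname{Tolerant-Junta-Tester-Local}$ to be identical to \Cref{alg:tester} except that line 3 calls \algname{Influence-Sample}, line 5 uses the threshold $c(2\vep_1^2 + 2\delta)$ with the new $\delta$, and the Hoeffding step uses $M = \Theta(\delta^{-2}\log(\binom{|S|}{k}+1))$ just as before (the counter $s_{\overline T}$ is still an average of i.i.d.\ $\{0,1\}$ variables, so Hoeffding applies unchanged).

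Third, I would verify the resource claims: \algname{Influence-Sample} as written prepares a computational-basis state $\ket y$, applies $V^{\otimes n}$ (single-qubit gates only), one oracle call to $U$, another layer of single-qubit gates, and a computational-basis measurement — so indeed no CNOTs, no ancillas. Query complexity is one oracle call per sample, and the number of samples is $O(k/\tau^2 \log(k/\tau^2))$ for \algname{Coordinate-Extractor} plus $M = O(\delta^{-2}\log\binom{|S|}{k})$; with $\tau = \sqrt\delta$ and $|S| \le 2k^2/\delta$ this collapses to $O\p{\frac{k}{\delta^2}\log\frac{k}{\delta}}$, matching the statement. Non-adaptivity follows as in \Cref{thm:tester}: run all \algname{Influence-Sample} calls up front, feed the samples into \algname{Coordinate-Extractor} to produce $S$, then post-process. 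The main obstacle I anticipate is pinning down the exact constant $c$ in the influence–sampling relation and making sure the per-set event $\supp(x^{(i)}) \cap T \neq \emptyset$ has the right expectation under the $(y,V)$ randomization — a careless computation there would shift the gap constant away from $2\sqrt 3$. A secondary subtlety is checking that \Cref{lemma_inf_appro} and \Cref{prop:uc}, \Cref{prop:uf} are all about the genuine (degree-unbounded) influence, which is what \algname{Influence-Sample} estimates, whereas \algname{Coordinate-Extractor} only controls degree-$\le k$ influence — but this is already handled by the structure of \Cref{lemma_inf_appro} and needs no new argument.
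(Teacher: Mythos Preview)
Your overall plan matches the paper's: replace \algname{Fourier-Sample} by \algname{Influence-Sample} everywhere, re-derive the YES/NO thresholds, and re-run the Hoeffding/union-bound argument. The structure, query count, and non-adaptivity argument are all correct.

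There is, however, a concrete misconception that would derail your threshold computation. You assume there is a single constant $c$ with $\E[s_{\overline T}] = c\cdot \inf_{\overline T}[U]$, i.e.\ that $\Pr[x_T\neq 0]$ is an exact affine function of $\inf_T[U]$. It is not. The key lemma (the paper's \Cref{lemma:is}, from \cite{bao2023testing}) only gives the two-sided bound
\[
\tfrac{2}{3}\,\inf_T[U] \;\le\; \Pr[x_T\neq 0] \;\le\; \inf_T[U],
\]
and the two sides are used \emph{asymmetrically}: on the YES side one applies the upper bound $\Pr[x_{\overline{T}}\neq 0]\le \inf_{\overline{T}}[U]\le 2\vep_1^2+\delta$, while on the NO side one applies the lower bound $\Pr[x_{\overline{T}}\neq 0]\ge \tfrac{2}{3}\inf_{\overline{T}}[U]\ge \vep_2^2/6$. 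It is precisely this $2/3$ loss on the NO side---not a common factor $c$ that would cancel---that turns $\vep_2^2/4$ into $\vep_2^2/6$ and hence degrades the gap requirement from $2\sqrt{2}$ to $2\sqrt{3}$. Your inequality ``$c(2\vep_1^2+2\delta)<c(\vep_2^2/4)$'' would leave the gap at $2\sqrt{2}$ and could not recover $\delta=(\vep_2^2-12\vep_1^2)/18$.

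A smaller point: the paper does \emph{not} invoke \Cref{lemma_inf_appro} in the local proof. The local \algname{Coordinate-Extractor} guarantees $S\supseteq\{i:p_i\ge \tau^2/k\}$ for $p_i\coloneqq\Pr[|x|\le k,\,x_i\neq 0]$, not for $\inf_i^{\le k}[U]$, so \Cref{lemma_inf_appro} is not directly applicable. Instead the paper argues at the level of events: if $x_{\overline{T}}\neq 0$ but $x_{\overline{T'}}=0$ (with $T=T'\cap S$), then $|x|\le k$ and some $i\in T'\setminus S$ has $x_i\neq 0$, whence $\Pr[x_{\overline{T}}\neq 0]\le \Pr[x_{\overline{T'}}\neq 0]+\sum_{i\in T'\setminus S}p_i\le \inf_{\overline{T'}}[U]+\delta$. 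This probabilistic analogue of \Cref{lemma_inf_appro} is what you actually need.
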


Here we assume the gap $\vep_2>2\sqrt{3}\vep_1$, which is  larger than $\vep_2>2\sqrt{2}\vep_1$ as required for \Cref{alg:tester}. However, \Cref{alg:is} only involves single-qubit operations and does not require ancilla qubits, while \Cref{alg:tester} involves \algname{Fourier-sample} which requires preparing multiple copies of EPR states. Indeed, \algname{Influence-Sample} has recently been implemented on a photonic platform~\cite{zhan2025}.

Before proving \Cref{thm:tester-l}, we need the following key property of \algname{Influence-Sample}.
The proof is essentially from \cite{bao2023testing}, for self-containment we also include the proof
    in \Cref{appendixB}.

\begin{restatable}{lemma}{lemmais}
\label{lemma:is}
    For any $n$-qubit unitary $U$ and $T\subseteq [n]$, 
      let $x=\algname{Influence-Sample}\p{U}$, we have
    \begin{equation}\label{eq:lemmaisI}
      \Prob{x_T\neq 0} \le \inf_T\lrb{U} \le \f{3}{2}\Prob{x_T\neq 0}.
    \end{equation}
    Besides, if we remove the use of $R$ gate in 
      the algorithm, we have
    \begin{equation}\label{eq:lemmaisII}
      \Prob{x_T\neq 0} \le \inf_T\lrb{U} 
        \le 2\Prob{x_T\neq 0}.
    \end{equation}
\end{restatable}


\begin{algorithm}
\caption{\algname{Coordinate-Extractor-Local}$(U,k,\tau)$}
  \label{alg:isl}

\begin{algorithmic}[1]
\Input{Oracle access to $n$-qubit unitary $U$, integer $1 \le k < n$,
   $0 < \tau < 1$}
\Output{$S \subseteq [n]$.}
\item Initialize $e_i  \coloneqq 0$ for all $i\in [n]$, 
  $\vep \coloneqq  1/2$, $\delta \coloneqq  0.01$;
\item Repeat the following for $M \coloneqq\f{2k}{\vep^2\tau^2}
  \log\f{k^2}{\delta \tau^2}$ times:
\begin{itemize}
	\item Sample $x \in \lrs{0,1}^n$ by running 
      \algname{Influence-Sample}$(U)$;
	\item If $|x| \le k$, for all $i \in [n]$ s.t. $x_i \neq 0$, 
      $e_i \leftarrow e_i+1$;   
\end{itemize}

\item For $i\in [n]$, let $e_i \leftarrow {e_i}/{M}$;

\item Return $S \coloneqq  \lrs{i \in \lrb{n}: e_i \ge (1-\vep)\f{\tau^2}{k}}$.
\end{algorithmic}
\end{algorithm}

\begin{theorem}\label{thm:coor-ext-l}
	In \Cref{alg:isl}, $|S| \le {2k^2}/{\tau^2}$, 
      and the probability of $S \supseteq \lrs{i \in \lrb{n}: 
      p_i \ge {\tau^2}/{k}}$ is at least $1-\delta$,
      where
      \[
      p_i \coloneqq \Prob{x = \algname{Influence-Sample}\p{U},
        \abs{x}\le k, x_i\neq 0}.
      \]  
\end{theorem}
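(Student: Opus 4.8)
The plan is to follow the proof of \Cref{lemma:coor-ext} almost verbatim, with the only substantive change being that the role played there by the degree-$k$ influence $\inf_i^{\le k}[U]$ is now played by the sampling probability $p_i$. Two facts are needed: the bound $\sum_{i\in[n]} p_i \le k$ (the analogue of $\sum_i \inf_i^{\le k}[U]\le k$), and the observation that the counter $e_i$ accumulated in Step 2, before the normalization in Step 3, is a sum of $T$ i.i.d.\ Bernoulli$(p_i)$ random variables, since the $T$ calls to \algname{Influence-Sample} are independent and each increments $e_i$ precisely when its output $x$ satisfies $|x|\le k$ and $x_i\neq 0$.

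For the deterministic size bound, I would note that in each round the inner loop increments at most $k$ counters (it acts only when $|x|\le k$, and then $x$ has at most $k$ nonzero coordinates), so $\sum_{i\in[n]} e_i \le kT$ after Step 2 and $\sum_{i\in[n]} e_i \le k$ after Step 3. Since an index enters $S$ only if its normalized count is at least $(1-\vep)\tau^2/k = \tau^2/(2k)$, averaging bounds $|S|$ by $k\big/\big(\tau^2/(2k)\big) = 2k^2/\tau^2$.

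For the inclusion I would first argue $\sum_{i\in[n]} p_i = \sum_{i\in[n]}\Pr\!\big[\,|x|\le k,\ x_i\neq 0\,\big]\le k$, because every outcome $x$ with $|x|\le k$ contributes to at most $k$ of the summands; hence at most $k^2/\tau^2$ indices satisfy $p_i\ge\tau^2/k$. Fix such an $i$. The event $i\notin S$ means the normalized $e_i$ falls below $(1-\vep)\tau^2/k\le(1-\vep)p_i$, so the multiplicative Chernoff bound gives probability at most $\exp\!\big(-Tp_i\vep^2/2\big)\le\exp\!\big(-T\tau^2\vep^2/(2k)\big)$, and substituting $T=\tfrac{2k}{\vep^2\tau^2}\log\tfrac{k^2}{\delta\tau^2}$ makes this at most $\delta\tau^2/k^2$. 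A union bound over the at most $k^2/\tau^2$ relevant indices yields total failure probability at most $\delta$, i.e.\ $S\supseteq\{i : p_i\ge\tau^2/k\}$ with probability at least $1-\delta$.

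I do not anticipate a genuine obstacle: once $\sum_i p_i\le k$ is in hand the calculation is line-for-line the one in \Cref{lemma:coor-ext}. The only points needing a little care are bookkeeping the factor $T$ introduced by the normalization in Step 3 (which shifts the threshold accordingly), and observing that the optional $R$ gate inside \algname{Influence-Sample} is irrelevant here — we use only the crude fact that a sample with $|x|\le k$ touches at most $k$ coordinates, never the sharper relation between $\Pr[x_T\neq 0]$ and $\inf_T[U]$ from \Cref{lemma:is}.
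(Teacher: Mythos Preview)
Your proposal is correct and follows essentially the same approach as the paper: the paper's proof is extremely terse, observing only that $\E[e_i]=p_i$ and that $\sum_i p_i=\E\bigl[\sum_i e_i\bigr]\le k$, then deferring to the argument of \Cref{lemma:coor-ext}. Your write-up simply unpacks that deferral in full, including the same counting bound $\sum_i e_i\le kT$, the same Chernoff estimate, and the same union bound over at most $k^2/\tau^2$ indices.
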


\begin{proof}
    Note that $\E{e_i} = p_i$ holds for all $i \in [n]$.
    Following the same line as 
    \Cref{lemma:coor-ext},
it is sufficient to show $\sum_{i=1}^n p_i\le k$ as follows,
    \begin{equation*}
        \sum_{i=1}^n p_i = \E{\sum_{i=1}^n e_i}
          \le \f{1}{M} \cdot Mk =k.
    \end{equation*}
\end{proof}

\begin{algorithm}
\caption{\algname{Tolerant-Junta-Tester-Local}$(U, k, \vep_1, \vep_2)$}
\label{alg:tester-l}

\begin{algorithmic}[1]

\Input{Oracle access to unitary $U$, 
  integer $1 \le k < n$, $0 < 2\sqrt{3}\vep_1 < \vep_2 < 1$}
\Output{`Yes' or `No'}
\item Let $\delta  \coloneqq {(\vep_2^2/6 - 2\vep_1^2)}/{3}$;
\item Run $S\leftarrow \algname{Coordinate-Extractor-Local}\p{
  U, k, \tau}$ with $\tau \coloneqq\sqrt{\delta}$;
\item For $\forall i \in [M]$, run $\algname{Influence-Sample}\p{U}$
  to obtain $x^{(i)}$, with 
  $M \coloneqq \log {\p{200 \p{\binom{\abs{S}}{k}+1}}}/2{\delta^2}$;
\item If $\abs{S}\ge k$, repeat the following for each subset 
  $T$ of $S$ with $\abs{T} = k$; otherwise let $T=S$:
\begin{itemize}
    \item Initialize $s_{\overline{T}}  \coloneqq 0$;
    \item For $i = 1, 2, \dots,M$, 
      if
      $\supp(x^{(i)}) \cap \overline{T} \neq \emptyset$,  
    $s_{\overline{T}} \leftarrow s_{\overline{T}}+1$;  
    \item Let $s_{\overline{T}} 
      \leftarrow {s_{\overline{T}}}/{M}$;
\end{itemize}
\item If $\min_{T\subseteq S} s_{\overline{T}} \leq 2\vep_1^2+2\delta$, return 
  `Yes', otherwise return `No'.
\end{algorithmic}
\end{algorithm}

Our tolerant junta tester with single-qubit operations is \Cref{alg:tester-l}.
We are now ready to prove its property.

\begin{proof}[Proof of \Cref{thm:tester-l}]
    Let $x = \algname{Influence-Sample}\p{U}$.
    The proof follows a similar line as \Cref{thm:tester},
      the main difference is that here we will use 
      a probability regarding $x$ to replace the $\inf^{\le k}$
      notation in the proof of \Cref{thm:tester}.
      
    From \Cref{thm:coor-ext-l} we know that with probability 
      at least $0.99$, we have $\abs{S}\le {2k^2}/{\delta}$ and

    \begin{equation*}
        S \supseteq \lrs{i \in \lrb{n}: 
          p_i \ge \frac{\tau^2}{k}},
    \end{equation*}
    where $p_i = \Prob{\abs{x}\le k, x_i\neq 0}$.
    
    If $U$ is $\vep_1$-close to some $k$-junta unitary,
      according to \Cref{prop:uc}, there exists 
      $T'\subseteq [n]$ such that 
      $\inf_{\overline{T'}}[U] \le 2 \varepsilon_1^2$.
    Let $T= T'\cap S$, combining with \Cref{lemma:is}, we have 
    \begin{equation*}
        \Prob{x_{\overline{T}} \neq 0} 
          \le \Prob{x_{\overline{T'}} \neq 0} 
            + \sum_{i\in T'\backslash S} \Prob{\abs{x}\le k, x_i\neq 0}
          \le \inf_{\overline{T'}}[U] + k \cdot \f{\tau^2}{k}
          \le 2 \varepsilon_1^2 + \delta.
    \end{equation*}
    We conclude that there must exist a $T\subseteq S$ with 
      $\abs{T}\le k$, such that 
      $\Prob{x_{\overline{T}} \neq 0}  \le 2 \varepsilon_1^2 + \delta$.

    If $U$ is $\vep_2$-far from any $k$-junta unitary,
      according to \Cref{prop:uf}, we have that
      for any $T\subseteq [n]$ with $\abs{T}\le k$,      
      $\inf_{\overline{T}}[U] \ge \varepsilon_2^2/4$.
    Therefore for any $T\subseteq S$ with $\abs{T}\le k$,
      combining with \Cref{lemma:is},
    \begin{equation*}
      \Prob{x_{\overline{T}} \neq 0} \ge 
        \f{2}{3} \inf_{\overline{T}}[U] \ge \f{\vep_2^2}{6}.
    \end{equation*}
    The rest of the proof is similar to the proof 
      of \Cref{thm:tester}.


    It is easy to see the algorithm is  
      non-adaptive, requires only single-qubit operations, and makes $O\p{k\log k}$ queries.
\end{proof}

\paragraph{For Boolean functions}

Most part of \Cref{alg:tester-l} works well for Boolean
  functions, except that we need to make some changes to the parameters. 
According to \Cref{prop:fc,prop:ff}, 
  we replace $2\vep_1^2$ and $\vep_2^2/6$ by $4\vep_1$ and $2\vep_2/3$
  in the first and the fifth line of the algorithm, and therefore 
  $\delta \coloneqq (2\vep_2/3 - 4\vep_1)/{3}$.
We call the algorithm with new parameters \algname{Tolerant-Boolean-Junta-Tester}.
We have the following theorem.

\begin{theorem}[Formal version of \Cref{thm:i3}]
    \label{thm:tester-boolean-l}
    For any Boolean function $f:\{\pm1\}^n\rightarrow \{\pm 1\}$, integer $k < n$, and constant numbers 
      $\vep_1, \vep_2$ satisfying $6\vep_1 < \vep_2$, if $f$ is 
      $\vep_1$-close to some $k$-junta function, 
      $\algname{Tolerant-Boolean-Junta-Tester-Local}(f, k, \vep_1, \vep_2)$ 
      will output `Yes' with probability $9/10$. 
    On the other hand, if $f$ is $\vep_2$-far from any $k$-junta 
      function, $\algname{Tolerant-Boolean-Junta-Tester}(f, k, \vep_1, \vep_2)$ 
      will output `No' with probability $9/10$.
    Besides, the algorithm is non-adaptive, only 
      requires single-qubit operations, does not require ancilla qubits, and makes 
      $O\p{\f{k}{\delta^2} \log \f{k}{\delta}}$
      queries to $f$, where $\delta = (2\vep_2-12\vep_1)/9$.
\end{theorem}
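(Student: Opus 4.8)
The plan is to reuse the proof of \Cref{thm:tester-l} essentially verbatim, with two adjustments: replace the unitary influence estimates \Cref{prop:uc} and \Cref{prop:uf} by the Boolean analogues \Cref{prop:fc} and \Cref{prop:ff}, and observe that running \algname{Influence-Sample} on the diagonal unitary $U_f$ of \Cref{prop:fourier_boolean_unitary} is the right primitive for a Boolean $f$. Since $\wU_f$ is supported on $\{0,3\}^n$ with $\wU_f(x)=\widehat f(\supp x)$, one has $\inf_T[U_f]=\inf_T[f]$ for every $T\subseteq[n]$ and $\inf_i^{\le k}[U_f]=\inf_i^{\le k}[f]$ (this is \Cref{eq:influence_boolean_unitary}); hence \Cref{thm:coor-ext-l} and \Cref{lemma:is} applied to $U=U_f$ say that \algname{Coordinate-Extractor-Local}$(U_f,k,\tau)$ returns $S$ with $|S|\le 2k^2/\tau^2$ containing every $i$ with $\Pr[\,|x|\le k,\,x_i\neq 0\,]\ge\tau^2/k$, and that every sample $x=\algname{Influence-Sample}(U_f)$ obeys $\Pr[x_T\neq 0]\le\inf_T[f]\le\tfrac32\Pr[x_T\neq 0]$. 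I would then set $\delta:=(2\vep_2/3-4\vep_1)/3$, $\tau:=\sqrt\delta$, run the extractor to get $S$ with $|S|\le 2k^2/\delta$, estimate $\Pr[x_{\overline T}\neq 0]$ by the empirical frequency $s_{\overline T}$ over $M=O(\delta^{-2}\log\binom{|S|}{k})$ fresh \algname{Influence-Sample}$(U_f)$ calls for every size-$k$ subset $T\subseteq S$ (and $T=S$ when $|S|<k$), and output `Yes' iff $\min_T s_{\overline T}\le 4\vep_1+2\delta$.

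For the \textbf{Yes} direction: if $f$ is $\vep_1$-close to a $k$-junta on some $T'$ with $|T'|=k$, then $\inf_{\overline{T'}}[f]\le 4\vep_1$ by \Cref{prop:fc}. Taking $T:=T'\cap S$, every $i\in T'\setminus S$ has $\Pr[\,|x|\le k,\,x_i\neq 0\,]<\tau^2/k$ (else it would be in $S$), and the event $\{x_{\overline T}\neq 0\}$ is contained in $\{x_{\overline{T'}}\neq 0\}\cup\bigcup_{i\in T'\setminus S}\{x_i\neq 0\}$, so a union bound and the lower half of \Cref{lemma:is} give $\Pr[x_{\overline T}\neq 0]\le\inf_{\overline{T'}}[f]+k\cdot(\tau^2/k)\le 4\vep_1+\delta$. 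Hoeffding's inequality with the stated $M$ gives $\Pr[\,|s_{\overline T}-\Pr[x_{\overline T}\neq 0]|\ge\delta\,]\le 0.01/(\binom{|S|}{k}+1)$, so a union bound over the $\le\binom{|S|}{k}+1$ candidates makes all estimates $\delta$-accurate with probability $\ge 0.99$; then, padding $T$ up to a size-$k$ subset of $S$ (or simply taking $T=S$ when $|S|<k$) and using that $\{x_{\overline{T''}}\neq 0\}\subseteq\{x_{\overline T}\neq 0\}$ for $T''\supseteq T$, the algorithm finds a set with $s_{\overline T}\le 4\vep_1+2\delta$ and outputs `Yes'.

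For the \textbf{No} direction: if $f$ is $\vep_2$-far from every $k$-junta, then $\inf_{\overline T}[f]\ge\vep_2$ for every $|T|\le k$ by \Cref{prop:ff}, and the upper half of \Cref{lemma:is} turns this into $\Pr[x_{\overline T}\neq 0]\ge\tfrac23\inf_{\overline T}[f]\ge\tfrac{2\vep_2}{3}$ for every size-$k$ subset $T\subseteq S$ (and for $T=S$ when $|S|<k$). On the same $0.99$-probability event that all $s_{\overline T}$ are $\delta$-accurate, every $s_{\overline T}>\tfrac{2\vep_2}{3}-\delta$. By the choice $\delta=(2\vep_2/3-4\vep_1)/3$ one has $\tfrac{2\vep_2}{3}-\delta=4\vep_1+2\delta$, so the strict deviation delivered by Hoeffding gives $\min_T s_{\overline T}>4\vep_1+2\delta$ and the algorithm outputs `No'; the strictness of this separation is exactly what the hypothesis $\vep_2>6\vep_1$ (equivalently $\delta>0$) buys. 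A final union bound over the two $0.01$ failure events (extractor correctness and estimate accuracy) keeps the overall success probability at $9/10$ in both directions.

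For resources, \algname{Coordinate-Extractor-Local} makes $O(\tfrac{k}{\tau^2}\log\tfrac{k}{\tau^2})=O(\tfrac k\delta\log\tfrac k\delta)$ calls to \algname{Influence-Sample}, and the estimation phase makes $M=O(\delta^{-2}\log\binom{|S|}{k})=O(\tfrac{k}{\delta^2}\log(|S|/k))=O(\tfrac{k}{\delta^2}\log\tfrac k\delta)$ calls (using $|S|\le 2k^2/\delta$); each \algname{Influence-Sample} costs one query to $f$, for a total of $O(\tfrac{k}{\delta^2}\log\tfrac k\delta)$ with $\delta=(2\vep_2-12\vep_1)/9$. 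Each \algname{Influence-Sample} only prepares a computational-basis state, applies one single-qubit gate from $\{I,H,R\}$ to every wire, queries $f$ once, and measures in the computational basis — no CNOTs, no ancillas — and since all its invocations can be performed first and only afterwards fed to the extractor and the estimator, the tester is non-adaptive and single-qubit. I do not expect a genuine obstacle: the substantive content already lives in \Cref{lemma:is} (together with \Cref{prop:fc} and \Cref{prop:ff}), and the only care required is to invoke the two halves of \Cref{lemma:is} in the correct direction — $\Pr\le\inf$ on the Yes side to keep $s_{\overline T}$ small, $\inf\le\tfrac32\Pr$ on the No side to keep it large — and then to check that the three additive $\delta$'s (the $\tau^2/k$ cut-off of the extractor, the Hoeffding error, and the slack folded into the definition of $\delta$) leave the Yes-threshold $4\vep_1+2\delta$ at or below the No-value $2\vep_2/3-\delta$; dropping the $R$ gate would only replace the factor $3/2$ by $2$ and thereby weaken the requirement to $\vep_2>8\vep_1$.
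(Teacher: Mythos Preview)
Your proposal is correct and follows exactly the approach the paper intends: the paper does not give a separate proof of \Cref{thm:tester-boolean-l} but simply instructs the reader to rerun the argument of \Cref{thm:tester-l} with $2\vep_1^2$ and $\vep_2^2/6$ replaced by $4\vep_1$ and $2\vep_2/3$ via \Cref{prop:fc} and \Cref{prop:ff}, which is precisely what you do (your additional remark that \algname{Influence-Sample} is applied to the diagonal unitary $U_f$ of \Cref{prop:fourier_boolean_unitary}, so that \Cref{lemma:is} and \Cref{thm:coor-ext-l} transfer via \Cref{eq:influence_boolean_unitary}, makes explicit a point the paper leaves implicit). One tiny tightening: in the Yes direction your event inclusion should read $\{x_{\overline T}\neq 0\}\subseteq\{x_{\overline{T'}}\neq 0\}\cup\bigcup_{i\in T'\setminus S}\{|x|\le k,\ x_i\neq 0\}$ (when $x_{\overline{T'}}=0$ one has $\supp(x)\subseteq T'$ and hence $|x|\le k$), which is exactly the form needed to invoke the $\tau^2/k$ bound you use.
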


\section{Tolerant Junta Tester with Arbitrary Gap}\label{sec:gapless}
In this section, as an application of \Cref{alg:ce}, we propose  \Cref{alg:tester2} to solve the tolerant junta testing problem for any $\epsilon_1,\epsilon_2$. Specifically, our result is summarized in the following theorem, and its proof is postponed to the end of this section. 
\begin{theorem}[Formal version of \Cref{thm:i5}]\label{thm:general}
For $n$-qubit unitary $U$, integer $1 \le k < n$ and $0 < \vep_1 < \vep_2 < 1$,
if $U$ is $\vep_1$-close
      to some $k$-junta unitary, 
      $\algname{Gapless-Tolerant-Junta-Tester}(U, \vep_1, \vep_2)$ 
      will output `Yes' with probability 0.98. On the other hand, if $U$ is $\vep_2$-far
      from any $k$-junta unitary, 
      $\algname{Gapless-Tolerant-Junta-Tester}(U, \vep_1, \vep_2)$ 
      will output `No' with probability 0.98. Besides, the number of queries in the algorithm is $2^{O(k \log k/\vep)}$, where $\vep  \coloneqq \varepsilon_2-\varepsilon_1$.
\end{theorem}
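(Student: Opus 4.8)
The plan is to express $\dist(U,\mathcal J_k)$ through a nuclear norm, shrink the number of relevant coordinates with \algname{Coordinate-Extractor}, and then estimate that norm by reading off all the entries of the relevant matrix with the Hadamard test, in the spirit of \cite{nadimpalli2024optimal}. First I would prove the key identity: for every $T\subseteq[n]$ with $\abs{T}=k$,
\[
\max_{V\in\mathcal J_T}\f{1}{N}\abs{\Tr\p{U^\dagger V}}=\f{1}{N}\norm{\Tr_{\overline{T}}(U)}_1,
\]
where $\norm{\cdot}_1$ denotes the nuclear (trace) norm. Indeed, writing $V=V_T\otimes I_{\overline{T}}$ gives $\Tr(U^\dagger V)=\Tr\p{\Tr_{\overline{T}}(U)^\dagger V_T}$, and $\max_{W\text{ unitary}}\abs{\Tr(A^\dagger W)}=\norm{A}_1$, attained at the unitary factor of the singular value decomposition of $A$; combining with \Cref{eq:dist_U_V} yields $\dist(U,\mathcal J_T)^2=1-\f{1}{N}\norm{\Tr_{\overline{T}}(U)}_1$, hence
\[
\dist(U,\mathcal J_{S,k})^2=1-\max_{T\in\binom{S}{k}}\f{1}{N}\norm{\Tr_{\overline{T}}(U)}_1,\qquad \dist(U,\mathcal J_k)\le\dist(U,\mathcal J_{S,k}),
\]
the latter because $\mathcal J_{S,k}\subseteq\mathcal J_k$. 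This identity turns the distance into a nuclear norm.

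Next I would run $S=\algname{Coordinate-Extractor}(U,k,\tau)$ with $\tau\coloneqq\Theta\p{(\vep_2-\vep_1)^2}$, so that by \Cref{theorem_reduction}, with probability at least $0.99$, $\abs{S}\le 2k^2/\tau^2$ and $\dist(U,\mathcal J_k)\le\dist(U,\mathcal J_{S,k})\le\dist(U,\mathcal J_k)+\sqrt{\tau}$; it therefore suffices to estimate $\max_{T\in\binom{S}{k}}\f{1}{N}\norm{\Tr_{\overline{T}}(U)}_1$ up to a small additive constant. For each of the $\binom{\abs{S}}{k}=2^{O(k\log(k/(\vep_2-\vep_1)))}$ sets $T$, put $A_T\coloneqq\f{1}{2^{n-k}}\Tr_{\overline{T}}(U)$, a $2^k\times 2^k$ matrix with $\f{1}{N}\norm{\Tr_{\overline{T}}(U)}_1=\f{1}{2^k}\norm{A_T}_1$ and operator norm at most $1$ (each ``$\overline{T}$-block'' of $U$ is a contraction, so $\f{1}{2^k}\norm{A_T}_1\le 1$). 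Its $(i,j)$ entry is an inner product $A_T[i][j]=\bra{\phi_{T,i}}\p{U\otimes I}\ket{\psi_{T,j}}$, where $\ket{\psi_{T,j}}$ carries $\ket{j}$ on the register indexed by $T$ together with the maximally entangled state $\f{1}{\sqrt{2^{n-k}}}\sum_{\ell}\ket{\ell}\ket{\ell}$ on the $\overline{T}$-register paired with an $(n-k)$-qubit ancilla (and $\ket{\phi_{T,i}}$ likewise with $\ket{i}$); both states are Clifford-preparable, so the Hadamard test of \cite{luongo2022quantum} estimates each $A_T[i][j]$ to additive error $\eta$ using $O(1/\eta^2)$ queries to $U$. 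Estimating all $4^k$ entries produces $\widehat{A}_T$ and the estimate $\f{1}{2^k}\norm{\widehat{A}_T}_1$.

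For the error bookkeeping, from $\norm{A}_1\le\sqrt{2^k}\norm{A}_F$ and $\norm{A_T-\widehat{A}_T}_F\le 2^k\eta$ when all entries are $\eta$-accurate, one gets $\abs{\f{1}{2^k}\norm{\widehat{A}_T}_1-\f{1}{2^k}\norm{A_T}_1}\le 2^{k/2}\eta$, so choosing $\eta\coloneqq\Theta\p{(\vep_2-\vep_1)^2\,2^{-k/2}}$ makes the per-$T$ error $\xi=\Theta((\vep_2-\vep_1)^2)$; a union bound over the $\binom{\abs{S}}{k}\cdot 4^k$ Hadamard tests costs only an $O(k\log k)$ repetition factor to keep the total failure probability below $0.02$. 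Then $\widehat{c}\coloneqq\max_T\f{1}{2^k}\norm{\widehat{A}_T}_1$ is within $\xi$ of $\max_{T\in\binom{S}{k}}\f{1}{N}\norm{\Tr_{\overline{T}}(U)}_1$, and $\widehat{d}\coloneqq\sqrt{1-\widehat{c}}$ lies within $\sqrt{\tau}+\sqrt{\xi}$ of $\dist(U,\mathcal J_k)$ (from above in the close case, from below in the far case, using $\sqrt{a+b}\le\sqrt{a}+\sqrt{b}$); since $\sqrt{\tau}$ and $\sqrt{\xi}$ are at most a small constant times $\vep_2-\vep_1$, comparing $\widehat{d}$ with the threshold $(\vep_1+\vep_2)/2$ gives the correct answer with probability $0.98$. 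The total query count is $\binom{\abs{S}}{k}\cdot 4^k\cdot O\p{2^k/(\vep_2-\vep_1)^4}\cdot O(k\log k)=2^{O(k\log k/\vep)}$ with $\vep=\vep_2-\vep_1$; this is \algname{Gapless-Tolerant-Junta-Tester}, and (as observed for \Cref{thm:tester}) it can be arranged so that its quantum queries are fixed in advance, making it non-adaptive.

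I expect the main obstacles to be threefold. Establishing the nuclear-norm identity cleanly requires matching the phase optimization built into $\dist(U,V)$ against the unitary freedom in the SVD factor, and realizing each partial-trace entry as a Hadamard-test overlap requires setting up the ancilla so that the $2^{n-k}$ normalizations cancel and the estimated quantity stays $\Theta(1)$. The most delicate part, however, is the error propagation chain entrywise $\to$ Frobenius $\to$ nuclear $\to$ overlap $\to$ distance: the $\sqrt{2^k}$-type Lipschitz constant of the nuclear norm forces entrywise precision $\eta=\Theta(2^{-k/2})$, and this is exactly what produces the $2^{\widetilde{O}(k)}$ query bound; keeping $\tau$, $\eta$, the repetition counts, and the decision threshold mutually consistent so that the final error beats $(\vep_2-\vep_1)/2$ at success probability $0.98$ is the bookkeeping one must get right.
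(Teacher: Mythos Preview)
Your proposal is correct and follows essentially the same approach as the paper: the nuclear-norm identity $\dist(U,\mathcal J_T)^2=1-\tfrac{1}{N}\norm{\Tr_{\overline T}(U)}_1$, the use of \algname{Coordinate-Extractor} with $\tau=\Theta(\vep^2)$ to localize to $S$ via \Cref{theorem_reduction}, entrywise estimation of the $2^k\times 2^k$ matrix by the Hadamard test, the Frobenius-to-nuclear error bound, and the threshold $(\vep_1+\vep_2)/2$ all match the paper's \algname{Gapless-Tolerant-Junta-Tester}.

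The one technical difference is in how you read off the entries of $\Tr_{\overline T}(U)$. The paper (Algorithm~\ref{alg:we}) samples $l_m$ uniformly from $\{0,1\}^{n-k}$, estimates each $\bra{il_m}U\ket{jl_m}$ by the Hadamard test, and averages over $M=\widetilde O(2^k/\tau^2)$ samples; you instead introduce an $(n-k)$-qubit ancilla carrying the maximally entangled state so that a single overlap $\bra{\phi_{T,i}}(U\otimes I)\ket{\psi_{T,j}}$ equals $A_T[i][j]$ directly. Your route trades a sampling loop for extra workspace; both land at $2^{O(k)}/\vep^4$ queries per $T$ and the same $2^{O(k\log(k/\vep))}$ total, so this is a stylistic rather than substantive difference.
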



Specifically, the idea of  \Cref{alg:tester2} is first to find a high-influence coordinate set $S$ by \Cref{alg:ce}, and then estimate the distance $\dist(U,\mathcal{J}_T)$ between the unitary and $k$-juntas on $T$ for any $T \subseteq S$ with $|T| = k$ by \Cref{alg:we}.

\subsection{Subroutine: Estimating 	\texorpdfstring{$\dist(U,\mathcal{J}_T)$}{dist}}

As \Cref{def_dist_unitary}, for any $n$-qubit unitary $U$ and $T \subseteq [n]$ with $|T| = k$, 
\begin{equation}\label{eq:dist_U_JT}
    \begin{aligned}
        \dist(U, \mathcal{J}_T ) 
        & = \min_{V \in \mathcal{J}_T} \dist(U, V) \\
        & = \min_V \min_{\theta} \frac{1}{\sqrt{2N}} || e^{i\theta} U-V||_F,
    \end{aligned}
    \end{equation}
    where $N=2^n$. Assuming that $k$-junta unitary $V$ satisfies $V=V_T \otimes I_{\overline{T}}$, we have
    \begin{equation}\label{eq:UV_F_2}
    \begin{aligned}
        \frac{1}{N} || U-V||_F^2
        & = 2-\frac{1}{N} \Tr(U^\dagger V+V^\dagger U) \\
        & = 2-\frac{1}{N} \Tr(\Tr_{\overline{T}}(U^\dagger V)+\Tr_{\overline{T}}(V^\dagger U)) \\
        & = 2-\frac{1}{N} \Tr((\Tr_{\overline{T}}U)^\dagger V_T+(\Tr_{\overline{T}}U) V_T^\dagger),
    \end{aligned}
    \end{equation}
    where the first equality comes from \Cref{eq:UV_F}.
    We abbreviate  $\Tr_{\overline{T}}U$ as $U_T$. Combining \Cref{eq:dist_U_JT,eq:UV_F_2}, we have
    \begin{align*}
        \dist(U, \mathcal{J}_T ) 
        & = \sqrt{ \frac{1}{2}\min_V \left(2-\frac{1}{N} \Tr(U_T^\dagger V_T+U_T V_T^\dagger)\right) } \\
        & = \sqrt{ 1- \frac{1}{2N}\max_V \left( \Tr(U_T^\dagger V_T+U_T V_T^\dagger)\right)}.
    \end{align*}
    Suppose the SVD-decomposition of $U_T$ is ADB, we choose $V_T = AB$ to obtain the maximum
    \begin{align*}
        \Tr(U_T^\dagger V_T+U_T V_T^\dagger) =\Tr(D^\dagger+D) =2 \sum_i \sigma_i(D) = 2||U_T||_*,
    \end{align*}
    here $||U_T||_*$ denotes the nuclear norm (or Schatten 1-norm) of $U_T$. Therefore,
    \begin{equation}\label{eq:dist}
        \dist(U, \mathcal{J}_T ) =\sqrt{ 1- \frac{1}{N} ||U_T||_*}.
    \end{equation}
Then we estimate $\dist(U, \mathcal{J}_T )$ using \Cref{alg:we}. 
The sketch idea of \Cref{alg:we} is first to construct an approximate matrix $\widetilde{U}_T$ by estimating every entry of $U_T$ using the Hadamard test in \Cref{lemma_hadamard_test}, and then compute the estimation value  $\widetilde{\dist}(U, \mathcal{J}_T) \coloneqq \sqrt{1-||\widetilde{U}_T||_*/N}$.

\begin{algorithm}
    \caption{\algname{Warmup-Estimator($U,T,\tau,\delta$)}}
    	\label{alg:we}
    \begin{algorithmic}[1]
    \Input{Oracle access to $n$-qubit unitary $U$, $T \subseteq [n]$ with $|T| = k, 0 < \tau, \delta < 1$}

\Output{Estimation value of $\dist(U,\mathcal{J}_T)$}

    \item Let $M \coloneqq\frac{2^{k+1}}{\tau^2} \ln\left(\frac{4}{\delta}\right)$;
    \item For any $m \in [M]$: 
    \begin{itemize}
        \item Sample $l_m \in \{0,1\}^{\overline{T}}$  uniformly;
        \item For each $i, j \in \{0,1\}^{T}$, using $O\p{\frac{2^k\log (1/\delta)}{\tau^2}}$ controlled applications of $U$, to obtain estimation value $\hat{X}_{m,i,j}$ s.t. 
        \[
        \Pr\left[|\hat{X}_{m,i,j}-\bra{il_m} U\ket{jl_m}|\le \frac{\tau}{2^{{k}/{2}+1}}\right] \ge 1-\delta/2
        \]
        by the Hadamard test in \Cref{lemma_hadamard_test};
    \end{itemize}
    \item Construct a matrix $\widetilde{U}_T \in M_{2^k \times 2^k}$ by setting 
    \[
    \widetilde{U}_T[i][j]  \coloneqq\frac{2^{n-k}}{M} \sum_{m=1}^M \hat{X}_{m,i,j};
    \]
    \item Compute and return $\widetilde{\dist}(U, \mathcal{J}_T) \coloneqq\sqrt{1-\|\widetilde{U}_T\|_*/N}$.
    \end{algorithmic}
    \end{algorithm}

       \begin{lemma}[Hadamard test, Theorem 2.3 in \cite{luongo2022quantum}]\label{lemma_hadamard_test}
        Suppose $U$ is an $n$-qubit unitary and $x,y \in \{0,1\}^n $,  
        there is a quantum algorithm to estimate $\langle x|U|y\rangle$ with additive error $\tau$ using $O\p{{\log(1/\delta)}/{\tau^2}}$ times controlled applications of $U$, with probability at least $1-\delta$.
    \end{lemma}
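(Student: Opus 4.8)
The plan is to reduce the claim to the textbook Hadamard test. First I would fix unitaries $V_x,V_y$ with $V_x\ket{0}=\ket{x}$ and $V_y\ket{0}=\ket{y}$; in the use made by \Cref{alg:we} the states $\ket{x},\ket{y}$ are computational basis vectors, so $V_x,V_y$ are tensor products of $X$ gates and contribute nothing to the count of controlled applications of $U$. Setting $W\coloneqq V_x^\dagger U V_y$, one has $\langle x|U|y\rangle=\bra{0}W\ket{0}$, and a controlled-$W$ is built from a single controlled-$U$ together with (cheap) controlled-$V_x^\dagger$ and controlled-$V_y$. Hence it suffices to estimate the real and imaginary parts of $\bra{0}W\ket{0}$.

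Next I would run the standard Hadamard test for the real part: prepare an ancilla in $\ket{0}$, apply $H$, then controlled-$W$, then $H$ again, and measure the ancilla; the outcome is $0$ with probability $\tfrac12\p{1+\mathrm{Re}\,\bra{0}W\ket{0}}$, so the $\pm1$ variable $Z$ that equals $+1$ on outcome $0$ and $-1$ on outcome $1$ has $\E Z=\mathrm{Re}\,\bra{0}W\ket{0}$. Inserting an $S^\dagger$ gate on the ancilla just before the final $H$ yields, in the same way, a $\pm1$ variable with expectation $\mathrm{Im}\,\bra{0}W\ket{0}$. Each run of either circuit uses exactly one controlled application of $U$.

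Finally I would estimate $\mathrm{Re}\,\bra{0}W\ket{0}$ and $\mathrm{Im}\,\bra{0}W\ket{0}$ to additive error $\tau/\sqrt2$ each by averaging $O\p{\log(1/\delta)/\tau^2}$ independent repetitions, obtaining empirical means $\hat a$ and $\hat b$; Hoeffding's inequality bounds the failure probability of each estimate by $\delta/2$, and a union bound makes $\widehat X\coloneqq\hat a+i\hat b$ satisfy $\abs{\widehat X-\langle x|U|y\rangle}\le\tau$ with probability at least $1-\delta$, since the two squared errors add to at most $\tau^2$. The total number of controlled applications of $U$ is then $O\p{\log(1/\delta)/\tau^2}$, as claimed.

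I do not expect a real obstacle here; the only points needing attention are the constant-factor split of the target error $\tau$ between the real and imaginary estimates and the remark that the procedure needs only \emph{controlled} access to $U$ (not to $U^\dagger$ or to powers of $U$), which is what the oracle provides. The statement is quoted from \cite{luongo2022quantum} precisely because it is a well-known primitive.
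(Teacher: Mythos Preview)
Your proposal is correct and is exactly the standard argument for the Hadamard test. Note that the paper does not actually prove this lemma: it is stated as a citation (Theorem~2.3 of \cite{luongo2022quantum}) and used as a black box in \Cref{alg:we}, so there is no paper proof to compare against---your write-up simply supplies the well-known justification the authors omitted.
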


In the following, we first give \Cref{lemma_norm_relation} and then analyze the query complexity and success probability of \Cref{alg:we} by \Cref{thm_estimator_US}.

\begin{lemma}\label{lemma_norm_relation}
    If $A, B \in \mathbb{C}^{m\times m}$ satisfy $|A[i][j] - B[i][j]| \leq \tau$ for any $i,j$, then $|\|A\|_* - \|B\|_*|\leq m^{3/2} \tau.$    
    \end{lemma}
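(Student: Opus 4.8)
The plan is to reduce the statement to a short chain of elementary matrix-norm inequalities, using that the nuclear norm $\|\cdot\|_*$ (the Schatten $1$-norm) is a genuine norm and hence obeys the triangle inequality. Set $E \coloneqq A - B$; by hypothesis every entry of $E$ satisfies $|E[i][j]| \le \tau$. The reverse triangle inequality gives $\bigl|\,\|A\|_* - \|B\|_*\,\bigr| \le \|A - B\|_* = \|E\|_*$, so it suffices to prove $\|E\|_* \le m^{3/2}\tau$.

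I would bound $\|E\|_*$ by routing through the Frobenius norm. Since $E$ is $m \times m$ it has at most $m$ singular values $\sigma_1(E),\dots,\sigma_m(E)$, so by the Cauchy--Schwarz inequality $\|E\|_* = \sum_{i} \sigma_i(E) \le \sqrt{m}\,\bigl(\textstyle\sum_{i}\sigma_i(E)^2\bigr)^{1/2} = \sqrt{m}\,\|E\|_F$. The entrywise bound then gives $\|E\|_F = \bigl(\textstyle\sum_{i,j}|E[i][j]|^2\bigr)^{1/2} \le \sqrt{m^2\tau^2} = m\tau$. Multiplying the two estimates yields $\|E\|_* \le \sqrt{m}\cdot m\tau = m^{3/2}\tau$, as required.

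There is essentially no obstacle here: the only design decision is to pass through $\|\cdot\|_F$, and the exponent $3/2$ is exactly what the two lossy steps contribute --- a factor $\sqrt{m}$ from Cauchy--Schwarz over the (at most) $m$ singular values, and a factor $m$ from the crude bound $\|E\|_F \le m\max_{i,j}|E[i][j]|$. One could equivalently phrase the first step as $\|E\|_* \le \sqrt{\mathrm{rank}(E)}\,\|E\|_F$ with $\mathrm{rank}(E)\le m$, giving the same estimate; I would use whichever reads more cleanly in context.
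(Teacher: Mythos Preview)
Your proposal is correct and essentially identical to the paper's own proof: the paper also sets $E=A-B$, applies the reverse triangle inequality for $\|\cdot\|_*$, uses Cauchy--Schwarz on the singular values to get $\|E\|_*\le \sqrt{m}\,\|E\|_F$, and then bounds $\|E\|_F\le m\tau$ from the entrywise hypothesis.
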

    
    \begin{proof}
    Let $E \coloneqq A - B$. Then for any $i,j$, $|E[i][j]| =| A[i][j] - B[i][j]|\leq \tau$. By the triangle inequality,
    \[
    |\|A\|_* - \|B\|_*| = |\|B + E\|_* - \|B\|_*| \leq \|E\|_*.
    \]
    By Cauchy's inequality,
    \begin{align*}
    \|E\|_*&=\sum_i \sigma_i \\
    &\leq \sqrt{\left( \sum_i 1\right)\left( \sum_i \sigma_i^2 \right)} \\
    & = \sqrt m \|E\|_F \\
    &= \sqrt{m\sum_{i,j} |E[i][j]|^2} \\
    &\leq \sqrt{m\sum_{i,j} \tau^2} = \sqrt{m^3 \cdot \tau^2} = m^{3/2} \tau.
    \end{align*}
    \end{proof}

\begin{lemma}\label{thm_estimator_US}
The number of queries in \Cref{alg:we} is ${2^{O(k)}\log^2(1/\delta)}/{\tau^4}$.
    Besides, the output of \Cref{alg:we} satisfies
    \begin{equation*}
        |\widetilde{\dist}(U,\mathcal{J}_T)-\dist(U,\mathcal{J}_T)|\leq \sqrt{\tau}
    \end{equation*}
    with probability at least $1-\delta$.
\end{lemma}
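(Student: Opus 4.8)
The plan is to bound the query cost by a direct count and to establish accuracy by controlling how far the matrix $\widetilde{U}_T$ produced by \Cref{alg:we} lies from $U_T \coloneqq \Tr_{\overline{T}}\p{U}$ in nuclear norm, then feeding this into \Cref{eq:dist}. For the query count, \Cref{alg:we} performs $M = 2^{k+1}\ln\p{4/\delta}/\tau^2$ outer rounds, and in each round it runs, for every one of the $4^k$ pairs $(i,j)\in\lrs{0,1}^k\times\lrs{0,1}^k$, a Hadamard test (\Cref{lemma_hadamard_test}) to additive precision $\tau/2^{k/2+1}$, which by that lemma costs $O\p{2^k\log\p{1/\delta}/\tau^2}$ controlled applications of $U$; multiplying out, the total is $M\cdot 4^k\cdot O\p{2^k\log\p{1/\delta}/\tau^2} = 2^{O\p{k}}\log^2\p{1/\delta}/\tau^4$, as claimed. (To make the union bound in the accuracy analysis go through one actually runs each Hadamard test with failure probability $\delta/\p{2M4^k}$; by \Cref{lemma_hadamard_test} this only multiplies its cost by an $\operatorname{polylog}$ factor, which is absorbed into the stated bound.)

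For accuracy, recall from \Cref{eq:dist} that $\dist\p{U,\mathcal{J}_T} = \sqrt{1-\norm{U_T}_*/N}$ and, by the definition of the output, $\widetilde{\dist}\p{U,\mathcal{J}_T} = \sqrt{1-\norm{\widetilde{U}_T}_*/N}$, where we clip to $0$ if $\norm{\widetilde{U}_T}_* > N$. Since the square root is $1/2$-H\"older ($\abs{\sqrt a-\sqrt b}\le\sqrt{\abs{a-b}}$ for $a,b\ge 0$, with the clipping handling $\norm{\widetilde{U}_T}_*>N$) and the nuclear norm obeys the triangle inequality,
\[
  \abs{\widetilde{\dist}\p{U,\mathcal{J}_T}-\dist\p{U,\mathcal{J}_T}}
  \;\le\; \sqrt{\,\abs{\norm{\widetilde{U}_T}_*/N - \norm{U_T}_*/N}\,}
  \;\le\; \sqrt{\norm{\widetilde{U}_T-U_T}_*/N}\,.
\]
Hence it suffices to show $\norm{\widetilde{U}_T-U_T}_* \le \tau N$ with probability at least $1-\delta$, and since $\widetilde{U}_T-U_T$ is a $2^k\times 2^k$ matrix we have $\norm{\cdot}_* \le 2^{k/2}\norm{\cdot}_F$, so it is enough to prove $\norm{\widetilde{U}_T-U_T}_F \le \tau\,2^{n-k/2}$.

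To bound this Frobenius norm, write $U_T[i][j] = \sum_{l\in\lrs{0,1}^{n-k}}\bra{il}U\ket{jl}$ and $\widetilde{U}_T[i][j] = \tfrac{2^{n-k}}{M}\sum_{m}\hat{X}_{m,i,j}$, where $l_m$ is uniform and $\hat{X}_{m,i,j}$ approximates $\bra{il_m}U\ket{jl_m}$; note $\E\lrb{2^{n-k}\bra{il_m}U\ket{jl_m}} = U_T[i][j]$. Decompose $\widetilde{U}_T[i][j]-U_T[i][j]$ into a Hadamard-test term $\tfrac{2^{n-k}}{M}\sum_m\p{\hat{X}_{m,i,j}-\bra{il_m}U\ket{jl_m}}$ and a Monte-Carlo term $\tfrac{2^{n-k}}{M}\sum_m\bra{il_m}U\ket{jl_m}-U_T[i][j]$. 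On the event of probability at least $1-\delta/2$ that all $M4^k$ Hadamard tests are accurate, the first term is deterministically at most $2^{n-k}\cdot\tau/2^{k/2+1} = \tfrac{\tau}{2}\,2^{n-3k/2}$ per entry, hence contributes at most $\tfrac{\tau}{2}\,2^{n-k/2}$ to $\norm{\widetilde{U}_T-U_T}_F$. The Monte-Carlo term, viewed as one $4^k$-dimensional vector, is the centered average of $M$ i.i.d.\ copies, each of Frobenius norm $O\p{2^{n-k/2}}$, satisfying $\sum_{i,j}\E\abs{2^{n-k}\bra{il_m}U\ket{jl_m}}^2 \le 2^{2(n-k)}\cdot 2^k$, where we use $\sum_{i,j,l}\abs{\bra{il}U\ket{jl}}^2 \le 2^n$ since every column of $U$ is a unit vector; plugging in the chosen $M$ gives $\E\norm{\text{Monte-Carlo term}}_F^2 = O\p{2^{2n-2k}\tau^2}$, and a bounded-differences (vector Hoeffding) concentration bound then yields $\norm{\text{Monte-Carlo term}}_F \le \tfrac{\tau}{2}\,2^{n-k/2}$ except with probability $\delta/2$. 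Adding the two pieces and a final union bound gives $\norm{\widetilde{U}_T-U_T}_F \le \tau\,2^{n-k/2}$ with probability at least $1-\delta$, completing the argument.

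The main obstacle to keep an eye on is that per-entry estimation error is amplified by a factor $2^{\Theta(k)}$ when passing from the entries of $\widetilde{U}_T$ to $\norm{\widetilde{U}_T}_*/N$ (through $\norm{\cdot}_*\le 2^{k/2}\norm{\cdot}_F$ and $\norm{\cdot}_F\le 2^{k}\norm{\cdot}_{\max}$ against the normalization $N=2^n$): this is exactly why both $M$ and the Hadamard-test precision must be exponential in $k$, and it forces the factors $2^{k+1}$ and $2^{-k/2-1}$ in \Cref{alg:we} to be calibrated so that the Hadamard error and the Monte-Carlo error each consume only half of the $\tau\,2^{n-k/2}$ Frobenius budget. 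Getting those constants — together with the failure-probability budget split across the $4^k$ entries and the $M4^k$ Hadamard tests — to close up cleanly is the only genuinely delicate part; everything else is bookkeeping.
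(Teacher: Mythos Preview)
Your proposal is correct and follows the same overall architecture as the paper: count queries by multiplying out $M\cdot 4^k\cdot(\text{Hadamard cost})$, split $\widetilde{U}_T-U_T$ into a Hadamard-test error and a Monte-Carlo error, control the nuclear norm of the difference, and then pass through the square-root via $\abs{\sqrt a-\sqrt b}\le\sqrt{\abs{a-b}}$. The one genuine technical difference is in the concentration step. The paper works entrywise: it applies scalar Hoeffding to each of the $4^k$ coordinates with precision $\tau'=\tau/2^{k/2+1}$, then invokes \Cref{lemma_norm_relation} (the bound $\abs{\norm{A}_*-\norm{B}_*}\le m^{3/2}\norm{A-B}_{\max}$) to convert the per-entry guarantee into a nuclear-norm guarantee. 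You instead bound $\norm{\widetilde{U}_T-U_T}_F$ directly via a vector-valued (Hilbert-space) Hoeffding inequality and then use $\norm{\cdot}_*\le 2^{k/2}\norm{\cdot}_F$. Your route is slightly cleaner with respect to the union bound over the $4^k$ entries (which the paper glosses over), and you are also explicit that each Hadamard test must be run with failure probability $\delta/\operatorname{poly}(M,4^k)$, which the paper leaves implicit. Either route lands on the same $\frac{1}{N}\abs{\norm{\widetilde U_T}_*-\norm{U_T}_*}\le\tau$ and hence on $\abs{\widetilde{\dist}-\dist}\le\sqrt\tau$; the paper's entrywise lemma is marginally more elementary, while your vector Hoeffding avoids the $4^k$-fold union bound at the cost of invoking a less standard inequality.
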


\begin{proof}

In \Cref{alg:we}, the number of queries is 
\[
\frac{2^{k+1}}{\tau^2} \ln\left(\frac{4}{\delta}\right) \cdot 2^k \cdot 2^k \cdot O\left(\frac{2^k\log (1/\delta)}{\tau^2}\right) = \f{2^{O(k)}\log^2(1/\delta)}{\tau^4}.
\]
    Next, we analyze the success probability of the algorithm. Note that for any $i,j \in \{0,1\}^{T}$, 
    \begin{equation*}
    U_T[i][j]= \bra{i}U_T\ket{j} =\sum_{l\in \{0,1\}^{\overline{T}}}\bra{il} U\ket{jl}. 
    \end{equation*}
    Let $X_{m,i,j}  \coloneqq \bra{il_m} U\ket{jl_m}$. 
    Then $U_T[i][j] = 2^{n-k}\mathbb{E}_m[X_{m,i,j}]$. 
    Let $\tau'={\tau}/\p{2^{k/2+1}}$. By Hoeffding's bound,
    \begin{equation*}
    \Pr\left( \left| \frac{1}{M} \sum_{l=1}^M X_{m,i,j} - \mathbb{E}_m[X_{m,i,j}] \right| \geq \tau' \right) \leq 2 \exp\left( -2M\tau'^2 \right) \leq \delta/2.
    \end{equation*}
Moreover, we have
\[
\Pr\left( \left| \frac{1}{M} \sum_{l=1}^M \hat{X}_{m,i,j} - \frac{1}{M} \sum_{l=1}^M X_{m,i,j}] \right| \geq \tau' \right) \le \delta/2 
\]
   Thus, 
   \[
   \Pr\left( \left| \frac{1}{M} \sum_{l=1}^M \hat{X}_{m,i,j} - \mathbb{E}_m[X_{m,i,j}] \right| \leq 2\tau' \right) \ge 1-\delta, 
   \]
which implies that    with probability at most $1 - \delta$, we have
    \begin{equation*}
        \f{1}{N}\left|\widetilde{U}_T[i][j]-U_T[i][j]\right| \leq \frac{2\tau'}{2^k}=\frac{\tau}{2^{3k/2}}, \ \forall i,j \in \{0,1\}^k.
    \end{equation*}
By \Cref{lemma_norm_relation}, we have \[
\f{1}{N}|\|\widetilde{U}_T\|_* - \|U_T\|_*|\leq \tau,
\]
    which implies
        \[
        | \widetilde{\dist}^2(U, \mathcal{J}_T )-\dist^2(U, \mathcal{J}_T )|=\f{1}{N}|\|\widetilde{U}_T\|_* - \|U_T\|_*|\leq \tau. 
        \]
        Thus,
        \begin{align*}
        | \widetilde{\dist}(U, \mathcal{J}_T )-\dist(U, \mathcal{J}_T )| &\leq \frac{\tau}{\widetilde{\dist}(U, \mathcal{J}_T )+\dist(U, \mathcal{J}_T )} \\
        &\leq \frac{\tau}{ | \widetilde{\dist}(U, \mathcal{J}_T )-\dist(U, \mathcal{J}_T )|}.
    \end{align*} 
    As a result, we have
    \begin{equation*}
        |\widetilde{\dist}(U, \mathcal{J}_T )-\dist(U, \mathcal{J}_T )| \leq \sqrt\tau.
    \end{equation*}
    \end{proof}

\subsection{Main Algorithm}

Based on \Cref{alg:ce,alg:we}, \Cref{alg:tester2} is shown as follows.

\begin{algorithm}
\caption{\algname{Gapless-Tolerant-Junta-Tester}$(U, \vep_1, \vep_2)$}
\label{alg:tester2}

\begin{algorithmic}[1]
\Input{Oracle access to $n$-qubit unitary $U$, 
  integer $1 \le k < n$, $0 < \vep_1 < \vep_2 < 1$}

\Output{`Yes' or `No'}
\item Let $\vep\coloneqq \vep_2-\vep_1$, $\tau \coloneqq \vep^2/16$ and $\delta \coloneqq0.01$; 
        \item Run $S \leftarrow \algname{Coordinate-Extractor}(U,k,\tau)$;
        \item For every subset of $T \subseteq S$ with size $k$, run $\widetilde{\dist}(U, \mathcal{J}_T) \leftarrow \algname{Warmup-Estimator($U,T,\tau,\delta'$)}$, where $\delta' \coloneqq \delta/\binom{|S|}{k}$;
            
    \item Let
    \begin{equation*}
        \widetilde{\dist}(U, \mathcal{J}_{S,k})  \coloneqq\min_{T\subseteq \binom{S}{k}} \widetilde{\dist}(U, \mathcal{J}_T).  
    \end{equation*}
    If $\widetilde{\dist}(U, \mathcal{J}_{S,k}) \le \vep_1+\vep/2$, output `Yes', otherwise output `No'.
\end{algorithmic}
\end{algorithm}

In the following, we prove \Cref{thm:general} by analyzing the query complexity and the success probability of \Cref{alg:tester2}.

\begin{proof}[Proof of \Cref{thm:general}]
Note that the number of queries in \algname{Coordinate-Extractor} (\Cref{alg:ce}) is $O\p{\f{k}{\tau^2} \log \f{k}{\tau}}$ and $|S| = O({k^2}/{\tau^2})$. 
By \Cref{thm_estimator_US}, the number of queries in \algname{Warmup-Estimator} (\Cref{alg:we}) is
\[
\f{2^{O(k)}\log^2(1/\delta)}{\tau^4}.
\]
Thus, the query complexity of  \Cref{alg:tester2} is
\[
\begin{aligned}
O\p{\f{k}{\tau^2}\log \f{k}{\tau}} + \f{2^{O(k)}}{\tau^4} \cdot \log^2 \f{\binom{|S|}{k}}{\delta}\cdot \binom{|S|}{k} 
&=O\p{\f{k}{\tau^2}\log \f{k}{\tau}} + \f{2^{O(k)}}{\tau^4} \cdot O\p{k^2\log^2 \f{|S|}{k}} \cdot 2^{O\p{k \log \f{|S|}{k}}}\\ 
&=\f{O\left(k\log \f{k}{\tau}\right)}{\tau^2}+\f{2^{O(k)}\cdot O\p{k \log \f{k}{\tau^2}} \cdot 2^{O\left(k\log \f{k}{\tau^2}\right)}}{\tau^4}\\
&= 2^{O\left(k \log \f{k}{\tau}\right)}\\
&=2^{O\left(k \log \f{k}{\vep}\right)},
\end{aligned}
\]
where $\vep= \vep_1-\vep_2$.

For the success probability of \Cref{alg:tester2}, it suffices to analyze the probability of 
\[
\abs{\widetilde{\dist}(U,\mathcal{J}_{S,k}) - \dist(U,\mathcal{J}_{k})}\le \vep/2.
\]
By \Cref{thm_estimator_US}, after running Step 3 of \Cref{alg:tester2}, with probability $1-\delta'$, for all $T \subseteq S$ such that $|T| = k$, we have 
\[
\abs{\widetilde{\dist}(U,\mathcal{J}_T) - \dist(U,\mathcal{J}_T)} \le \sqrt{\tau}.
\]
Since
\[
\begin{aligned}
    \widetilde{\dist}(U, \mathcal{J}_{S,k})
&=\min_{T\subseteq \binom{S}{k}} \widetilde{\dist}(U, \mathcal{J}_T),\\
{\dist}(U, \mathcal{J}_{S,k})
&=\min_{T\subseteq \binom{S}{k}} {\dist}(U, \mathcal{J}_T),
\end{aligned}
\]
by the union bound, the probability of
\[
\abs{\widetilde{\dist}(U, \mathcal{J}_{S,k})-{\dist}(U, \mathcal{J}_{S,k})} \le \sqrt{\tau}
\]
is at least $1-\delta = 0.99$.
Moreover, by \Cref{theorem_reduction}, with probability at least 0.99, we have
\[
\abs{\dist(U,\mathcal{J}_{S,k}) - \dist(U,\mathcal{J}_{k})}\le \sqrt{\tau}.
\]
Therefore, we have
\[
|\widetilde{\dist}(U,\mathcal{J}_{S,k}) - \dist(U,\mathcal{J}_{k})|\le 2\sqrt{\tau} = \vep/2
\]
with probability at least $0.99^2 > 0.98$.
\end{proof}

\section{Conclusion}
In this paper, we present quantum algorithms for tolerant testing of $k$-junta Boolean functions and unitary operators, achieving significant breakthroughs in both fields. For Boolean functions, our work fills a crucial gap by providing the first non-trivial quantum tester. Although there are constraints on parameters, it demonstrates the potential for an exponential quantum speedup over classical algorithms for this problem. For unitary operators, our algorithms achieve an exponential improvement in query complexity to $O(k \log k)$, surpassing previous quantum methods for this problem when a constant gap is present. Finally, we provide a more general tester for unitaries that works for any parameter gap.

A key contribution is the Coordinate-Extractor subroutine, which efficiently identifies a small set of influential coordinates. This allows the query complexity of our algorithms to be independent of the input size $n$. We also introduce an experiment-friendly variant that requires only single-qubit operations, eliminating the need for complex multi-qubit gates and ancilla qubits. 

In summary, this work marks a significant advancement in quantum property testing, particularly by highlighting the immense potential of Fourier analysis of unitaries. The techniques developed here not only provide new tools for quantum learning theory but also pave the way for the development of more powerful and practical quantum algorithms for complex computational tasks.

\appendix
\section{Lower Bound for Classical 
  Tolerant \texorpdfstring{$k$}{k}-Junta Tester}
  \label{sec:app:lowerbound}
In this section, we will prove the following lower bound,
  using similar techniques as \cite{chen2024mildly}.

\lowerboundI*


Let $n=k+a$ and $A$ be a subset of $[n]$ of size $a$.
Let $C\coloneqq [n]\setminus A$. We refer to the variables $x_i$ for $i\in C$ as \emph{control variables} and the variables $x_i$ for $i\in A$ as \emph{action variables}. We first define two pairs of functions over $\{0,1\}^A$ on the action variables as follows (we will use these functions later in the definition of $\dy$ and $\dn$):
Let $h^{(+,0)},h^{(+,1)},h^{(-,0)}$ and $h^{(-,1)}$ be Boolean functions over $\{0,1\}^A$ defined as follows.
Here $c_1$ and $c_2$ are parameters we will decide later.

\begin{equation*}
h^{(+,0)}(x_A) =
	\begin{cases}
       \ 0 & |x_A|>\frac{a}{2}+c_1; \\[0.5ex]
       \ 0 & |x_A|\in[\frac{a}{2}-c_1,\frac{a}{2}+c_1]; \\[0.5ex]
       \ 0 & |x_A|<\frac{a}{2}-c_1.
    \end{cases}
\quad h^{(+,1)}(x_A) =
	\begin{cases}
       \ 1 & |x_A|>\frac{a}{2}+c_1; \\[0.5ex]
       \ 0 & |x_A|\in[\frac{a}{2}-c_1, \frac{a}{2}+c_1]; \\[0.5ex]
       \ 1 & |x_A|<\frac{a}{2}-c_1.
    \end{cases}
\end{equation*}
and
\begin{equation*}
h^{(-,0)}(x_A) =
	\begin{cases}
       \ 1 & |x_A|>\frac{a}{2}+c_1; \\[0.5ex]
       \ 0 & |x_A|\in[\frac{a}{2}-c_1,\frac{a}{2}+c_1]; \\[0.5ex]
       \ 0 & |x_A|<\frac{a}{2}-c_1.
    \end{cases}
\quad h^{(-,1)}(x_A) =
	\begin{cases}
       \ 0 & |x_A|>\frac{a}{2}+c_1; \\[0.5ex]
       \ 0 & |x_A|\in[\frac{a}{2}-c_1,\frac{a}{2}+c_1]; \\[0.5ex]
       \ 1 & |x_A|<\frac{a}{2}-c_1.
    \end{cases}
\end{equation*}

To draw a function $\fy\sim\dy$, we first sample a set $A \subseteq [n]$ of size $k$ uniformly 
  at random, set $C=[n]\setminus A$, and sample a Boolean function $r$ over $\{0,1\}^{C}$ uniformly at random. 
Then the Boolean function $\fy$ over $\{0,1\}^n$ is defined as follows:
\begin{equation*}\label{eq:fyes}
\fy(x)=
\begin{cases}
h^{(+,0)}(x_A)  & r(x_{C})=0\\[0.3ex]
h^{(+,1)}(x_A)  & r(x_{C})=1
\end{cases}
\end{equation*}
To draw $\fn\sim \dn$, we first sample  $A$ and $r$ in the same way as in $\dy$,
  and $\fn$ is defined as
\begin{equation*}\label{eq:fno}
\fn(x)=\begin{cases}
h^{(-,0)}(x_A)  & r(x_{C})=0\\[0.3ex]
h^{(-,1)}(x_A)  & r(x_{C})=1
\end{cases}
\end{equation*}

\begin{lemma}[Lemma 16 of \cite{chen2024mildly}, modified]
  \label{lemma:classical_close}
  If $c_1\le 0.1\sqrt{a}$, we have
    every function in the support of $\dy$ is 
    $\f{2c_1}{\sqrt{a}}$-close to a $k$-junta.
\end{lemma}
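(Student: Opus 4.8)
The plan is to produce, for each $\fy$ in the support of $\dy$, an explicit $k$-junta lying within distance $\tfrac{2c_1}{\sqrt a}$ of it. Recall that a draw $\fy\sim\dy$ is specified by a uniformly random size-$a$ action set $A$ (so the control set $C=[n]\setminus A$ has size $k$) together with a uniformly random Boolean function $r$ on $\{0,1\}^C$, and that $\fy(x)=h^{(+,0)}(x_A)$ on the event $r(x_C)=0$ while $\fy(x)=h^{(+,1)}(x_A)$ on the event $r(x_C)=1$. Since $h^{(+,0)}\equiv 0$ and $h^{(+,1)}(x_A)=1$ precisely when $|x_A|\notin[\tfrac a2-c_1,\tfrac a2+c_1]$, the natural candidate junta is $g(x)\coloneqq r(x_C)$: it depends only on the $k$ coordinates of $C$ and hence is a $k$-junta, and it should be close to $\fy$ exactly because the Hamming band on which $h^{(+,1)}$ vanishes is thin.

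First I would estimate $\dist(\fy,g)=\Pr_x[\fy(x)\neq g(x)]$ by conditioning on the value of $r(x_C)$. On the event $\{r(x_C)=0\}$ both functions equal $0$ and so agree; on the event $\{r(x_C)=1\}$ one has $g(x)=1$ while $\fy(x)=h^{(+,1)}(x_A)$, so the two disagree exactly when $|x_A|\in[\tfrac a2-c_1,\tfrac a2+c_1]$. Because $x_A$ and $x_C$ are independent under the uniform distribution on $\{0,1\}^n$, this yields
\[
\dist(\fy,g)=\Pr_{x_C}\!\big[r(x_C)=1\big]\cdot\Pr_{x_A}\!\Big[|x_A|\in\big[\tfrac a2-c_1,\tfrac a2+c_1\big]\Big]\ \le\ \Pr_{x_A}\!\Big[|x_A|\in\big[\tfrac a2-c_1,\tfrac a2+c_1\big]\Big].
\]
It then remains to prove the anti-concentration estimate: a uniform string in $\{0,1\}^a$ has Hamming weight within $c_1$ of $a/2$ with probability at most $\tfrac{2c_1}{\sqrt a}$ whenever $c_1\le 0.1\sqrt a$. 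Granting this, $\dist(\fy,g)\le\tfrac{2c_1}{\sqrt a}$, and since $g$ is a $k$-junta the lemma follows.

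For the anti-concentration step I would combine the fact that the interval $[\tfrac a2-c_1,\tfrac a2+c_1]$ contains $O(c_1)$ integer weight levels with the Stirling estimate $\binom{a}{j}2^{-a}\le\binom{a}{\lfloor a/2\rfloor}2^{-a}\le\sqrt{2/(\pi a)}$ for every $j$ (equivalently, a local central limit bound for $\mathrm{Bin}(a,\tfrac12)$); the hypothesis $c_1\le 0.1\sqrt a$ guarantees that $a$ is large relative to $c_1$ and that every weight level in the band sits within a $1\pm o(1)$ factor of the central one. I expect this to be essentially the only real work in the proof, and the delicate point is pinning down the constant $2$: a bare ``number of terms $\times$ maximal term'' bound loses a factor $\approx\sqrt{2/\pi}$, so one must additionally exploit the decay of $\binom{a}{j}$ across the band (or pass to the Gaussian approximation with a controlled Berry--Esseen error, legitimate here since $c_1/\sqrt a\le 0.1$) in order to land on $\tfrac{2c_1}{\sqrt a}$ exactly. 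Everything else---unwinding the definitions of $h^{(+,0)},h^{(+,1)}$ and $\fy$, and the product structure of the uniform measure on $\{0,1\}^n$---is routine bookkeeping.
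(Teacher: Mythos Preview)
The paper does not supply its own proof of this lemma; it is stated with a citation to \cite{chen2024mildly} and left unproved. Your argument is correct and is the natural one: take the $k$-junta $g(x)\coloneqq r(x_C)$, observe that $\fy$ and $g$ disagree only on $\{r(x_C)=1\}\cap\{|x_A|\in[a/2-c_1,a/2+c_1]\}$, and bound the latter by the mass of the central Hamming band of width $2c_1$ for $\mathrm{Bin}(a,1/2)$.

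One small overcorrection: you write that the crude ``number of terms $\times$ maximal term'' estimate \emph{loses} a factor $\sqrt{2/\pi}$ and therefore requires extra work to reach $2c_1/\sqrt a$. In fact the factor $\sqrt{2/\pi}<1$ works in your favour: the naive bound already gives at most $(2c_1+1)\sqrt{2/(\pi a)}$, which is $\le 2c_1/\sqrt a$ once $c_1\gtrsim 2$, with no need to exploit decay across the band or invoke Berry--Esseen. The only delicacy is the ``$+1$'' in the term count for very small $c_1$, but in the paper's application ($c_1=0.005\sqrt k$ with $a=k$ large) this is not an issue.
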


\begin{proof}
    Given $A$ and $r$, let $f$ be the corresponding function 
        constructed from $\dy$, then it's easy to see 
        $f$ is close to $C$-junta.
    The distance between $f$ and $C$-junta is at most
    \begin{equation*}
        \f{1}{2^a}\sum_{x=\f{a}{2}-c_1}^{\f{a}{2}+c_1}\binom{x}{a} 
        \le \f{1}{2^a} 2c_1 \binom{\f{a}{2}}{a} \le \f{2c_1}{\sqrt{a}}
    \end{equation*}
    where for the last inequality we use Fact 4 of \cite{chen2024mildly}
        that $\f{1}{2^a}\binom{\f{a}{2}}{a} \le \f{1}{\sqrt{a}}$.
\end{proof}

\begin{lemma}[Lemma 17 of \cite{chen2024mildly}, modified]
  \label{lemma:classical_far}
  If $c_1\le 0.1\sqrt{a}$, we have
    with probability at least $1-o_n(1)$, 
    $\fn \sim \dn$ is $0.2$-far from any $k$-junta.
\end{lemma}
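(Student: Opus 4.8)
The plan is to prove the slightly stronger claim that for \emph{every} fixed action set $A$ (write $C=[n]\setminus A$, so $|C|=k$), with probability $1-o_n(1)$ over the random control function $r$ every $k$-junta $g$ satisfies $\dist(\fn,g)>0.2$; averaging this over the random choice of $A$ then gives the lemma. So fix such a $g$ and a set $J$ of size at most $k$ on which $g$ depends. Since $|J|\le k=|C|$, exactly one of the following holds: $J=C$, or there is a control coordinate $i^\star\in C\setminus J$. I will treat these two cases separately; only the first needs the randomness of $r$.

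\emph{Case $i^\star\in C\setminus J$.} Since $g$ does not depend on $x_{i^\star}$, the elementary pointwise inequality $\mathbf 1[\fn(x)\neq g(x)]+\mathbf 1[\fn(x^{\oplus i^\star})\neq g(x)]\ge \mathbf 1[\fn(x)\neq \fn(x^{\oplus i^\star})]$, averaged over a uniform $x\in\{0,1\}^n$, gives $2\,\dist(\fn,g)\ge \inf_{i^\star}[\fn]$. It therefore suffices to lower bound $\inf_{i^\star}[\fn]$ for a control coordinate. Flipping $x_{i^\star}$ leaves $x_A$ unchanged and toggles $r(x_C)$ exactly on the $i^\star$-edges cut by $r$; and when $r$ is toggled, $\fn(x)$ and $\fn(x^{\oplus i^\star})$ become $h^{(-,0)}(x_A)$ and $h^{(-,1)}(x_A)$, which disagree precisely when $|x_A|\notin[\tfrac a2-c_1,\tfrac a2+c_1]$. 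As $x_A$ and $x_C$ are independent, this factorises:
\[
\inf_{i^\star}[\fn]=\Pr_{x_C}\!\bigl[r(x_C)\neq r(x_C^{\oplus i^\star})\bigr]\cdot \Pr_{x_A}\!\bigl[\,|x_A|\notin[\tfrac a2-c_1,\tfrac a2+c_1]\,\bigr].
\]
For the second factor, the band $[\tfrac a2-c_1,\tfrac a2+c_1]$ contains at most $2c_1+1\le 0.2\sqrt a+1$ integers, each of binomial mass $\binom{a}{\lfloor a/2\rfloor}2^{-a}\le\sqrt{2/(\pi a)}$, so the band has mass at most $0.16+o_a(1)$ and the factor is at least $0.84-o_a(1)$; this is exactly where $c_1\le 0.1\sqrt a$ is used. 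For the first factor, the cut $i^\star$-edges are $2^{k-1}$ independent fair coins, so the factor is at least $\tfrac12-\gamma$ for any fixed $\gamma>0$ except with probability $e^{-\Omega_\gamma(2^{k})}$, and a union bound over the at most $k$ choices of $i^\star$ keeps this uniform. Taking $\gamma=0.01$ we obtain $\inf_{i^\star}[\fn]>0.4$, hence $\dist(\fn,g)>0.2$, simultaneously for all such $g$, with probability $1-o_n(1)$.

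\emph{Case $J=C$.} Now $g$ depends only on $x_C$, so
\[
\dist(\fn,g)=\E_{x_C}\Pr_{x_A}\!\bigl[h^{(-,r(x_C))}(x_A)\neq g(x_C)\bigr]\ge \min_{b,c\in\{0,1\}}\Pr_{x_A}\!\bigl[h^{(-,c)}(x_A)\neq b\bigr].
\]
Each of $h^{(-,0)},h^{(-,1)}$ is the indicator of a single Hamming-weight tail ($|x_A|>\tfrac a2+c_1$ or $|x_A|<\tfrac a2-c_1$), and again because $c_1\le 0.1\sqrt a$ keeps the tail boundary within $0.2$ standard deviations of the mean $a/2$, each such tail has mass at least $0.42-o_a(1)$. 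Hence both biases lie in $[0.42-o_a(1),\,0.58+o_a(1)]$, the displayed minimum is at least $0.42-o_a(1)>0.2$, and this case needs no randomness.

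Combining the two cases and taking $n$ (hence $k$ and $a$) large enough that all $o_a(1)$ and $o_n(1)$ slacks drop below, say, $0.005$, we conclude that with probability $1-o_n(1)$ no $k$-junta is $0.2$-close to $\fn$. The one genuinely delicate point is the quantitative margin in the first case: $\dist(\fn,g)\ge\tfrac12\inf_{i^\star}[\fn]\approx\tfrac14\cdot 0.84\approx 0.21$ only barely exceeds $0.2$, which is precisely why the hypothesis reads $c_1\le 0.1\sqrt a$; so I would be careful to invoke an honest binomial anticoncentration bound (rather than a Gaussian heuristic) and to fix the Chernoff slack $\gamma$ so that the product $(\tfrac12-\gamma)(0.84-o_a(1))$ provably clears $0.4$.
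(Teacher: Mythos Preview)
Your argument is correct. The paper itself does not supply a proof of this lemma---it is quoted from \cite{chen2024mildly} in modified form, and the appendix only reproduces the construction and the parameter choice $a=k$, $n=2k$---so there is no in-paper proof to compare against directly. Your approach (lower-bounding $\dist(\fn,g)$ by $\tfrac12\inf_{i^\star}[\fn]$ for any control coordinate $i^\star\in C\setminus J$, plus a separate bias calculation when $J=C$) is the natural one and is essentially what the original source does. The dichotomy ``$J=C$ or $C\setminus J\neq\emptyset$'' is valid because $|J|\le k=|C|$; the factorisation of $\inf_{i^\star}[\fn]$ is correct; and the $2^{k-1}$ cut indicators really are independent fair coins since distinct $i^\star$-edges involve disjoint $r$-values, so the Chernoff step is justified. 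Your numerics are honest: with $c_1\le 0.1\sqrt a$ the central band has mass at most $0.2\sqrt{2/\pi}+o_a(1)\approx 0.16$, so $(0.49)(0.84)\approx 0.412$ clears $0.4$ with room for the $o(1)$ slack. The only implicit assumption you are relying on (and the paper relies on too) is that both $k$ and $a$ tend to infinity with $n$, so that the Chernoff failure probability $k\cdot e^{-\Omega(2^k)}$ and the binomial $o_a(1)$ terms are genuinely $o_n(1)$; this is consistent with the paper's eventual parameter choice.
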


\begin{proof}
    First of all we need Chernoff bound and a union bound that
        with probability at least $1-o_n(1)$, $A$ and $f$ satisfy that:
        for every $i\in C$, there are at least $0.24\cdot 2^k$
        many strings $x\in \lrs{0,1}^C$ with $x_i=0$, such that 
        $r(x)\neq r\p{x^{(i)}}$, where $x^{(1)}$ is the string in
        $\lrs{0,1}^C$ obtained by flip $i$-th bit of $x$.
        
    We assume below that $A$ and $f$ satisfy the above condition, and 
        we will show that such $f$ constructed from $\dn$ is $0.2$-far
        from any $k$-junta boolean function.
        
    Let $g$ be an $I$-junta boolean function where $I\subseteq [n]$ and
        $\abs{I}=k$. 
    Given the condition above, if $I\neq C$, the number of bichromatic 
        edges of $f$ along each direction $i\in C$ is at least
    \begin{equation*}
        0.24\cdot 2^k \cdot (1-0.1)2^a \ge 0.2\cdot 2^n
    \end{equation*}
    and therefore the distance between $f$ and $g$ is at least $0.2$.
    On the other hand if $I=C$, then the distance between $f$ and $g$, 
        by the definition, is at least $0.45$.
\end{proof}

\begin{lemma}[Lemma 18 of \cite{chen2024mildly}, modified]
  For any non-adaptive classical algorithm, 
    querying $t$ times the function sampled from 
    $\dy$ and $\dn$ , the probability it can
    distinguish them is at most
    $t^2\cdot \ff{\binom{n-2c_1}{k}}{\binom{n}{k}}
      \le t^2\cdot \p{1-\f{2c_1}{n}}^k
      \le t^2\cdot e^{\ff{-2c_1k}{n}}$.
\end{lemma}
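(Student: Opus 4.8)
The plan is to bound the distinguishing probability by the total variation distance between the two induced distributions on query--answer transcripts, and to show this distance is exponentially small by exploiting non-adaptivity: a fixed-query tester can only detect a difference between $\dy$ and $\dn$ when two of its query points disagree \emph{only} inside the random block $A$ of action variables, and on a set of size more than $2c_1$ at that. Everything hinges on coupling $\dy$ and $\dn$ through their common random ingredients $A$ and $r$.

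First I would invoke the standard averaging reduction to assume the tester is deterministic and non-adaptive, so it queries a fixed list of distinct points $x^{(1)},\dots,x^{(t)}\in\{0,1\}^n$; the distinguishing probability is then at most the total variation distance between the law of $(\fy(x^{(1)}),\dots,\fy(x^{(t)}))$ with $\fy\sim\dy$ and that of $(\fn(x^{(1)}),\dots,\fn(x^{(t)}))$ with $\fn\sim\dn$. Since $\dy$ and $\dn$ use the same random size-$a$ set $A$ and the same random function $r$ on the control block $C=[n]\setminus A$, I would condition on $A$ and afterwards average over it (the marginal of $A$ is identical under both distributions, so the coupling/convexity bound applies). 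Conditioned on $A$, group the indices $[t]$ into classes according to the value $x^{(i)}_C$: queries in one class share a single uniform bit $r(x^{(i)}_C)$, while distinct classes use independent bits, so the transcript factorizes as a product over classes under both distributions.

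Next I would analyze a single class of queries sharing $x_C=z$, with common bit $b=r(z)$. For a point $x$ in the class set $P(x)=\mathbf{1}[|x_A|>\tfrac a2+c_1]$ and $Q(x)=\mathbf{1}[|x_A|<\tfrac a2-c_1]$, so $P(x)Q(x)=0$ and $h^{(+,1)}(x_A)=P(x)+Q(x)$, while $h^{(+,0)}\equiv 0$, $h^{(-,0)}(x_A)=P(x)$, $h^{(-,1)}(x_A)=Q(x)$. Under $\dy$ the class-transcript is the all-zero vector if $b=0$ and $(P(x)+Q(x))_x$ if $b=1$; under $\dn$ it is $(P(x))_x$ if $b=0$ and $(Q(x))_x$ if $b=1$. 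A short case check shows that whenever the class has no ``high-extreme'' query (all $P(x)=0$) or no ``low-extreme'' query (all $Q(x)=0$), these two class-laws coincide; only a class containing both kinds of extreme point can create a discrepancy. By the coupling inequality the distinguishing probability is therefore at most $\Pr_A[\exists\,i\neq j:\ x^{(i)}_C=x^{(j)}_C,\ |x^{(i)}_A|>\tfrac a2+c_1,\ |x^{(j)}_A|<\tfrac a2-c_1]$, which I bound by a union bound over pairs. Writing $D_{ij}=\{\ell:x^{(i)}_\ell\neq x^{(j)}_\ell\}$, a bad pair forces $D_{ij}\subseteq A$ (the points agree on $C$) and $|D_{ij}|\ge 2c_1$ --- the latter because $|x^{(i)}_A|-|x^{(j)}_A|=\sum_{\ell\in D_{ij}}(\pm1)$ is at most $|D_{ij}|$ in absolute value yet must exceed $2c_1$. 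Since $C$ is a uniformly random $k$-subset of $[n]$, $\Pr_A[D_{ij}\subseteq A]=\binom{n-|D_{ij}|}{k}/\binom nk\le\binom{n-2c_1}{k}/\binom nk$ by monotonicity, and summing over the at most $t^2$ pairs gives the first inequality; the remaining two follow from $\binom{n-2c_1}{k}/\binom nk=\prod_{\ell=0}^{k-1}\frac{n-2c_1-\ell}{n-\ell}\le(1-\tfrac{2c_1}{n})^k\le e^{-2c_1k/n}$.

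The main obstacle is the class analysis: one must correctly isolate that ``a class meeting both extremes'' is exactly the event enabling a discrepancy, and, crucially, observe that two query points merely falling into the same $C$-class is \emph{not} by itself enough to distinguish --- it is precisely this extra requirement that forces $|D_{ij}|\ge 2c_1$ and produces the gain over the trivial $\binom{n-1}{k}/\binom nk$ bound one would get from the coarser bad event. The reduction to fixed deterministic queries (where non-adaptivity is essential, since an adaptive tester could try to engineer a large $D_{ij}$ landing inside $A$) and the concluding union bound are routine.
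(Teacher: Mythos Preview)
The paper does not supply its own proof of this lemma; it is quoted (with modified parameters) from \cite{chen2024mildly} and invoked as a black box in the proof of \Cref{thm:lowerbound1}. So there is nothing in the paper to compare against directly.

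That said, your argument is correct and is the natural proof. The reduction to a deterministic non-adaptive tester, conditioning on $A$, factorizing the transcript law over $C$-classes via the independence of the bits $r(z)$, and the case analysis showing that a class contributes to the total variation distance only when it contains both a high-extreme ($|x_A|>a/2+c_1$) and a low-extreme ($|x_A|<a/2-c_1$) query are all sound. The key observation that such a pair forces $|D_{ij}|>2c_1$ --- since the two points agree on $C$ while their Hamming weights on $A$ differ by more than $2c_1$ --- is exactly what upgrades the naive per-pair bound $\binom{n-1}{k}/\binom{n}{k}$ to $\binom{n-2c_1}{k}/\binom{n}{k}$, and the union bound over at most $t^2$ ordered pairs together with the product inequality $\prod_{\ell=0}^{k-1}\frac{n-2c_1-\ell}{n-\ell}\le (1-2c_1/n)^k\le e^{-2c_1k/n}$ finishes.

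One small remark: in the case where a class contains only high-extreme points, the two class-laws do coincide, but via the involution $b\leftrightarrow 1-b$ rather than the identity; so the ``coupling inequality'' you invoke is not the naive coupling that reuses the same $r$ in both worlds. Since you phrase the argument at the level of equality of class-marginal distributions (and then use the product structure), this subtlety does not affect your proof.
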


\begin{proof}
    To distinguish $\dy$ from $\dn$, the algorithm must query
        to $x$ and $y$ such that $x_C=y_C$,
        $\abs{x_A} > \f{a}{2}+c_1$ and
        $\abs{y_A} < \f{a}{2}-c_1$.
    For any $x$ and $y$, we must have $\abs{x} - \abs{y} > 2c_1$
        to make this happen, and the probability that
        $C$ satisfies the above condition is at most
        $\ff{\binom{n-2c_1}{k}}{\binom{n}{k}}$.
    To conclude, the probability that the algorithm queries to such $x$
        and $y$ is at most $t^2\cdot \ff{\binom{n-2c_1}{k}}{\binom{n}{k}}$.
\end{proof}

\begin{proof}[Proof of \Cref{thm:lowerbound1}]
  To prove \Cref{thm:lowerbound1}, we choose 
    $a=k$, $n=2k$ and $c_1=0.005\sqrt{k}$.
  It follows that $2^{\Omega\p{\sqrt{k}}}$ 
    queries are necessary.

\end{proof}

\section{Proof of \texorpdfstring{\Cref{lemma:is}}{is}}
\label{appendixB}
In this section, \Cref{{lemma:is}} is restated and proved as follows.

\lemmais*

\begin{proof}
    We will prove \Cref{eq:lemmaisI} here, it's similar to show \Cref{eq:lemmaisII}.
    Recall that in \Cref{alg:is} we will uniformly randomly choose 
        $y \in \lrs{0,1}^n$ and $V$ from 
    \begin{center}
      $\lrs{
          I = \p{\begin{matrix}1&0\\0&1\end{matrix}},
          H = \f{1}{\sqrt{2}}\p{\begin{matrix}1&1\\1&-1\end{matrix}},
          R = \f{1}{\sqrt{2}}\p{\begin{matrix}1&-i\\-i&1\end{matrix}}
      }$,
    \end{center}
    therefore we have
    \begin{align*}
         &\,\Prob{x_T = 0 \mid V=I} \\
        =&\, \f{1}{2^n}\sum_{i\in \{0,1\}^T}\sum_{j\in \{0,1\}^{\overline{T}}} 
         \Prob{S\cap T_1 =\varnothing\mid y=i_Tj_{\overline{T}},V=I} \\
        =&\, \f{1}{2^n}\sum_{i}\sum_{j} 
         \Tr\,\lrb{\p{\bra{i}_T\otimes I_{\overline{T}}}\cdot U\ketbra{i,j})U^\dagger\cdot \p{\ket{i}_T\otimes I_{\overline{T}}}} \\
        =&\, \f{1}{2^n}\sum_i \sum_j \sum_{y,z\in \Zfn} 
         \Tr\,\lrb{\bra{i}\otimes I \cdot \wU(y)\overline{\wU(z)}\sigma_y \ketbra{i,j} \sigma_z \cdot \ket{i}\otimes I} \\
        =&\, \f{1}{2^n}\sum_i \sum_j \sum_{y,z\in \Zfn} 
          \wU(y)\overline{\wU(z)} \bra{i} \sigma_{y_T} \ket{i} \bra{i}\sigma_{z_T} \ket{i} \cdot \bra{j} \sigma_{z_{\overline{T}}}\sigma_{y_{\overline{T}}} \ket{j} \\
        =&\, \f{1}{2^n}\sum_{y,z\in \Zfn} 
         \wU(y)\overline{\wU(z)}  \p{\sum_i  \bra{i} \sigma_{y_T} \ket{i} \bra{i}\sigma_{z_T} \ket{i}} \cdot \p{\sum_j \bra{j} \sigma_{z_{\overline{T}}}\sigma_{y_{\overline{T}}} \ket{j}}.
    \end{align*}
    For the sum in the first parentheses, notice that when 
        $y_T\notin \{0,3\}^T$, we have
        $\bra{i} \sigma_{y_T} \ket{i}=0$, $\forall i$.
    Similar for $z_T$.
    Furthermore, note that when $y_T\neq z_T\in \{0,3\}^S$,
        it always holds that $\sum_i  \bra{i} \sigma_{y_T} \ket{i} \bra{i}\sigma_{z_T} \ket{i}=0$,
        otherwise it's $2^{\abs{T}}$.
    For the sum in the second parentheses, since
     $\sum_j \bra{j} \sigma_{z_{\overline{T}}}\sigma_{y_{\overline{T}}} \ket{j}
         = \Tr\lrb{\sigma_{z_{\overline{T}}}\sigma_{y_{\overline{T}}}}$,
    we know that if $y_{\overline{T}}\neq z_{\overline{T}}$, the sum is $0$, otherwise it's $2^{n-\abs{T}}$.

    To conclude, we have
    \begin{align*}
    &\,\Prob{x_T = 0 \mid V=I}\\
        =&\, \f{1}{2^n}\sum_{y,z\in \Zfn} 
         \wU(y)\overline{\wU(z)}  \p{\sum_i  \bra{i} \sigma_{y_T} \ket{i} \bra{i}\sigma_{z_T} \ket{i}} \cdot \p{\sum_j \bra{j} \sigma_{z_{\overline{T}}}\sigma_{y_{\overline{T}}} \ket{j}} \\
        =&\, \sum_{y\in \Zfn; y_T\in \{0,3\}^{T}} \abs{\wU(y)}^2.
    \end{align*}

    Note that $H^\dagger \sigma_0 H = \sigma_0$, $H^\dagger \sigma_1 H = \sigma_3$,
     $H^\dagger \sigma_2 H = -\sigma_2$, $H^\dagger \sigma_3 H = \sigma_1$,
     and $R_x^\dagger \sigma_0 R_x = \sigma_0$, $R_x^\dagger \sigma_1 R_x = \sigma_1$,
     $R_x^\dagger \sigma_2 R_x = -\sigma_3$, $R_x^\dagger \sigma_3 R_x = \sigma_2$.
    From a similar calculation, we have
    \begin{align*}
        &\Prob{x_T = 0  \mid V=H} = 
            \sum_{y\in \Zfn; y_{T}\in \{0,1\}^{T}} \abs{\wU(y)}^2, \\
        &\Prob{x_T = 0  \mid V=R_x} = 
            \sum_{y\in \Zfn; y_{T}\in \{0,2\}^{T}} \abs{\wU(y)}^2.
    \end{align*}

    Recall the definition of the influence,
    $$
        1 - \inf_{T}[U] = \sum_{y\in \Zfn; y_{T}=0}\abs{\wU(y)}^2.
    $$
    We have 
    \begin{equation*}
        \sum_{y\in \Zfn; y_{T}=0}\abs{\wU(y)}^2 \le \f{1}{3}\sum_{V'\in \lrs{I,H,R_x}}
            \Prob{x_T = 0  \mid V=V'} = \Prob{x_T = 0},
    \end{equation*}
    and therefore $\inf_{T}[U] \ge \Prob{x_T \neq 0}$.
    On the other hand, by carefully comparing the terms we are summing here, we observe that
    \begin{align*}
        2\inf_{T}[U] = 2\sum_{y\in \Zfn; y_{T}\neq 0}\abs{\wU(y)}^2 
            &\le \p{\sum_{y\in \Zfn; y_{T}\notin \{0,1\}^{T}} + \sum_{y\in \Zfn; y_{T}\notin \{0,2\}^{T}} 
                + \sum_{y\in \Zfn; y_{T}\notin \{0,3\}^{T}}} \abs{\wU(y)}^2 \\
            &= \sum_{V'\in \lrs{I,H,R_x}} \Prob{x_T \neq 0  \mid V=V'} = 3 \Prob{x_T \neq 0 },
    \end{align*}
    and therefore $\f{2}{3}\inf_T[\Phi] \le  \Prob{x_T\neq 0}$.
    We complete the proof for \Cref{eq:lemmaisI}.
\end{proof}

\subsection*{Acknowledgments}

This work was done when Z.Y. was a Ph.D. student at Nanjing University. This work was supported by the Innovation Program for Quantum Science and Technology (Grant No. 2024ZD0300500), the National Natural Science Foundation of China (Grants No. 62272441, 62332009, 12347104), the Dutch Ministry of Economic Affairs and Climate Policy (EZK), as part of the Quantum Delta NL programme (KAT2), NSFC/RGC Joint Research Scheme (Grant no.12461160276) and Natural Science Foundation of Jiangsu Province (No. BK20243060).

\bibliography{ref}
\end{document}